\newcommand\xrowht[2][0]{\addstackgap[.5\dimexpr#2\relax]{\vphantom{#1}}}
\DeclareMathOperator*{\argmax}{arg\,max}
\DeclareMathOperator*{\argmin}{arg\,min}
\newtheorem{theorem}{Proposition}
\newcites{MS}{References}
\begin{document}
\bstctlcite{IEEEexample:BSTcontrol}

\title{COSCO: Container Orchestration using Co-Simulation and Gradient Based Optimization for Fog Computing Environments}

\author{
        Shreshth~Tuli,
        Shivananda~Poojara,
        Satish~N.~Srirama,
        Giuliano~Casale
    and~Nicholas~R.~Jennings
\IEEEcompsocitemizethanks{
\IEEEcompsocthanksitem S. Tuli, G. Casale and N. R. Jennings are with the Department
of Computing, Imperial College London, United Kingdom. E-mails: \{s.tuli20, g.casale, n.jennings\}@imperial.ac.uk.\protect
\IEEEcompsocthanksitem S. Poojara is with the Institute of Computer Science, University of Tartu, Estonia. E-mail: poojara@ut.ee.\protect
\IEEEcompsocthanksitem S.~N.~Srirama is with the School of Computer and Information Sciences, University of Hyderabad, India. E-mail: satish.srirama@uohyd.ac.in.\protect
}
}


\markboth{Accepted in IEEE Transactions on Parallel and Distributed Systems}%
{Tuli \MakeLowercase{\textit{et al.}}: --- }

\IEEEtitleabstractindextext{%
\begin{abstract}
Intelligent task placement and management of tasks in large-scale fog platforms is challenging due to the highly volatile nature of modern workload applications and sensitive user requirements of low energy consumption and response time. Container orchestration platforms have emerged to alleviate this problem with prior art either using heuristics to quickly reach scheduling decisions or AI driven methods like reinforcement learning and evolutionary approaches to adapt to dynamic scenarios. The former often fail to quickly adapt in highly dynamic environments, whereas the latter have run-times that are slow enough to negatively impact response time. Therefore, there is a need for scheduling policies that are both reactive to work efficiently in volatile environments and have low scheduling overheads. To achieve this, we propose a Gradient Based Optimization Strategy using Back-propagation of gradients with respect to Input (GOBI). Further, we leverage the accuracy of predictive digital-twin models and simulation capabilities by developing a Coupled Simulation and Container Orchestration Framework (COSCO). Using this, we create a hybrid simulation driven decision approach, GOBI*, to optimize Quality of Service (QoS) parameters. Co-simulation and the back-propagation approaches allow these methods to adapt quickly in volatile environments.  Experiments conducted using real-world data on fog applications using the GOBI and GOBI* methods, show a significant improvement in terms of energy consumption, response time, Service Level Objective and scheduling time by up to 15, 40, 4, and 82 percent respectively when compared to the state-of-the-art algorithms.
\end{abstract}

\begin{IEEEkeywords}
Fog Computing, Coupled Simulation, Container Orchestration, Back-propagation to input, QoS Optimization. 
\end{IEEEkeywords}}

\maketitle

\IEEEdisplaynontitleabstractindextext

\IEEEpeerreviewmaketitle

\IEEEraisesectionheading{\section{Introduction}\label{sec:introduction}}
\noindent
Fog computing is an emerging paradigm in distributed systems, which encompasses all intermediate devices between the Internet of Things (IoT) layer (geo-distributed sensors and actuators) and the cloud layer (remote cloud platforms). It can reduce latency of compute, network and storage services by placing them closer to end users leading to a multitude of benefits. However, fog environments pose several challenges when integrating with real-world applications. For instance, many applications, especially in healthcare, robotics and smart-cities demand ultra low response times specifically for applications sensitive to Service Level Objectives (SLO)~\cite{gill2019transformative}. Other applications involving energy scavenging edge devices and renewable resources need supreme energy efficiency in task execution~\cite{toor2019energy}. The challenge of reaching low response times and energy consumption is further complicated by modern-day application workloads being highly dynamic~\cite{tuli2020dynamic} and host machines having non-stationary resource capabilities~\cite{wang2020dyverse}.

To provide quick and energy efficient solutions, many prior works focus on developing intelligent policies to schedule compute tasks on fog hosts~\cite{basu2019learn, han2018fog, pham2017cost}. These methods have been dominated by heuristic techniques~\cite{skarlat2017optimized, beloglazov2012optimal, pham2017cost, choudhari2018prioritized}. Such approaches have low scheduling times and work well for general cases, but due to steady-state or stationarity assumptions, they provide poor performance in non-stationary heterogeneous environments with dynamic workloads~\cite{beloglazov2012optimal}. To address this, some prior approaches use more intelligent and adaptive schemes based on evolutionary methods and reinforcement learning. These methods adapt to changing scenarios and offer promising avenues for dynamic optimization~\cite{fox2019learning}. However, they too are unable to efficiently manage volatile fog environments because of their poor modelling accuracies and low scalability~\cite{basu2019learn, tuli2020dynamic}. For accurate and scalable modelling of the fog environment, there have been many works that use deep learning based local search or learning models with neural networks which approximate an objective function such as energy consumption or response time~\cite{tuli2020dynamic, liu2020collaborative, basu2019learn}. As these neural networks approximate objective functions of optimization problem, they are often referred to as ``neural approximators''~\cite{johnson2000accuracy, villarrubia2018artificial}. Specifically, many recent state-of-the-art techniques primarily use optimization methods like genetic algorithms (GA)~\cite{liu2020collaborative} and policy-gradient learning~\cite{tuli2020dynamic} to optimize QoS parameters, thanks to their generality. However, gradient-free methods like GA are slow to converge to optima due to undirected search schemes~\cite{rios2013derivative}. Further, policy-gradient learning takes time to adapt to sudden changes in the environment and shares the same problem of high scheduling overheads. Such high scheduling times limit the extent of the possible improvement of latency and subsequently SLO violations. This is not suitable for highly volatile environments where host and workload characteristics may suddenly and erratically change. Thus, there is a need for an approach which not only can adapt quickly in volatile environments but also have low scheduling overheads to efficiently handle modern workload demands.

To solve this problem, a natural choice is to use directed schemes like A* search or gradient-based optimization strategies. Even though such strategies have been shown to converge much faster than gradient-free approaches, prior work does not use them due to the highly non-linear search spaces in real-world problems which can cause such methods to get stuck in local optima~\cite{bogolubsky2016learning}. Moreover, prior works fail to leverage recent advances like root-mean-square propagation, and momentum and annealed gradient descent with restarts that help prevent the local optima problem~\cite{kingma2014adam,pan2015annealed,loshchilov2016sgdr}. Given this, we believe that by leveraging the strength of neural networks to accurately approximate QoS parameters~\cite{scarselli1998universal}, we can apply gradient-based algorithms in tandem with advances that reduce the likelihood of such methods getting stuck at local optima. Prior work also establishes that neural approximators are able to precisely model the gradients of the objective function with respect to input using back-propagation allowing us to use them in gradient-based methods for quick convergence~\cite{nguyen1999approximation}. When taken together, this suite of approaches should provide a quick and efficient optimization method. In particular, we formulate a gradient-based optimization algorithm (\textbf{GOBI}) which calculates the gradients of neural networks with respect to input for optimizing QoS parameters using advanced gradient-based optimization strategies. We establish by experiments that our approach provides a \textit{faster and more scalable optimization strategy} for fog scheduling compared to the state-of-the-art methods. 

However, simply using gradient-based optimization is not sufficient as data driven neural models can sometimes saturate~\cite{rakitianskaia2015measuring}. This is when feeding more data to the neural model does not improve the performance. Further optimization of QoS in such cases is hard and more intelligent schemes are required. 
Coupled-simulation (also referred as co-simulation or symbiotic simulation) and execution of tasks have been shown to be a promising approach for quickly obtaining an estimate of QoS parameters in the near future~\cite{gomes2018co, rehman2019cloud, onggo2018symbiotic}. Specifically, coupled-simulation allows us to run a simulator in the background with the scheduling algorithms to facilitate decision making. However, prior works use this to aid search methods and not to generate more data to facilitate decision making of an AI model. The latter requires development of a new interface between the scheduler and simulator. Hence, we develop \textbf{COSCO}: \textbf{Co}upled \textbf{S}imulation and \textbf{C}ontainer \textbf{O}rchestration Framework to leverage simulated results to yield better QoS. The COSCO framework is the \textit{first} to allow a single or multi-step simulation of container migration decisions in fog environments. It enables a scheduler to get an estimate of the QoS parameters at the end of a future interval for \textbf{better prediction and subsequently optimum schedules}. Container migration refers to the process of moving an application between different physical or virtual hosts and resuming computation on the target host. This allows us to run the GOBI approach, simulate the schedule (with a predicted workload model) and provide the objective values to another neural approximator which can then better approximate the objective function and lead to improved performance. We call this novel optimization loop \textbf{GOBI*} (Figure~\ref{fig:gobistar}). The \textit{interactive training dynamic} between GOBI and GOBI* aides the latter to converge quickly and adapt to volatile environments (more details in Section~\ref{sec:gobi2}). 

In summary, the key contributions of this paper are:%
\begin{itemize}[leftmargin=*]
    \item We present a novel framework, COSCO, which is the first to allow \textit{coupled simulation and container orchestration} in fog environments.
    \item We propose a gradient based back-propagation approach (GOBI) for \textit{fast and scalable scheduling} and show that it outperforms the state-of-the-art schedulers.
    \item We also propose an extended approach (GOBI*) that leverages simulation results from COSCO running GOBI's decision, to provide \textit{improved predictions and scheduling decisions} with lower scheduling overheads.
\end{itemize}
Validating both GOBI and GOBI* on physical setups with real-world benchmark data shows that, GOBI gives lower performance than GOBI*. However, GOBI is more suitable for resource constrained fog brokers, because of its low computational requirements. GOBI*, on the other hand, is more suitable for setups with critical QoS requirements and powerful fog brokers.

\begin{figure}[]
    \centering
    \includegraphics[width=0.95\columnwidth]{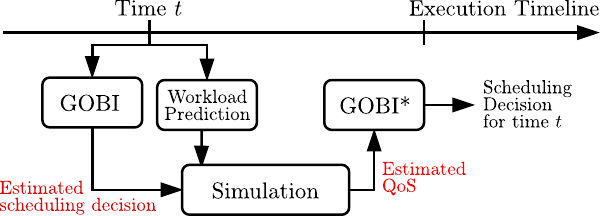}
    \caption{The GOBI* loop.}
    \label{fig:gobistar}
\end{figure}

The rest of the paper is organized as follows. Related work is overviewed in Section~\ref{sec:related-work}. Section~\ref{sec:motivation} provides a motivating example for the problem. Section~\ref{sec:system} describes the system model and formulates the problem specifications. Section~\ref{sec:gobi} details GOBI and Section~\ref{sec:cosco} introduces the COSCO framework. Section~\ref{sec:gobi2} presents the GOBI* approach, integrating GOBI with COSCO's co-simulation feature. A performance evaluation of the proposed methods is shown in Section~\ref{sec:perf_eval}. Finally, Section~\ref{sec:conclusion} concludes.

\begin{table*}[]
    \centering
    \caption{Comparison of Related Works with Different Parameters (\checkmark means that the corresponding feature is present).}
    \resizebox{\textwidth}{!}{
    \begin{tabular}{@{}lccccccccccc@{}}
    \toprule 
    \multirow{2}{*}{Work} & \multicolumn{1}{c}{Edge} & Heterogeneous & Coupled & Stochastic & Adaptive & \multirow{2}{*}{Method} & \multicolumn{5}{c}{Optimization Parameters}\tabularnewline
    \cline{8-12} 
     & Cloud & Environment & Simulation & Workload & QoS &  & Energy & Response Time & SLO Violations & Fairness & Scheduling Time\tabularnewline
    \midrule 
    \cite{beloglazov2012optimal, nashaat2019smart,samanta2019battle} &  &  &  & \checkmark &  & Heuristics & \checkmark & \checkmark & \checkmark &  & \tabularnewline

    \cite{wang2019empowering, han2018fog, wang2020effective, tuli2020ithermofog} & \checkmark & \checkmark &  & \checkmark &  & GA & \checkmark & \checkmark & \checkmark &  & \tabularnewline

    \cite{liu2020pond, krishnasamy2018augmenting} &  & \checkmark &  & \checkmark & \checkmark & MaxWeight &  & \checkmark &  &  & \tabularnewline

    \cite{zhang2018double, basu2019learn, gazori2019saving, tang2018migration} &  & \checkmark &  & & \checkmark & Deep RL & \checkmark & \checkmark &  &  & \checkmark\tabularnewline

    \cite{tuli2020dynamic, ghosal2020deep} & \checkmark & \checkmark &  & \checkmark & \checkmark & Policy Gradient & \checkmark & \checkmark & \checkmark & \checkmark & \tabularnewline

    \textbf{This work} & \checkmark & \checkmark & \checkmark & \checkmark & \checkmark & GOBI/GOBI{*} & \checkmark & \checkmark & \checkmark & \checkmark & \checkmark\tabularnewline
    \bottomrule 
    \end{tabular}}
    \label{tab:related_works}
\end{table*}

\section{Related Work}
\label{sec:related-work}

We now analyze  prior work in more detail (see Table~\ref{tab:related_works} for an overview).


\textbf{Heuristic methods:} Several studies~\cite{beloglazov2012optimal, nashaat2019smart, samanta2019battle} have shown how heuristic based approaches 
perform well in a variety of scenarios for optimization of diverse QoS parameters. Methods like \textit{LR-MMT}  schedules workloads dynamically based on local regression (LR) to predict which hosts might get overloaded in the near future and select containers/tasks which can be migrated in the minimum migration time (MMT). Both LR and MMT heuristics are used for overload detection and task selection, respectively~\cite{beloglazov2012optimal, nashaat2019smart}. Similarly, \textit{MAD-MC} calculates the median-attribute deviation (MAD) of the CPU utilization of the hosts and select containers based on the maximum correlation (MC) with other tasks~\cite{beloglazov2012optimal}. However,  such heuristic based approaches fail to model stochastic environments with dynamic workloads~\cite{tuli2020dynamic}. This is corroborated by our results in Section~\ref{sec:perf_eval}. To overcome this, GOBI and GOBI* are able to adapt to dynamic scenarios by constantly learning the mapping of scheduling decisions with objective values. This makes them both robust to diverse, heterogeneous and dynamic scenarios.

\textbf{Evolutionary models:} Other prior works have shown that evolutionary based methods, and generally gradient-free approaches, perform well in dynamic scenarios~\cite{wang2019empowering, han2018fog, wang2020effective, tuli2020ithermofog}. Evolutionary approaches like genetic algorithms (GA) lie in the domain of gradient-free optimization methods. The \textit{GA} method schedules workloads using a neural model to approximate the objective value (as in GOBI) and a genetic-algorithm to reach the optimal decision~\cite{han2018fog}.
However, gradient-free methods have a number of drawbacks. They take much longer to converge \cite{bogolubsky2016learning} and are not as scalable~\cite{rios2013derivative} as gradient-based methods. Moreover, non-local jumps can lead to significant change in the scheduling decision which can lead to a high number of migrations (as seen in Figures~\ref{fig:f_migration_time}-\ref{fig:s_migration_time}). 

\textbf{MaxWeight based schedulers:} Over the years, MaxWeight scheduling has gained popularity for its theoretical guarantees and ability to reduce resource contention~\cite{liu2020pond, krishnasamy2018augmenting}. However, MaxWeight policies can exhibit poor delay performance, instability in dynamic workloads and spatial inefficiency~\cite{bae2019beyond, van2009instability, van2013inefficiency}. We use the pessimistic-optimistic online dispatch approach, \textit{POND} by Liu et al.~\cite{liu2020pond} which is a variant of the MaxWeight approach~\cite{markakis2013max}. POND uses constrained online dispatch with unknown arrival and reward distributions. It uses virtual queues for each host to track violation counts and an Upper-Confidence Bound (UCB) policy~\cite{auer2002finite} to update the expected rewards of each allocation from the signal $\mathcal{O}(P_t)$. The final decision is made using the MaxWeight approach with the weights as the expected reward values. To minimize objective value in our setup, we provide the objective score as negative reward. Due to the inability of MaxWeight approaches to adapt to volatile scenarios, their wait times are high.  

\textbf{Reinforcement Learning models:} Recently, reinforcement learning based methods 
have shown themselves to be robust and versatile to diverse workload characteristics and complex edge setups~\cite{tuli2020dynamic, tang2018migration, basu2019learn, gazori2019saving}. Such methods use a Markovian assumption of state which is the scheduling decision at each interval. Based on new observations of reward signal (negative objective score in our setting) they explore or exploit their knowledge of the state-space to converge to an optimal decision. A recent method, \textit{DQLCM}, models the container migration problem as a multi-dimensional Markov Decision Process (MDP) and uses a deep-reinforcement learning strategy, namely deep Q-Learning to schedule workloads in a heterogeneous fog computing environment~\cite{tang2018migration}. The state-of-the-art method, \textit{A3C}, schedules workloads using a policy gradient based reinforcement learning strategy which tries to optimize an actor-critic pair of agents~\cite{tuli2020dynamic}. This approach uses Residual Recurrent Neural Networks to predict the expected reward for each action \emph{i.e.}, scheduling decision and tries to optimize the cumulative reward signal.
However, such methods are still slow to adapt to real-world application scenarios as discussed in Section~\ref{sec:perf_eval}. This leads to higher wait times and subsequently high response times and SLO violations. Moreover, these methods do not scale well~\cite{nandi2001artificial}. 

Finally, coupled or symbiotic simulation and model based control have long been used in the modelling and optimization of distributed systems~\cite{tuli2019fogbus, bosmans2019testing, onggo2018symbiotic}. Many prior works have used hybrid simulation models to optimize decision making in dynamic systems. To achieve this, they monitor, analyze, plan and execute decisions using previous knowledge-base corpora (MAPE-k)~\cite{gill2019transformative}. However, such works use this to facilitate search methods and not to generate additional data to aid the decision making of an AI model. COSCO is developed to leverage a seamless interface between the orchestration framework and simulation engine to have a interactive dynamic between AI models to optimize QoS.

For our experiments, we use LR-MMT, MAD-MC, POND, GA, DQLCM and A3C as baselines to compare against the proposed GOBI and GOBI* models. We try to cover the best approach(es) in each category by careful literature review. The different categories have been selected for their complementary benefits. Heuristic based approaches are fast but not as accurate as other methods. Learning methods are adaptive but slow to adapt or output a scheduling decision. 

\section{Motivating Example}
\label{sec:motivation}

\begin{figure}[]
    \centering
    \includegraphics[width=0.85\columnwidth]{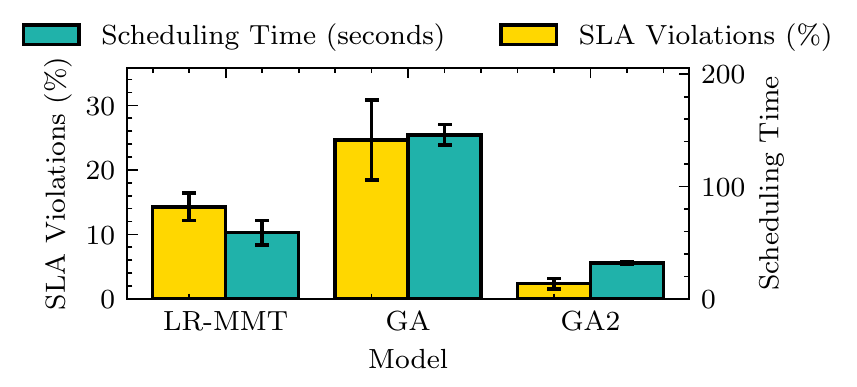}
    \caption{Motivating example}
    \label{fig:motivation}
\end{figure}

As demonstrated by many prior works~\cite{tuli2020dynamic, beloglazov2012optimal, han2018fog}, for fog datacenters, the scheduling time increases exponentially with the number of fog devices. This can have a significant impact on response time and, correspondingly, on Service Level Objectives (SLOs) corresponding to the task response time. To test this, we perform a series of experiments on the iFogSim simulator~\cite{gupta2017ifogsim}. We test the popular heuristic based approach LR-MMT~\cite{beloglazov2012optimal} and the evolutionary approach using genetic algorithm (GA)~\cite{han2018fog} with 50 simulated host machines (details in Section~\ref{sec:setup}). The former represents the broad range of heuristic based approaches that can quickly generate a scheduling decision but lack the ability to adapt. The latter have the ability to quickly adapt but have high scheduling overheads. As in~\cite{han2018fog}, we provide the GA the utilization metrics of all hosts and tasks and aim to minimize the average response time.

We assume a central scheduler which periodically allocates new tasks and migrates active tasks if required to reduce SLO violations. We assume that the container allocation time is negligible and the total response time of a fog task is the sum of the scheduling and execution times. We consider volatile workloads that are generated from traces of real-world applications running in a fog environment~\cite{tuli2020dynamic} (using the Bitbrain dataset, details in Section~\ref{sec:perf_eval}). For the approaches considered, the scheduling time can range from 100 to 150 seconds with SLO violation rate of up to $24\%$. However, if we could bring down the scheduling times to $10$ seconds with the same decisions as GA (referred to as GA2), the SLO violation rate goes down to $\sim \! \! 2\%$. This shows how scheduling time is a key factor that limits the extent to which SLO can be optimized (Figure~\ref{fig:motivation}), thus highlighting the potential for an approach which can respond quickly in volatile scenarios and has low scheduling overheads, leveraging the fast and scalable execution times of AI methods.



\section{System Model and Problem Formulation} 
\label{sec:system}


\subsection{System Model}
We consider a standard distributed and heterogeneous fog computing environment as shown in Figure~\ref{fig:system}. We assume a single fog datacenter with geographically distributed edge and cloud layers as computational nodes. We consider network latency effects for interactions between compute devices in different layers (edge and cloud) and between compute devices and fog broker. We ignore communication latencies among devices in the same layer.. 
Tasks are container instances with inputs being generated by sensors and other IoT devices and results received by actuators which physically manifest the intended actions. These devices constitute the IoT layer and send and receive all data from the fog gateway devices. All management of tasks and hosts is done by a Fog Broker in the management layer. This includes task scheduling, data management, resource monitoring and container orchestration. The work described in this paper is concerned with improving the scheduler in the Fog Broker by providing an interface between the container orchestration and resource monitoring services with a simulator. Our work proposes discrete-time controllers, that are commonly adopted in the literature~\cite{tuli2019fogbus, basu2019learn, tuli2020dynamic}. The communication between end-users and the fog broker is facilitated by gateway devices.
The compute nodes in the fog resource layer, which we refer to as ``hosts'', can have diverse computational capabilities. The hosts at the edge of the network are resource-constrained and have low communication latency with the broker and gateway devices. On the other hand, cloud hosts are several hops from the users and have much higher communication latency and are computationally more powerful. We assume that there are a fixed number of host machines in the fog resource layer and denote them as $H = \{h_0, h_1, \ldots, h_{N-1}\}$. We denote the collection of time-series utilization metrics, which include CPU, RAM, Disk and Network Bandwidth usage of host $h_i$ as $\mathcal{U}(h_i^t)$. The collection of maximum capacities of CPU, RAM, Disk and Network Bandwidth with the average communication latency of host $h_i$ is denoted as $\mathcal{C}(h_i)$. 

\begin{figure}[]
    \centering
    \includegraphics[width=0.95\columnwidth]{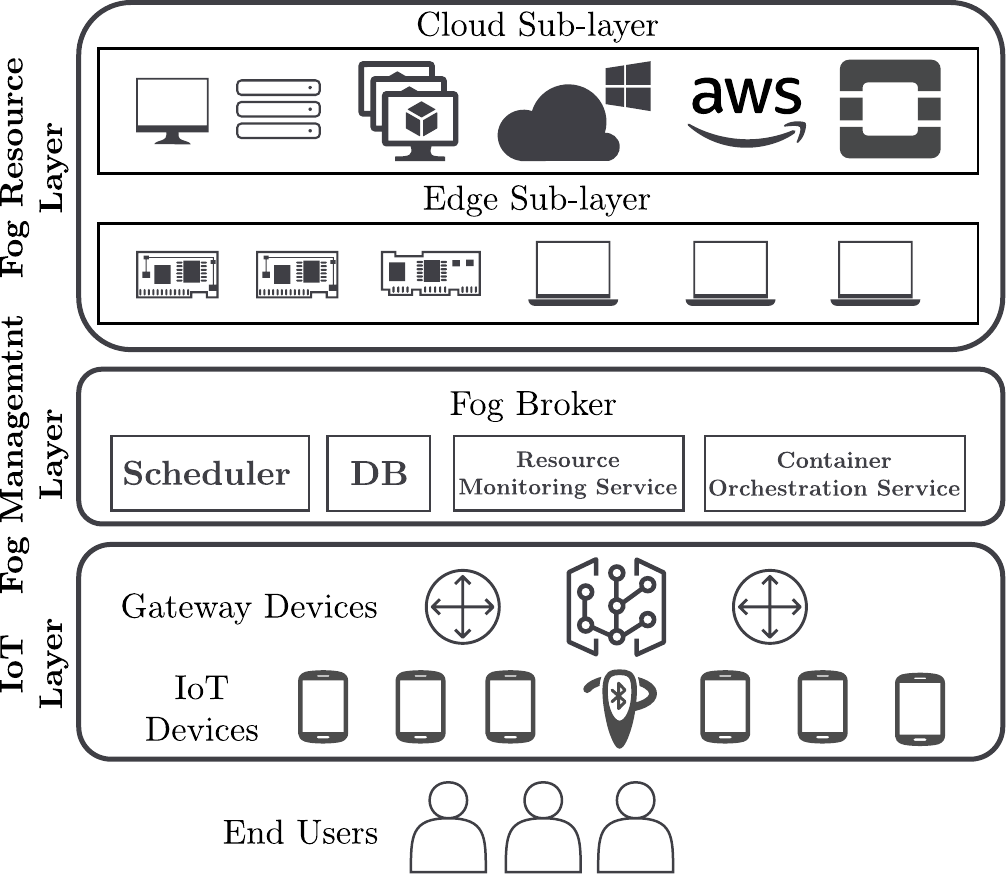}
    \caption{System model}
    \label{fig:system}
\end{figure}

\subsection{Workload Model}
We now present our workload model, as summarized in Figure~\ref{fig:workload}. We divide a bounded timeline into equal sized scheduling intervals with equal duration $\Delta$. The $t$-th interval is denoted by $I_t$ and starts at $s(I_t)$, so that $s(I_0) = 0$ and $s(I_t) = s(I_{t-1}) + \Delta\ \forall t>0$. In interval $I_{t-1}$, $N_t$ new tasks are created by the IoT devices. The gateway devices then send batches of these new tasks $N_t$ with their SLO requirements to the Fog Broker. The SLO violations are measured corresponding to the response time metrics of task executions as a fraction of them that exceeds the stipulated deadlines. The set of active tasks in the interval $I_t$ is denoted as $A_t = \{a_0^t, a_1^t, \ldots, a_{|A_t|}^t\}$ and consists of $|A_t|$ tasks. Here $a_j^t$ represents an identifier for the $j$-th task in $A_t$. The Fog Broker schedules the new tasks to the compute nodes and decides which of the active tasks need to be migrated. At the end of interval $I_{t-1}$, the set of completed tasks is denoted as $L_t$, hence the tasks that carry forward to the next interval is the set $A_{t-1} \setminus L_t$. Thus, the broker takes a decision $\mathcal{D}^t = \mathcal{D}(Y_t)$, where $Y_t$ is the union of the new ($N_t$), active ($A_{t-1} \setminus L_t$) and waiting tasks ($W_{t-1}$). This decision is a set of allocations and migrations in the form of ordered-pairs of tasks and hosts (assuming no locality constraints). At the first scheduling interval, as there are no active or waiting tasks, the model only takes allocation decision for new tasks ($N_0$). Here, $\mathcal{D}$ denotes the scheduler such that $\mathcal{D}: Y_t \mapsto Y_t \times H$. Only if the allocation is feasible, \emph{i.e.}, depending on whether the target host can accommodate the scheduled task or not, is the migration/allocation decision executed. Initially, the waiting queue and the set of active and leaving actions are empty. The executed decision is denoted by $\hat{\mathcal{D}}$ and satisfies the following properties
\begin{gather}
\label{eq:dhat}
    \hat{\mathcal{D}}(A_t) \subseteq \mathcal{D}(Y_t), \text{where}\\
    \forall\ (a_{j}^t, h_{i}) \in \hat{\mathcal{D}}(A_t),\ \mathcal{U}(a_j^t) + \mathcal{U}(h_i^t) \leq \mathcal{C}(h_i).
\end{gather}
We use the notation $a_j^t \in \hat{\mathcal{D}}(A_t)$ to mean that this task was allocated/migrated. The set of new and waiting tasks $\hat{N}_t \cup \hat{W}_{t-1} \subseteq \hat{\mathcal{D}}(A_t)$, where $\hat{N}_t \subseteq N_t$ and $\hat{W}_{t-1} \subseteq W_{t-1}$, denotes those tasks which could be allocated successfully. The remaining new tasks, \emph{i.e.} $N_t \setminus \hat{N}_t$ are added to get the new wait queue $W_t$. Similarly, active tasks are denoted as $\hat{A}_{t-1} \subseteq A_{t-1}$. Hence, for every interval $I_t$,
\begin{gather}
    \label{eq:active}
    A_t \gets \hat{N}_t \cup \hat{W}_{t-1} \cup A_{t-1} \setminus L_{t}\\
    W_t \gets (W_{t-1} \setminus \hat{W}_{t-1}) \cup (N_t \setminus \hat{N}_t), \text{where}\\
    \label{eq:interval}
    W_{-1} = A_{-1} = L_{-1} = \emptyset.
\end{gather}
We also denote the utilization metrics of an active task $a_j^t \in A_t$ as $\mathcal{U}(a_j^t)$ for the interval $I_t$. We denote a simulator as $\mathcal{S}: \hat{\mathcal{D}}(A_t) \times \{\mathcal{U}(a_j^t) | \forall a_j^t \in A_t\} \mapsto \{\mathcal{U}(h_i^t) | \forall h_i \in H\} \times P$, which takes scheduling decision and container utilization characteristics to give host characteristics and values of QoS parameters. Here $P_t \in P$ is a set of QoS parameters such as energy consumption, response times, SLO violations, etc. Similarly, execution on a physical fog framework is denoted by $\mathcal{F}$. Hence, execution of interval $I_t$ on a physical setup is denoted as
\begin{equation}
\label{eq:framework_step}
    (\{\mathcal{U}(h_i^t) | \forall h_i \in H\}, P_t) \equiv \mathcal{F}(\hat{\mathcal{D}}(A_t),  \{\mathcal{U}(a_j^t) | \forall a_j^t \in A_t\}).
\end{equation}

\begin{figure}[]
    \centering
    \includegraphics[width=0.95\columnwidth]{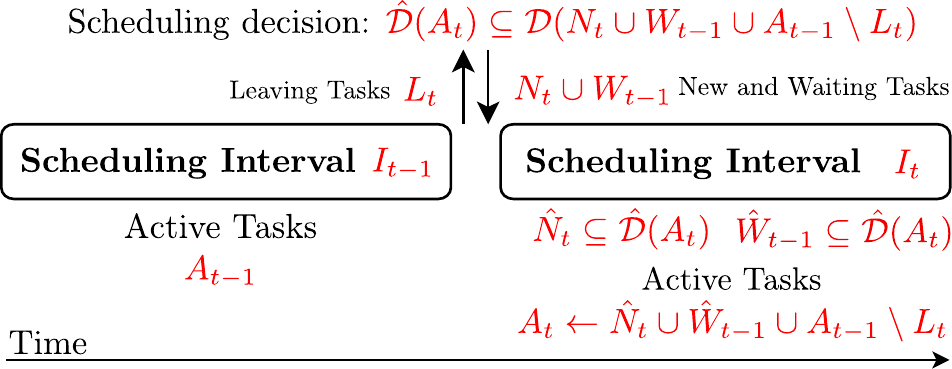}
    \caption{Dynamic Task Workload Model}
    \label{fig:workload}
\end{figure}

\begin{table}[]
    \centering
    \caption{Symbol Table}
    \resizebox{\columnwidth}{!}{
    \begin{tabular}{@{}ll@{}}
        \toprule
        Symbol & Meaning \\
        \midrule
        $I_t$ & $t^{th}$ scheduling interval \\ 
        $A_t$ & Active tasks in $I_t$ \\ 
        $W_{t-1}$ & Waiting tasks at the start of $I_t$ \\ 
        $L_t$ & Tasks leaving at the end of $I_t$ \\ 
        $N_t$ & New tasks received at the start of $I_t$ \\ 
        $H$ & Set of hosts in the Resource Layer \\ 
        $h_i$ & $i^{th}$ host in an enumeration of $H$ \\ 
        $a_j^t$ & $j^{th}$ task in an enumeration of $A_t$ \\ 
        $\mathcal{U}(h_i^t)$ & Utilization metrics of host $h_i$ in $I_t$\\ 
        $\mathcal{U}(a_j^t)$ & Utilization metrics of task $a_j^t$ in $I_t$\\ 
        $Y_t = N_t \cup W_{t-1} \cup A_{t-1} \setminus L_t$ & Scheduler input at the start of $I_t$\\ 
        $\mathcal{D}(Y_t)$ or $\mathcal{D}$ & Scheduling decision at start of $I_t$\\ 
        $\hat{\mathcal{D}}$ & Feasible sub-set of scheduling decision $\mathcal{D}$\\  \xrowht{4pt}
        $\bar{\mathcal{D}}$ & Scheduling decision of GOBI in GOBI* loop\\ 
        $\mathcal{O}(P_t)$ & Objective value at the end of $I_t$\\ 
        $\mathcal{S}$ & Execution of an interval on simulator \\ 
        $\mathcal{F}$ & Execution of an interval on physical setup\\ \bottomrule
    \end{tabular}}
    \label{tab:symbols}
\end{table}

\subsection{Problem Formulation}

After execution of tasks $A_t$ in interval $I_t$, the objective value to minimize is denoted as $\mathcal{O}(P_t)$. $\mathcal{O}(P_t)$ could be a scalar value which combines multiple QoS parameters. To find the optimum schedule, we need to minimize the objective function $\mathcal{O}(P_t)$ over the entire duration of execution. Thus, we need to find the appropriate and feasible decision function $\mathcal{D}$ such that $\sum_t \mathcal{O}(P_t)$ is minimized. This is subject to the constraints that at each scheduling interval the new tasks $N_t$, waiting tasks $W_{t-1}$ and active tasks from the previous interval $A_{t-1} \setminus L_t$ are allocated using this decision function. The problem can then be concisely formulated as:

\begin{equation}
\label{eq:problem}
\begin{aligned}
& \underset{\mathcal{D}}{\text{minimize}}
& & \sum_{t=0}^T \mathcal{O}(P_t) \\
& \text{subject to}
& & \forall\ t, Eqs. \eqref{eq:dhat}-\eqref{eq:framework_step}.
\end{aligned}
\end{equation}


\section{The GOBI Scheduler}
\label{sec:gobi}

As discussed in Section~\ref{sec:introduction}, we now present the approach based on Gradient Based Optimization using back-propagation to Input (GOBI). These constitute the \textit{Scheduler} module of the Fog Broker described in Figure~\ref{fig:system}. Here, we consider that we optimize an objective function $\mathcal{O}(P_t)$ at every interval $I_t$ via taking an optimal action in the form of scheduling decision $\mathcal{D}$. This $\mathcal{D}$ is then used by the framework/simulator to execute tasks as described in optimization program~\eqref{eq:problem}. 

\subsection{Objective Function}
We will now present how our optimization scheme can be used for taking appropriate scheduling decisions in a fog environment. To optimize the QoS parameters, we consider an objective function that focuses on energy consumption and response time, two of the most crucial metrics for fog environments~\cite{mahmoud2018towards, mutlag2019enabling}. For interval $I_t$,

\begin{equation}
    \label{eq:objective_function}
    \mathcal{O}(P_t) = \alpha \cdot AEC_t + \beta \cdot ART_t.
\end{equation}
Here $AEC$ and $ART$ ($\in P_t$) are defined as follows.
\begin{enumerate}[leftmargin=*]
    \item \textit{Average Energy Consumption} (AEC) is defined for any interval as the energy consumption of the infrastructure (which includes all edge and cloud hosts) normalized by the maximum power of the hosts, \emph{i.e.},
    \begin{equation}
        \begin{aligned}
        \label{eq:aec}
        AEC_t = \frac{\sum_{h_i \in H} \int_{t=s(I_t)}^{s(I_{t+1})} Power_{h_i}(t) dt}{|A_t| \sum_{h_i\in H}  Power^{max}_{h_i} \times (t_{i+1} - t_i)},
        \end{aligned}
    \end{equation}
    where $Power_{h_i}(t)$ is the power function of host $h_i$ at instant $t$, and $Power_{h_1}^{max}$ is maximum possible power of $h_i$.
    \item \textit{Average Response Time} (ART)\footnote{Both AEC and ART are unit-less metrics and lie between $0$ and $1$} is defined for an interval $I_t$ as the average response time for all leaving tasks ($L_t$), normalized by maximum response time until the current interval, as shown below
    \begin{equation}
        \begin{aligned}
        \label{eq:art}
        ART_t = \frac{\sum_{l_j^t \in L_t} Response\ Time(l_j^t)}{|L_t| \max_{s\leq t} \max_{l_j^{s} \in L_{s}} Response\ Time(l_j^{s})}.
        \end{aligned}
    \end{equation}
\end{enumerate}
\subsection{Input Parameters}

In concrete implementations, we can assume a finite maximum number of active tasks, with the upper bound denoted as $M$. Thus at any interval, $|A_t| \leq M$. Moreover, we consider the utilization metrics of Instructions per second (IPS), RAM, Disk and Bandwidth consumption which form a feature vector of size $F$. Then, we express task utilization metrics $\{\mathcal{U}(a_j^{t-1}) | \forall j, a_j^{t-1} \in A_{t-1}\}$ as an $M\times F$ matrix denoted as $\phi(A_{t-1})$, where the first $A_{t-1}$ rows are the feature vectors of tasks in $A_{t-1}$ in the order of increasing creation intervals (breaking ties uniformly at random), and the rest of the rows are ${0}$. We also form feature vectors of hosts with IPS, RAM, Disk and Bandwidth utilization with capacities and communication latency of each host, each of size $F'$. Thus, at each interval $I_t$ for $|H|$ hosts, we form a $|H| \times F'$ matrix $\phi(H_{t-1})$ using host utilization metrics of interval $I_{t-1}$. 

Finally, as we have $N_t, W_{t-1}, A_{t-1}$ and $L_t$ at the start of $I_t$, the decision matrix $\phi(\mathcal{D})$ is an $M \times |H|$ matrix with the first $|N_t \cup W_{t-1} \cup A_{t-1} \cup L_t|$ rows being one-hot vector of allocation of tasks to one of the $H$ hosts. The remaining rows being ${0}$.

\subsection{Model Training}

We now describe how a neural model can be trained to approximate $\mathcal{O}(P_t)$ using the input parameters $[\phi(A_{t-1}), \phi(H_{t-1}), \phi(\mathcal{D})]$. Consider a continuous function $f(x;\theta)$ as a neural approximator of $\mathcal{O}(P_t)$ with the $\theta$ vector denoting the underpinning neural network parameters and $x$ as a continuous or discrete variable with a bounded domain. Here, $x$ is the collection of utilization metrics of tasks and hosts with a scheduling decision. The parameters $\theta$ are learnt using the dataset $\Lambda = \{ [\phi(A_{t-1}), \phi(H_{t-1}), \phi(\mathcal{D})], \mathcal{O}(P_t) \}_t$ such that a given loss function $\mathcal{L}$ is minimized for this dataset. We form this dataset by running a random scheduler and saving the traces generated (more details about the setup in Section~\ref{sec:perf_eval}). The loss $\mathcal{L}$ quantifies the dissimilarity between the predicted output and the ground truth. We use Mean Square Error (MSE) as the loss function as done in prior work~\cite{tuli2020dynamic}. Hence,
\[ \mathcal{L}(f(x;\theta), y) = \tfrac{1}{T} \textstyle \sum_{t=0}^T (y - f(x;\theta))^2, \text{ where} (x,y) \in \Lambda.\]

Thus, for datapoints $(x,y) \in \Lambda$, where $y$ is the value of $\mathcal{O}(P_t)$ for the corresponding $x$, we have $\theta = \argmin_{\hat{\theta}}{\sum_{(x,y)\in\Lambda} [ \mathcal{L}(f(x;\hat{\theta}), y) ]}$. To do this, we calculate the gradient of the loss function with respect to $\theta$ as $\nabla_\theta \mathcal{L}$ and use back-propagation to learn the network parameters. By the universal approximation theorem~\cite{scarselli1998universal}, we can assume that we already have parameters $\theta$ after network training, such that this neural network approximates a generic function to an arbitrary degree of precision (considering a sufficiently large parameter set $\theta$). A key part of this ability of neural networks rests on randomly initializing a parameterized function and changing the large number of parameters based on the loss function.

Now, with this, we need to find the appropriate decision matrix $\phi(\mathcal{D})$, such that $f(x;\theta)$ is minimized. We formulate the resulting optimization problem as
\begin{equation}
\begin{aligned}
& \underset{\phi(\mathcal{D})}{\text{minimize}}
& & f(x;\theta) \\
& \text{subject to}
& & \text{each element of }\phi(\mathcal{D}) \text{ is bounded}\\
&&& \forall\ t, Eqs. \eqref{eq:dhat}-\eqref{eq:framework_step}.
\end{aligned}
\end{equation}

This is a reformulation of the program~\eqref{eq:problem} using the neural approximator of $\mathcal{O}(P_t)$ as the objective function. Note that, because there is a bounded space of inputs, there must be an optimal solution, \emph{i.e.}, $\exists\ \hat{\phi}(\mathcal{D})$ such that 

\vspace{4pt}
\begin{gather*}
f([\phi(A_{t-1}), \phi(H_{t-1}), \hat{\phi}(\mathcal{D})];\theta) \leq \\
\null \qquad \qquad f([\phi(A_{t-1}), \phi(H_{t-1}), \phi(\mathcal{D})];\theta),\ \forall \text{ feasible } \phi(\mathcal{D}).
\end{gather*}
\vspace{4pt}

We solve this optimization problem using gradient-based methods, for which we need $\nabla_x f(x;\theta)$. This is uncommon for neural network training since prior work normally modifies the weights using input and ground truth values. Thus the gradients can be calculated without having to update the weights. Now, for a typical feed-forward artificial neural network, $f(x;\theta)$ is a composition of linear layers and non-linear activation functions. Just like the back-propagation approach, we can find gradients with respect to input $x$. For simplicity, we consider non-affine functions like $\textsf{tanh()}$ and $\textsf{softplus()}$ as activations in our models, with no effect to the generality of the neural approximator. For instance, gradients for a single linear layer with $\textsf{tanh()}$ non-linearity is given in Proposition~\ref{backpropinput}\footnote{Derivations for other activation functions with a single linear layer are given in Appendix~B.}.

\begin{theorem}[Finding gradients with respect to input]
\label{backpropinput}
For a linear layer with \textsf{tanh}() non-linearity that is defined as follows:
\[ f(x;W,b) = \textsf{tanh}(W\cdot x + b), \]
the derivative of this layer with respect to $x$ is given as:
\[ \nabla_x f(x;W,b) = W^T \times (1 - \textsf{tanh}^2(W\cdot x + b)).\]
\end{theorem}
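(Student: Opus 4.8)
The plan is to view the layer as the composition of an affine map followed by an elementwise nonlinearity, and to apply the multivariate chain rule. Write $z = z(x) = W\cdot x + b$ for the pre-activation vector, so that $f(x;W,b) = \textsf{tanh}(z)$ with $\textsf{tanh}$ applied coordinatewise. I would first fix the layout convention (denominator layout, so that $\nabla_x f$ is the transpose of the Jacobian), state it explicitly, and then record the one-dimensional identity $\frac{d}{du}\textsf{tanh}(u) = 1 - \textsf{tanh}^2(u)$, which follows from $\textsf{tanh} = \sinh/\cosh$, the quotient rule, and $\cosh^2 - \sinh^2 = 1$. Since $\textsf{tanh}$ acts coordinatewise, its Jacobian evaluated at $z$ is the diagonal matrix $\mathrm{diag}\!\left(1 - \textsf{tanh}^2(z)\right)$.

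Next I would compute the Jacobian of the affine part: from $z_i = \sum_k W_{ik} x_k + b_i$ we get $\partial z_i/\partial x_j = W_{ij}$, so the Jacobian of $z$ with respect to $x$ is exactly $W$. Composing the two Jacobians gives the Jacobian of $f$ with respect to $x$ as $\mathrm{diag}\!\left(1 - \textsf{tanh}^2(W\cdot x + b)\right) W$. Transposing to the gradient convention yields $\nabla_x f(x;W,b) = W^T\, \mathrm{diag}\!\left(1 - \textsf{tanh}^2(W\cdot x + b)\right)$, which is precisely the claimed expression $W^T \times \left(1 - \textsf{tanh}^2(W\cdot x + b)\right)$ once the product with the diagonal matrix is read as the column-wise (Hadamard) scaling of $W^T$ by the vector $1 - \textsf{tanh}^2(W\cdot x + b)$.

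There is no real analytic obstacle here; the only care needed is bookkeeping — keeping the elementwise (diagonal) structure of the $\textsf{tanh}$ derivative cleanly separated from the genuine matrix multiplication by $W$, and being consistent about the layout so that the transpose lands on $W$ rather than being silently dropped. Stating the convention up front removes the only source of ambiguity in the right-hand side. The same three steps — scalar derivative of the activation, Jacobian of the affine map, chain rule — immediately give the analogous formulas for \textsf{softplus} and the other activations, which is presumably what Appendix~B records.
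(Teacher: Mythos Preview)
Your proposal is correct and follows essentially the same route as the paper: apply the chain rule to the composition of the affine map with the elementwise $\textsf{tanh}$, using the scalar identity $\frac{d}{du}\textsf{tanh}(u) = 1 - \textsf{tanh}^2(u)$. The paper's own proof is the same three-line chain-rule computation, only less explicit about the layout convention and the diagonal structure that you spell out.
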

\begin{proof}
It is well known that $\nabla_x \textsf{tanh}(x) = 1- \textsf{tanh}^2(x)$, therefore,
\begin{align*}
    \nabla_x f(x;W,b) &= \nabla_x \textsf{tanh}(W\cdot x + b)\\
    &= \nabla_{W\cdot x + b} \textsf{tanh}(W\cdot x + b) \times \nabla_x (W\cdot x + b)\\
    &= (1 - \textsf{tanh}^2(W\cdot x + b)) \times \nabla_x (W\cdot x + b)\\
    &= W^T \times (1 - \textsf{tanh}^2(W\cdot x + b)) \qedhere
\end{align*}
\end{proof}

Using the chain-rule, we can find the derivative of an arbitrary neural approximator $f$ as $f(x;\theta)$ is a composition of multiple such layers. Once we have the gradients, we can initialize an arbitrary input $x$ and use gradient descent to minimize $f(x;\theta)$ as
\begin{equation}
    \label{eq:update}
    x_{n+1} \gets x_n - \gamma \cdot \nabla_x f(x_{n};\theta),
\end{equation}
where $\gamma$ is the learning rate and $n$ is the iteration count. We apply Eq.~\eqref{eq:update} until convergence. The calculation of $\nabla_x f(x_{n};\theta)$ in the above equation is carried out in a similar fashion as back-propagation for model parameters.
We can use momentum based methods like Adam~\cite{kingma2014adam} or annealed schedulers like cosine annealing with warm restarts~\cite{loshchilov2016sgdr} to prevent the optimization loop getting stuck in local optima.
As we show in our experiments, the combination of gradient calculation with respect to inputs and advances like momentum and annealing allow \emph{faster} optimization \emph{avoiding} local minima, giving a more robust and efficient approach.

\begin{algorithm}[t]
    \begin{algorithmic}[1]
    \Require
    \Statex Pre-trained function approximator $f(x;\theta)$
    \Statex Dataset used for training $\Lambda$; Convergence threshold $\epsilon$
    \Statex Iteration limit $\sigma$; Learning rate $\gamma$
    \Statex Initial random decision $\mathcal{D}$
    \Procedure{Minimize}{$\mathcal{D}$, $f$, $z$}
        \State Initialize decision matrix $\phi(\mathcal{D})$; $i = 0$
        \State \textbf{do}
        \State \hspace{\algorithmicindent} $x \gets [z, \phi(\mathcal{D})]$ \Comment{Concatenation}
        \State \hspace{\algorithmicindent} $\delta \gets \nabla_{\phi(\mathcal{D})}f(x; \theta)$ \Comment{Partial gradient}
        \State \hspace{\algorithmicindent} $\phi(\mathcal{D}) \gets \phi(\mathcal{D}) - \gamma \cdot \delta$ \Comment{Decision update}
        \State \hspace{\algorithmicindent} $i \gets i + 1$
        \State \textbf{while} $|\delta| > \epsilon$ and $i \leq \sigma$
        \State Convert matrix $\phi(\mathcal{D})$ to scheduling decision $\mathcal{D}^*$
        \State \textbf{return} $\mathcal{D}^*$
    \EndProcedure
    \Procedure{GOBI}{scheduling interval $I_t$}
        \State \textbf{if} (t == 0)
        \State \hspace{\algorithmicindent} Initialize random decision $\mathcal{D}$
        \State \textbf{else}
        \State \hspace{\algorithmicindent} $\mathcal{D} \gets \mathcal{D}^*$ \Comment{Output for the previous interval}
        \State Get $\phi(A_{t-1}), \phi(H_{t-1})$
        \State $\mathcal{D}^* \gets \textsc{Minimize}(\mathcal{D}, f, [\phi(A_{t-1}), \phi(H_{t-1})])$
        \State Fine-tune $f$ with loss $=$
        \State \ \ \ $MSE(\mathcal{O}(P_{t-1}), f([\phi(A_{t-2}), \phi(H_{t-2}), \phi(\mathcal{D}^{t-1})] ; \theta))$
        \State \textbf{return} $\mathcal{D}^*$
    \EndProcedure
    \end{algorithmic}
\caption{The GOBI scheduler}
\label{alg:gobi}
\end{algorithm}

\subsection{Scheduling}

After training the model $f$, at start of each interval $I_t$, we optimize $\phi(\mathcal{D})$ by the following rule ($\gamma$ is the learning rate) until the absolute value of the gradient is more than the convergence threshold $\epsilon$, \emph{i.e.},

\begin{equation}
\label{eq:gobi_teration}
    \phi(\mathcal{D})_{n+1} \gets \phi(\mathcal{D})_{n} - \gamma \cdot \nabla_{\phi(\mathcal{D})} f(x_n; \theta).
\end{equation}

In dynamic scenarios, as the approximated function for the objective $\mathcal{O}(P_{t})$ changes with time we continuously fine-tune the neural approximator as per the loss function $MSE(\mathcal{O}(P_{t-1}), f([\phi(A_{t-2}), \phi(H_{t-2}), \phi(\mathcal{D}^{t-1})] ; \theta))$ (line 20 in Algorithm~\ref{alg:gobi}). Equation~\ref{eq:gobi_teration} then leads to a decision matrix with a lower objective value (line 6 in Algorithm~\ref{alg:gobi}). Hence, \textit{gradient based iteration with continuous fine-tuning of the neural approximator allows GOBI to adapt quickly in dynamic scenarios.} When the above equation converges to $\phi(\mathcal{D}^*)$, the rows represent the likelihood of allocation to each host. We take the $\argmax$ over each row to determine the host to which each task should be allocated as $\mathcal{D}^*$. This then becomes the final scheduling decision of the GOBI approach. The returned scheduling decision is then run on a simulated or physical platform (as in Eq.~\eqref{eq:framework_step}) and tasks sets are updated as per Eqs.~\eqref{eq:active}-\eqref{eq:interval}. To find the executed decision $\hat{\mathcal{D}}$, we sort all tasks in $\mathcal{D}^*$ in descending order of their wait times (breaking ties uniformly at random) and then try to allocate them to the intended host in $\mathcal{D}^*$. If the allocation is not possible due to utilization constraints, we ignore such migrations and do not add them to $\hat{\mathcal{D}}$. At run-time, we fine tune the trained neural model $f$, by newly generated data for the model to adapt to new settings.


\section{The COSCO Framework}
\label{sec:cosco}


Motivated from prior coupled-simulators~\cite{bosmans2019testing}, we now present the COSCO framework as introduced in Section~\ref{sec:introduction}. A basic architecture is shown in Figure~\ref{fig:arch}. Implementation details of the COSCO framework are given in Appendix~A. The hosts might correspond to simulated hosts or physical compute nodes with Instructions per second (IPS), RAM, Disk and Bandwidth capacities with power consumption models with CPU utilization. The workloads are either time-series models of IPS, RAM, Disk, Bandwidth requirements in the case of simulation or are actual application programs in the case of physical experiments. At the beginning of each scheduling interval, $N_t$ workloads are created. The framework also maintains a waiting queue for workloads that cannot be allocated in the previous interval ($W_{t-1}$).

\begin{figure}[]
    \centering
    \includegraphics[width=0.7\columnwidth]{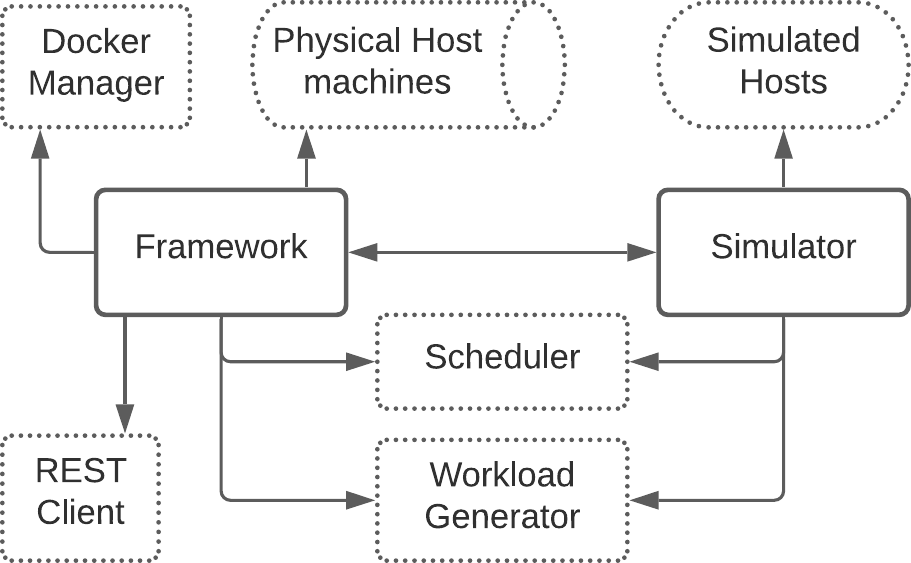}
    \caption{COSCO architecture}
    \label{fig:arch}
\end{figure}


\subsection{Simulator}
\label{sec:simulator}
The \textit{Simulator} uses a \textit{Scheduler} (GOBI for instance), which uses the utilization metrics and QoS parameters to take an allocation decision for new workloads or migration decisions for active workloads. At the start of a scheduling interval $I_t$, the simulator destroys workloads completed in $I_{t-1}$, denoted as $L_t$. It also gets workloads $N_t \cup W_{t-1}$. The \textit{Scheduler} decides which host to allocate these tasks to and  whether to migrate active tasks from $A_{t-1} \setminus L_t$. The \textit{Simulator} object considers all $N_t \cup W_{t-1}$ tasks in the order of the interval at which they were created. If the target host is unable to accommodate a task, it is added to the wait queue. This is done for each utilization metric of active task $a_j^t$ in $\mathcal{U}(a^t_j)$. As for new tasks, the utilization metrics are unknown, $u\, =\, 0\, \forall u \in \mathcal{U}(n^t_j)\, \forall\, n^t_j \in N_t$. To prevent overflow of host resources, we compare for each new task $n_j^t$ the sum of the maximum possible utilization metrics of the corresponding application and target host against the host's maximum capacity (more details in Appendix A). Hence, at each interval, Eq.~\eqref{eq:interval} holds. Instead of running the tasks on physical machines, we use trace driven discrete-event simulation.

\subsection{Framework}
\label{sec:framework}
The \textit{Framework} instantiates the tasks as Docker containers. Docker is a container management platform as a service used to build, ship and run containers on physical or virtual environments. Moreover, the hosts are physical computational nodes in the same Virtual Local Area Network (VLAN) as the server machine on which the \textit{Scheduler} program runs. The server communicates with each host via HTTP REST APIs~\cite{tuli2019fogbus}. At every host machine, a \texttt{DockerClient} service runs with a \texttt{Flask} HTTP web-service~\cite{grinberg2018flask}. Flask is a web application REST client which allows interfacing between hosts and the server. To migrate a container, we use the Checkpoint/Restore In Userspace (CRIU)~\cite{venkatesh2019fast} tool. For each migration decision $(a^t_j, h_i) \in \hat{\mathcal{D}}(A_t)$, a checkpoint of $a^t_j$ is created which is equivalent to a snapshot of the container. Then, this snapshot is transferred (migrated) to the target host $h_i$, where the container is resumed (restored). The allocation and migrations are done in a non-blocking fashion, \emph{i.e.} the execution of active containers is not halted when some tasks are being migrated.

\subsection{Model Interface}
\label{sec:interface}

We now discuss the novelty of the COSCO framework.
As all scheduling algorithms and simulations are run on a central fog broker~\cite{tuli2019fogbus}, it can be easily run in fog environments which follow a master-slave topology. As discussed in Section~\ref{sec:system}, a simulator can execute tasks on simulated host machines to return QoS parameters $P$. To execute an interval $I_t$, the simulator needs utilization metrics $\mathcal{U}(a^t_j), \forall a^t_j \in A_t$, and a scheduling decision $\bar{\mathcal{D}}^t$ to be executed on the simulator. Thus, at the beginning of the interval $I_t$, we have $L_t, N_t, W_{t-1}$ and $A_{t-1}$. After checking for the possibility of allocation/migration, we get $\hat{\bar{\mathcal{D}}}(A_t)$, $\hat{N}_t$ and $\hat{W}_{t-1}$. Now, using Eq.~\eqref{eq:active}, we find $A_t$. Using given utilization metrics $\mathcal{U}(a^t_j)$ and $\hat{\bar{\mathcal{D}}}(A_t)$, we execute $A_t$ tasks on the simulator to get $\mathcal{S}(\hat{\bar{\mathcal{D}}}(A_t), \{\mathcal{U}(a_j^t) | \forall a_j^t \in A_t\}) = \{\mathcal{U}(h_i^t) | \forall h_i \in H\}, \bar{P}_t$. This means that at the beginning of interval $I_t$, the COSCO framework allows simulation of the next scheduling interval (with an action of interest $\bar{\mathcal{D}}$ and predicted utilization metrics) to predict the values of the QoS parameters in the next interval $\bar{P}_t$. This single step look-ahead simulation allows us to take better scheduling decisions in the GOBI* algorithm, as shown in the next section. 

\begin{algorithm}[t]
    \begin{algorithmic}[1]
    \Require
    \Statex Pre-trained function approximator $f^*(x;\theta^*)$
    \Statex Pre-trained LSTM model $LSTM(\{ \mathcal{U}(a^{t'}_j) \forall t'<t \})$
    \Statex Dataset used for training $\Lambda^*$; Convergence threshold $\epsilon$
    \Statex Learning rate $\gamma$; Initial random decision $\mathcal{D}$
    \Procedure{$\text{GOBI}^*$}{scheduling interval $I_t$}
        \State \textbf{if} (t == 0)
        \State \hspace{\algorithmicindent} Initialize random decision $\mathcal{D}$
        \State \textbf{else}
        \State \hspace{\algorithmicindent} $\mathcal{D} \gets \mathcal{D}^*$ \Comment{Output for the previous interval}
        \State Get $\phi(A_{t-1}), \phi(H_{t-1})$
        \State $\bar{\mathcal{U}}(a^t_j) \forall a^t_j \in A_t \gets LSTM(\{ \mathcal{U}(a^{t'}_j) \forall t'<t \})$
        \State $\bar{\mathcal{D}} \gets GOBI(I_t)$
        \State $\{\mathcal{U}(h_i^t) | \forall h_i \in H\}, \bar{AEC}_t, \bar{ART}_t \gets$
        \State \qquad \qquad \qquad $\mathcal{S}(\hat{\bar{\mathcal{D}}}(A_t),  \{\bar{\mathcal{U}}(a^t_j) \forall a^t_j \in A_t\})$
        \State $\mathcal{O}(\bar{P}_t) \gets \alpha \cdot \bar{AEC}_t + \beta \cdot \bar{ART}_t$
        \State $\mathcal{D}^* \gets \textsc{Minimize}(\mathcal{D}, f, [\phi(A_{t-1}), \phi(H_{t-1}), \mathcal{O}(\bar{P}_t)])$
        \State Fine-tune $f^*$ with loss $=$
        \State \ \ \ $MSE(\mathcal{O}(P_{t-1}), f([\phi(A_{t-2}), \phi(H_{t-2}), \phi(\mathcal{D}^{t-1})] ; \theta)) + \mathbbm{1}(\mathcal{O}(\bar{P}_{t-1}) < \mathcal{O}(P_{t-1})) \times MSE(\bar{\mathcal{D}}^{t-1}, \mathcal{D}^{t-1})$
        \State \textbf{return} $\mathcal{D}^*$
    \EndProcedure
    \end{algorithmic}
\caption{The GOBI* scheduler}
\label{alg:gobi2}
\end{algorithm}

\section{Extending GOBI to GOBI*}
\label{sec:gobi2}
Now that we have a scheduling approach based on back-propagation of gradients, we extend it to incorporate simulated results as described in Section~\ref{sec:introduction}. To do this, we use the following components.

\begin{enumerate}[leftmargin=*]
    \item \textit{GOBI Scheduler}: We assume that we have the GOBI scheduler which can give us a preferred action of interest $\bar{\mathcal{D}}$. Thus, at the start of a scheduling interval $I_t$, we get $\bar{\mathcal{D}} = GOBI(I_t)$ (line 8 in Algorithm~\ref{alg:gobi2}). $\mathcal{D}$ now denotes the decision of GOBI*.
    \item \textit{Utilization prediction models}: We train a utilization metric prediction model, such that using previous utilization metrics of the tasks, we get a predicted utilization metric set for the next interval. As is common in prior work~\cite{gupta2017resource}, we use a Long-Short-Term-Memory (LSTM) neural network for this and train it using the same $\Lambda$ dataset that we used for training the GOBI neural approximator. Thus, at the start of interval $I_t$, using $\bar{\mathcal{D}}$ from GOBI and checking allocation possibility, we get $\hat{N_t}$ and $\hat{W}_{t-1}$. Using (\ref{eq:active}), we get $A_t$. Then, we predict $\bar{\mathcal{U}}(a^t_j), \forall a^t_j \in A_t$, using the LSTM model (line 7 in Algorithm~\ref{alg:gobi2}). Hence, we get the $\phi(A_{t-1}), \phi(H_{t-1})$ matrices.
    \item \textit{Simulator}: Now that we have an action $\bar{\mathcal{D}}$ and predicted utilization metrics $\bar{\mathcal{U}}(a^t_j), \forall a^t_j \in A_t$, we can use the simulator to predict the QoS parameters at the end of $I_t$ as described in Section~\ref{sec:interface} (lines 9 and 10 in Algorithm~\ref{alg:gobi2}). 
\end{enumerate}

\begin{figure}[]
    \centering
    \includegraphics[width=0.95\columnwidth]{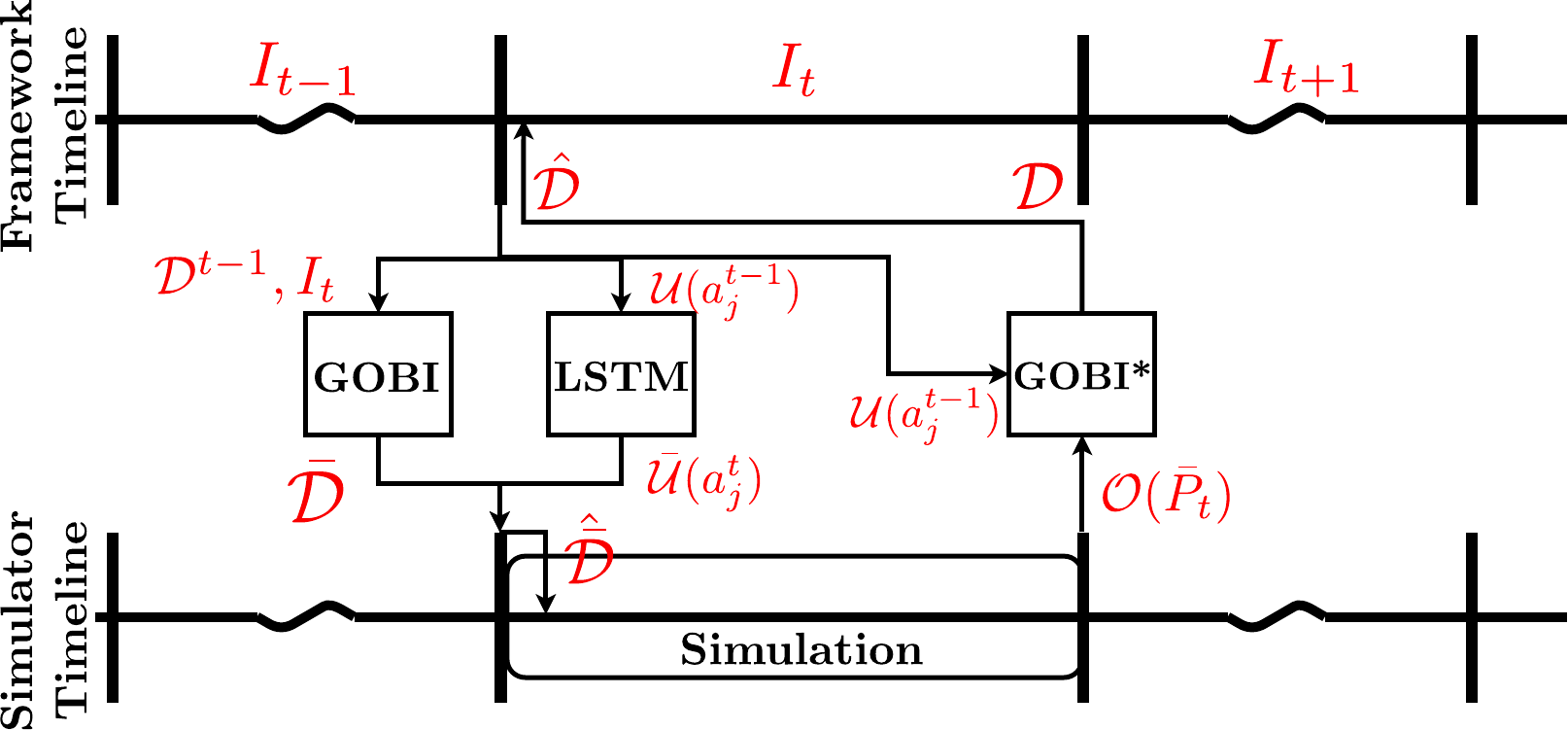}
    \caption{Iteration of GOBI* approach at the start of interval $I_t$}
    \label{fig:gobi2}
\end{figure}

Since at the start of the interval $I_t$, we do not have utilization models of all containers and subsequently hosts for $I_t$, our GOBI model is forced to predict QoS metrics at the end of $I_t$ using only utilization metrics of the previous interval, \emph{i.e.}, $I_{t-1}$. This can make GOBI's predictions inaccurate in some cases. However, if we can predict with reasonable accuracy the utilization metrics in $I_t$ using the utilization values of the previous intervals, we can simulate and get an improved estimate of the QoS parameters during the next period $I_t$. Adding these as inputs to another neural approximator for $\mathcal{O}(P_t)$ denoted as $f^*(x, \theta^*)$, we now have $x = [\phi(A_{t-1}), \phi(H_{t-1}), \mathcal{O}(\bar{P}_t), \phi(\mathcal{D})]$. 

Again, using a random scheduler and pre-trained GOBI model, we generate a dataset $\Lambda^*$ to train this new neural approximator $f^*$. At execution time, we freeze the neural model $f$ of GOBI and fine tune $f^*$ with the loss function. This loss function is the MSE of objective values. We add the MSE between the predicted action $\mathcal{D}$ by GOBI* and $\bar{\mathcal{D}}$ by GOBI in case the estimated objective value of GOBI is lower than that of GOBI*. Thus,
\begin{multline}
    \label{eq:loss_gobi2}
    \mathcal{L} = MSE(\mathcal{O}(P_{t-1}), f^*([\phi(A_{t-2}), \phi(H_{t-2}), \phi(\mathcal{D}^{t-1})] ; \theta^*)) \\
     + \mathbbm{1}(\mathcal{O}(\bar{P}_{t-1}) < \mathcal{O}(P_{t-1})) \times MSE(\bar{\mathcal{D}}^{t-1}, \mathcal{D}^{t-1}).
\end{multline}
Using this trained model, we can now provide schedules as shown in Algorithm~\ref{alg:gobi2} and Figure~\ref{fig:gobi2}. At interval $I_t$, the previous output of GOBI*, \emph{i.e.}, $\mathcal{D}^{t-1}$ is given to the GOBI model as the initial decision. The GOBI model uses utilization metrics to output the action $\hat{\mathcal{D}}$. Using a single-step simulation, we obtain an estimate of the objective function for this decision. Then, GOBI* uses this estimate to predict the next action $D$. Based on whether GOBI*'s decision is better than GOBI or not, GOBI* is driven to the decision which has lower objective value. \textit{This interactive dynamic should allow GOBI* to make a more informed prediction of QoS metrics for $I_t$ and hence perform better than GOBI.} 

\subsection{Convergence of GOBI and GOBI*}

We outline two scenarios of an ideal and a more realistic case to describe the interactive dynamic between GOBI and GOBI* and how the latter optimizes the objective scores.

\textbf{Ideal Case:} Consider the case when the neural approximator of GOBI, $f$ can \textit{perfectly} predict the objective value of the next interval $I_t$ as $\mathcal{O}(P_t)$. In this case, the predicted action $\bar{\mathcal{D}}$ is optimal. In such a setting, we expect GOBI* to converge such that it predicts the same action as $\bar{\mathcal{D}}$. Assuming an ideal simulator which \textit{exactly} mimics the real-world, $\mathcal{O}(\bar{P}_t) = \mathcal{O}(P_t)$. Thus, any action other than $\bar{\mathcal{D}}$ is sub-optimal. Hence, to minimize the loss metric, GOBI* would directly forward $\bar{\mathcal{D}}$ and ignore all other information and converge to GOBI (which is optimal).

\textbf{Real Case:} While the limited explainability of neural networks prevent us from giving formal convergence properties, we supply of number of technical observations that justify the convergence we have seen in practice when applying the methods to real systems. Thus, considering the real-case, time-bound training only gives a sub-optimal approximator $f$ for GOBI. Hence, in general the predicted decision ${\mathcal{D}}$ is not always optimal. Assuming GOBI* predicts a decision other than GOBI, GOBI* will sometimes get a sub-optimal action ${\mathcal{D}}$ such that $\mathcal{O}(\bar{P}_t) < \mathcal{O}(P_t)$ and other times $\mathcal{O}(\bar{P}_t) > \mathcal{O}(P_t)$. The former is the case when the action predicted by GOBI, $\bar{\mathcal{D}}$, is better and the latter is the case when the GOBI* action, ${\mathcal{D}}$, has a lower objective score.  (1) In the case of ${\mathcal{D}}$ being better than $\bar{\mathcal{D}}$ (having lower objective score), in an attempt to minimize the first part of the loss, 
\begin{equation}
    \label{eq:loss_part_1}
    \mathcal{L}_1 = MSE(\mathcal{O}(P_{t-1}), f^*([\phi(A_{t-2}), \phi(H_{t-2}), \phi(\mathcal{D}^{t-1})] ; \theta^*)),
\end{equation}
GOBI* would converge to avoid $\bar{\mathcal{D}}$ (considering the ideal simulator that returns exact answers). Furthermore, the predicted action $\mathcal{D}$ has a lower objective score, hence the optimization loop would tend to converge to this or better scheduling decisions. (2) In the case of $\bar{\mathcal{D}}$ being better than ${\mathcal{D}}$, which means that $\mathcal{O}(\bar{P}_{t-1}) < \mathcal{O}(P_{t-1})$, GOBI* is given an incentive to predict $\bar{\mathcal{D}}$. This is because the latter part of the loss makes the model converge such that it minimizes
\begin{equation}
    \label{eq:loss_part_2}
    \mathcal{L}_2 = MSE(\bar{\mathcal{D}}^{t-1}, \mathcal{D}^{t-1}).
\end{equation}
This, with the former part of the loss, creates an interactive dynamic such that the GOBI* model can find scheduling decisions even better than the one predicted by the GOBI model. However, in cases when GOBI*'s prediction is worse, the GOBI's prediction aids the model to quickly converge to a better decision (that of GOBI's). Thus, in both cases, each model update based on the loss in Eq.~\eqref{eq:loss_gobi2}, yields a scheduling decision with objective score same or lower than that given by GOBI. Assuming a sufficiently long training time, GOBI* would eventually converge to predict $\mathcal{D}$ such that the objective value when the decision is $\mathcal{D}$ is never greater than when it is $\bar{\mathcal{D}}$. We observe in our experiments that in the real-case GOBI* always performs better than GOBI in terms of objective scores. The interactive training with simulations allows GOBI* to converge quickly and adapt to volatile environments.

The convergence plots for GOBI* are shown in Figure~\ref{fig:convergence}. It demonstrates that as the model gets trained, the probability of the GOBI*'s actions being better than GOBI increases. This means that initially $\mathcal{L}_2$ has a high contribution to the loss for model training which aids GOBI* to quickly converge to GOBI's performance. This can be seen by the sudden drop in $\mathcal{L}_1$ in Figure~\ref{fig:convergence1}. This is also shown by Figure~\ref{fig:convegence2} that GOBI*'s loss converges faster with the additional $\mathcal{L}_2$ component. 

\begin{figure}
    \centering
    \subfigure[Components of GOBI* loss.]{
    \includegraphics[height=.26\linewidth]{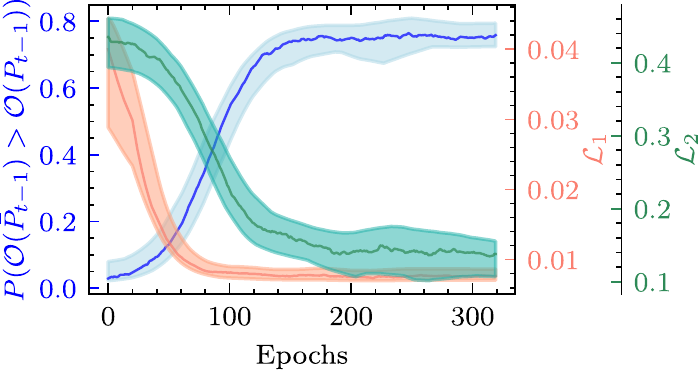}
    \label{fig:convergence1}
    }
    \subfigure[Comparison without $\mathcal{L}_2$]{
    \includegraphics[height=.26\linewidth]{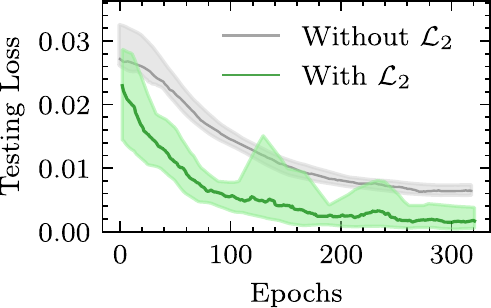}
    \label{fig:convegence2}
    }
    \caption{Convergence plots for GOBI*.}
    \label{fig:convergence}
\end{figure}

\section{Performance Evaluation}
\label{sec:perf_eval}

\subsection{Experimental Setup}
\label{sec:setup}
To test the efficacy of the proposed approaches and compare against the baseline methods, we perform experiments on both simulated and physical platforms. As COSCO has the same underlying models of task migration, workload generation and utilization metric values for both simulation and physical test-bench, we can test all models on both environments with the same underlying assumptions. As in prior work~\cite{ tuli2020healthfog}, we use $\alpha = \beta = 0.5$ in (\ref{eq:objective_function}) for our experiments. Moreover, we consider all tasks, to be allocated and migrated, in batches with each batch having feature vectors corresponding to up to $M = |H|^2$ tasks. If number of tasks is less than $M$, we pad the input matrix with zero vectors. This style of inference strategy is in line with prior work~\cite{tuli2020dynamic}. This is due to the limitation of non-recurrent neural networks to take a fixed size input for a single forward pass, hence requiring action inference for tasks being executed in batches.

\textbf{Physical Environment:}
We use the Microsoft Azure cloud provisioning platform to create a test-bed of 10 VMs located in two geographically distant locations. The gateway devices are considered to be in the same LAN of the server which was hosted in London, United Kingdom. Of the 10 VMs, 6 were hosted in London and 4 in Virginia, United States. The host capacities $\mathcal{C}(h_i) \forall h_i \in H$ are shown in Table~\ref{tab:hosts} and Table~\ref{tab:host_powers}. All machines use Intel Haswell 2.4 GHz E5-2673 v3 processor cores. The cost per hour (in US Dollar) is calculated based on the costs of similar configuration machines offered by Microsoft Azure at the datacenter in the South UK region\footnote{Microsoft Azure pricing calculator for South UK \texttt{https://azure.microsoft.com/en-gb/pricing/calculator/}}. The power consumption models are taken from the SPEC benchmarks repository\footnote{More details in Appendix~D}. We run all experiments for 100 scheduling intervals, with each interval being 300 seconds long, giving a total experiment time of 8 hours 20 minutes. We average over 5 runs and use diverse workload types to ensure statistical significance.

\begin{table*}[]
    \centering
    \caption{Host characteristics of Azure fog environment.}
    \resizebox{\textwidth}{!}{
        \begin{tabular}{@{}lcccccccccc@{}}
    \toprule 
    \multirow{2}{*}{Name} & \multirow{2}{*}{Quantity} & Core & \multirow{2}{*}{MIPS} & \multirow{2}{*}{RAM} & RAM & Ping & Network & Disk & Cost & \multirow{2}{*}{Location}\tabularnewline
     &  & count &  &  & Bandwidth & time & Bandwidth & Bandwidth & Model & \tabularnewline
    \midrule
    \multicolumn{11}{c}{\textbf{Edge Layer}}\tabularnewline
    \midrule
    Azure B2s server & 4 & 2 & 4029 & 4295 MB & 372 MB/s & 3 ms & 1000 MB/s & 13.4 MB/s & 0.0472 \$/hr & London, UK\tabularnewline

    Azure B4ms server & 2 & 4 & 8102 & 17180 MB & 360 MB/s & 3 ms & 1000 MB/s & 10.3 MB/s & 0.1890 \$/hr & London, UK\tabularnewline
    \midrule
    \multicolumn{11}{c}{\textbf{Cloud Layer}}\tabularnewline
    \midrule
    Azure B4ms server & 2 & 4 & 8102 & 17180 MB & 360 MB/s & 76 ms & 1000 MB/s & 10.3 MB/s & 0.166 \$/hr & Virginia, USA\tabularnewline

    Azure B8ms server & 2 & 8 & 2000 & 34360 MB & 376 MB/s & 76 ms & 2500 MB/s & 11.64 MB/s & 0.333 \$/hr & Virginia, USA\tabularnewline
    \bottomrule 
    \end{tabular}}
    \label{tab:hosts}
\end{table*}

\textbf{Simulation Environment:}
We consider 50 hosts machines as a scaled up version of the 10 machines from the last subsection. Here, each category has 5 times the instance count to give a total of 50 machines in a comparatively larger-scale fog environment as considered in prior art~\cite{basu2019learn, ahmed2018docker}. As we cannot place simulated nodes in geographically distant locations, we model the latency and networking characteristics of these nodes in our simulator as per Table~\ref{tab:hosts}.


\subsection{Model Training and Assumptions}

For the GOBI and GOBI* algorithms we use standard feed-forward neural models with the following characteristics adapted from \cite{tuli2020dynamic}. We use non-affine activation functions for our neural approximators to be differentiable for all input values.
\begin{itemize}[leftmargin=*]
    \item Input layer of size $M\times F + |H|\times F' + M\times N$ for GOBI and $M\times F + |H|\times F' + M\times N + 1$ for GOBI*. The non-linearity used here is $\textsf{softplus}$\footnote{The definitions of these activation functions can be seen at the PyTorch web-page: \texttt{https://pytorch.org/docs/stable/nn.html}} as in \cite{tuli2020dynamic}. Note, $|H|$ may vary from 10 in tests in the physical environment to 50 in the simulator.
    \item Fully connected layer of size $128$ with $\textsf{softplus}$ activation.
    \item Fully connected layer of size $64$ with $\textsf{tanhshrink}$ activation.
    \item Fully connected layer of size $1$ with $\textsf{sigmoid}$ activation.
\end{itemize}

To implement the proposed approach, we use PyTorch Autograd package~\cite{paszke2017automatic} to calculate the gradients of the network output with respect to input keeping model parameters constant. We generate training data for the GOBI model $f$ by running a random scheduler for 2000 intervals on the simulator. We then run the random scheduler on the framework for 300 intervals and fine-tune the GOBI model on this new dataset. For GOBI*, we use a random scheduler and predictions of the GOBI model $f$ to generate data of the form described in Section~\ref{sec:gobi} of size 2000 on simulator and fine-tune it on data corresponding to 300 intervals on the framework. The dataset was used to create the matrices $[\phi(A_{t-1}), \phi(H_{t-1}), \phi(\mathcal{D})]$, where each column of $\phi(A_{t-1}), \phi(H_{t-1})$ is normalized by the maximum and minimum utilization values, and $\phi(\mathcal{D})$'s rows are one-hot decision vectors. To train a model, we use the AdamW optimizer~\cite{saleh2019dynamic} with learning rate $10^{-5}$ and randomly sample 80\% of data to get the training set and the rest as the cross-validation set\footnote{All model training and experiments were performed on a system with configuration: Intel i7-10700K CPU, 64GB RAM, Nvidia GTX 1060 and Windows 10 Pro OS.}. The convergence criterium used for model training is the loss of consecutive epochs summed over the last 10 epochs is less than $10^{-2}$. For GOBI*, we also train predictors for utilization metrics. Specifically, we use a LSTM neural network~\cite{hochreiter1997long} with a single LSTM cell to achieve this and train it using the same dataset used to train the GOBI model.

\subsection{Workloads}

To generate workloads for training the GOBI and GOBI* models and testing them against baseline methods, we use two workload characteristics \textit{Bitbrain traces} and \textit{DeFog applications}. These were chosen because of their non-stationarity, their highly volatile workloads and their similarity with many real-world applications.
\begin{enumerate}[leftmargin=*]
    \item \textit{BitBrain traces}. The dynamic workload is generated for cloudlets based on the real-world, open-source Bitbrain dataset \cite{shen2015statisticalBitBrain}\footnote{The BitBrain dataset can be downloaded from: \texttt{http://gwa.ewi.tudelft.nl/datasets/gwa-t-12-bitbrains}}. This dataset consists of real traces of resource utilization metrics from 1750 VMs running on BitBrain distributed datacenter. The workloads running on these servers are from a variety of industry applications including computational analytical programs used by major banks, credit operators and insurers~\cite{shen2015statisticalBitBrain} and are commonly used for benchmarking fog-cloud models~\cite{tuli2020dynamic, khamse2018efficient, li2017bayesian}. The dataset consists of workload information for each time-stamp (separated by 5 minutes) including the number of requested CPU cores, CPU usage in terms of Million Instructions per Second (MIPS), RAM requested with Network (receive/transmit) and Disk (read/write) bandwidth characteristics. These different categories of workload data constitute the feature values of $\phi(A_{t-1})$ and $\phi(H_{t-1})$. As these workload traces correspond to applications running on real infrastructure, we use these to run our simulations and generate training data. At the start of each interval $I_t$, the size of the new tasks $N_t$ set follows a discrete Poisson distribution\footnote{The Poisson distribution models the number of independent arrivals that occur in a period, so it is apt to model for a batch size if the batch is formed by arrivals in a time window (scheduling interval in our case).} $Poisson(\lambda)$, as per prior works~\cite{tuli2020dynamic, he2021sla}. Here, $\lambda = 1.2$ jobs for $|H|= 10$ and $\lambda = 5$ jobs for $|H| = 50$.
    \item \textit{DeFog applications.} DeFog~\cite{mcchesney2019defog} is a fog computing benchmark which consists of six real-time heterogeneous workloads such as Yolo, Pocketspinx, Aeneas, FogLamp and iPokeMon. We use three specific heterogeneous applications of DeFog: Yolo (Memory, Bandwidth and Compute Intensive benchmark), PockeSphinx (Compute Intensive) and Aeneas (Bandwidth Intensive). Yolo runs with 1500 user requests and each request uploads a unique image with different density and can run upto three intervals. PochetSpinhx runs with 320 user requests and Aneases with 1500. Here too, at the start of each interval $I_t$, the size of the new tasks $N_t$ set follows a Poisson distribution $Poisson(\lambda)$ with $\lambda=1.2$ jobs.  Out of the $N_t$ tasks, the distribution of selection of Yolo/Pocketsphinx/Aeneas follows the probabilities of $(p_y, p_p, p_a)$ for a run. For our final comparison experiments we run using $p_y = p_p = p_a = 0.33$. However, to test diverse workload characteristics, we also run using any one as $0.80$ and other two as $0.10$. We call these workload dominant runs. We perform all our experiments to compare with baselines using the DeFog benchmarks.
\end{enumerate}

\subsection{Baseline Models}
\label{sec:baselines}
We evaluate the performance of the proposed algorithms and compare them against the state-of-the-art scheduling approaches. The reasons for comparing against these baselines are described in Section~\ref{sec:related-work}. We consider two light-weight heuristic based schedulers \textit{LR-MMT} and \textit{MAD-MC} that have low scheduling times, but higher energy consumption, response times and SLO violations. We also consider a deep-learning based gradient-free optimization method \textit{GA}, and two reinforcement-learning based schedulers \textit{DQLCM} and \textit{A3C}. Finally, we compare against a one max-weight based allocation method \textit{POND}.

\begin{figure*}
    \centering
    \subfigure[Average Energy Consumption]{
    \includegraphics[width=.23\textwidth]{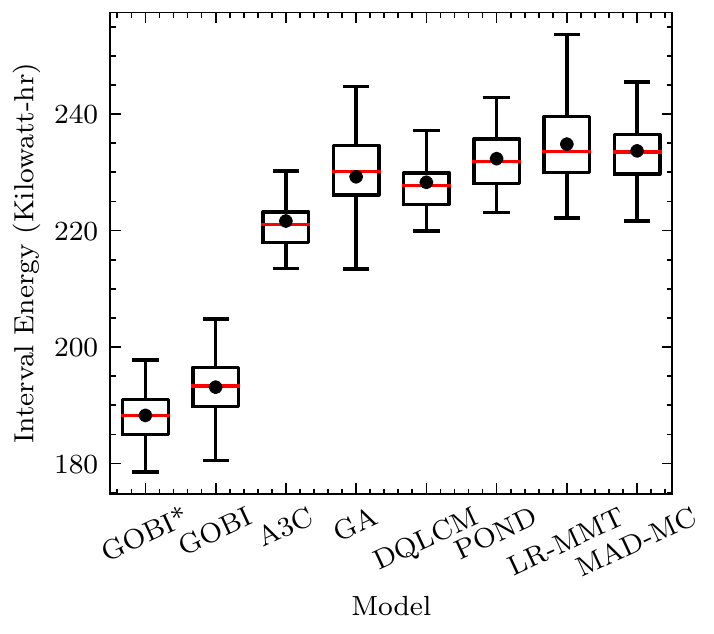}
    \label{fig:f_energy}
    }
    \subfigure[Average Response Time]{
    \includegraphics[width=.23\textwidth]{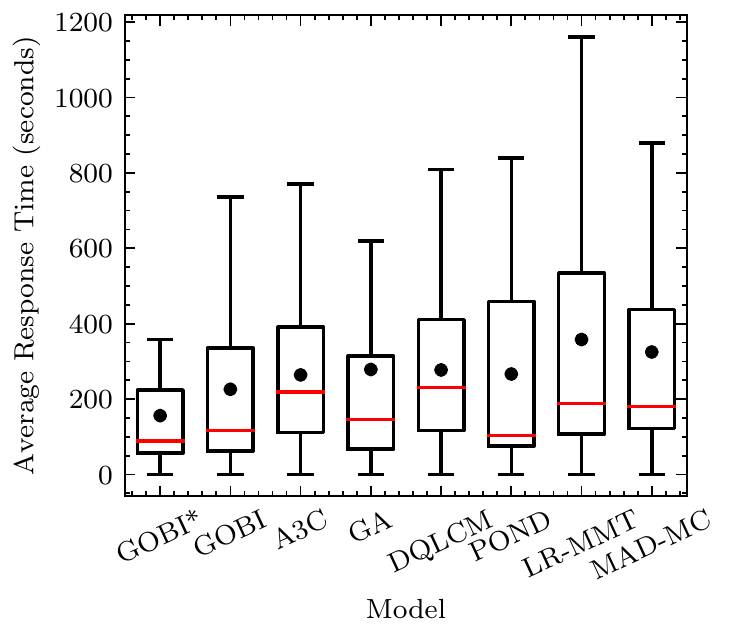}
    \label{fig:f_response}
    }
    \subfigure[Average Execution Time]{
    \includegraphics[width=.23\textwidth]{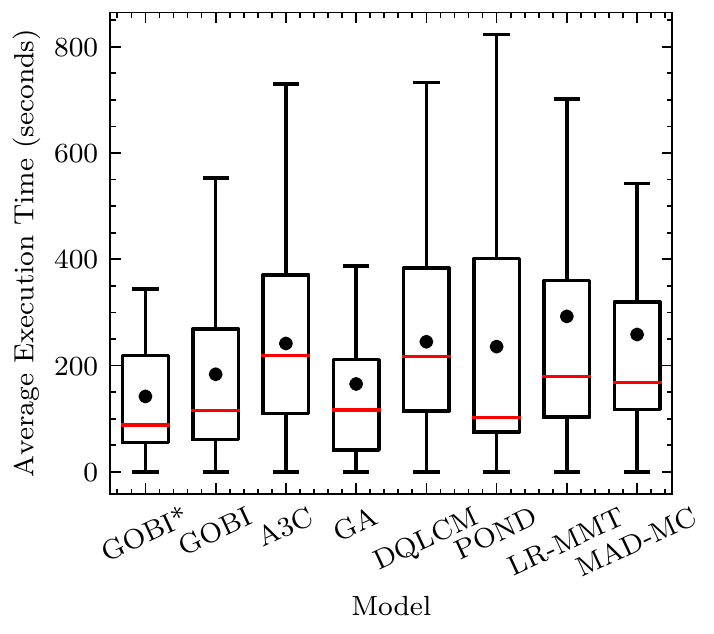}
    \label{fig:f_exec}
    }
    \subfigure[Fraction of SLO Violations]{
    \includegraphics[width=.23\textwidth]{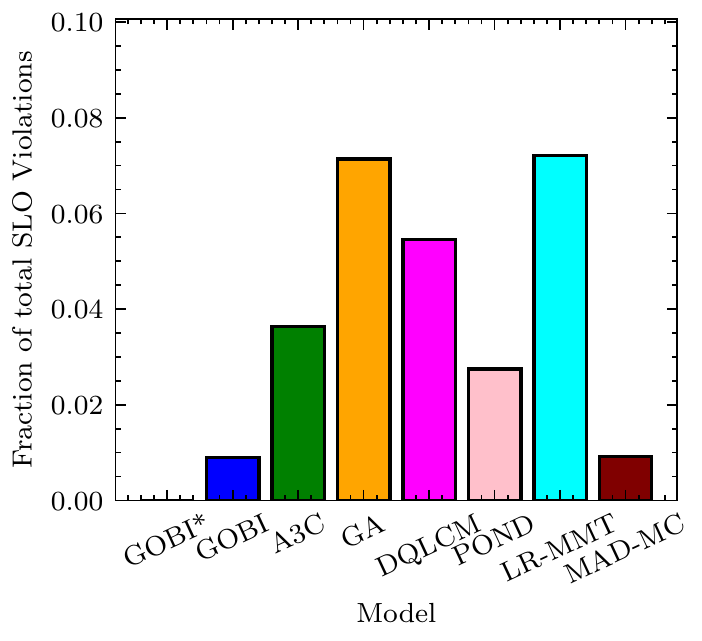}
    \label{fig:f_sla}
    }\\
    \subfigure[Average Migration Time]{
    \includegraphics[width=.23\textwidth]{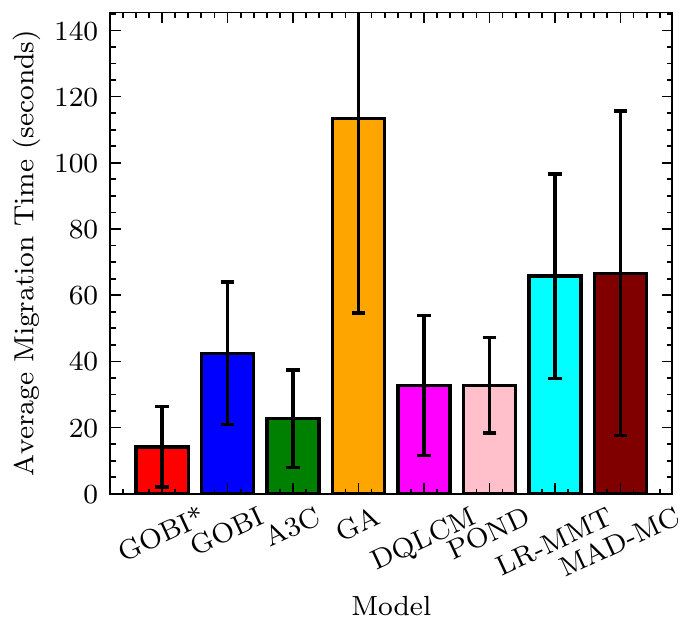}
    \label{fig:f_migration_time}
    }
    \subfigure[Average Response Time (per application)]{
    \includegraphics[width=.23\textwidth]{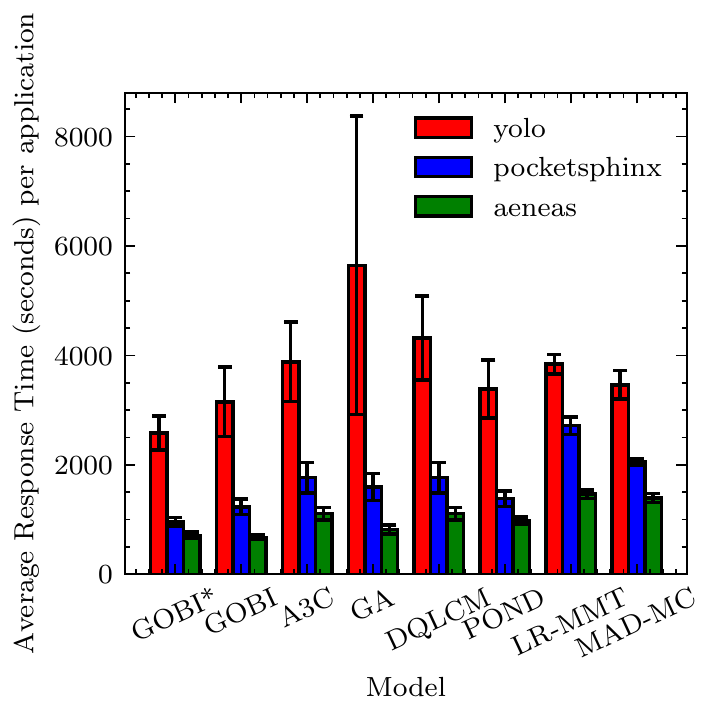}
    \label{fig:f_response_pa}
    }
    \subfigure[Average Wait Time (per application)]{
    \includegraphics[width=.23\textwidth]{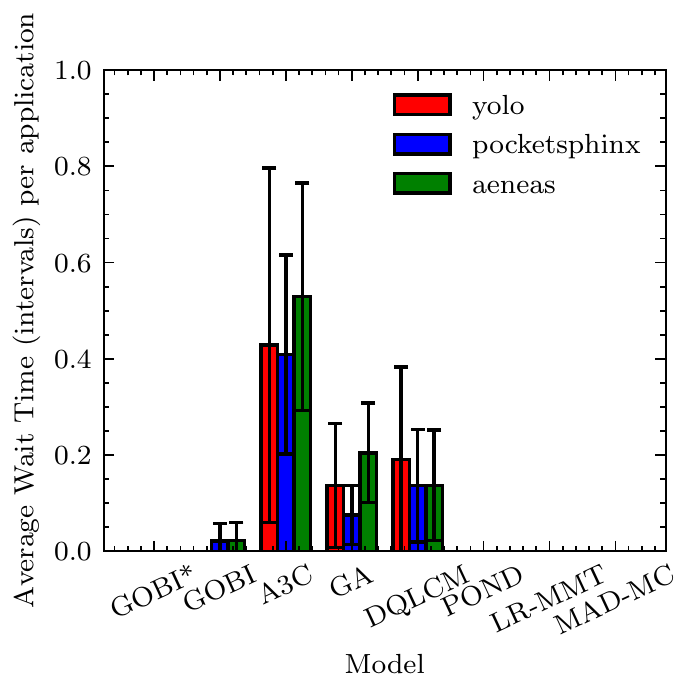}
    \label{fig:f_wait_pa}
    }
    \subfigure[Average SLO Violations (per application)]{
    \includegraphics[width=.23\textwidth]{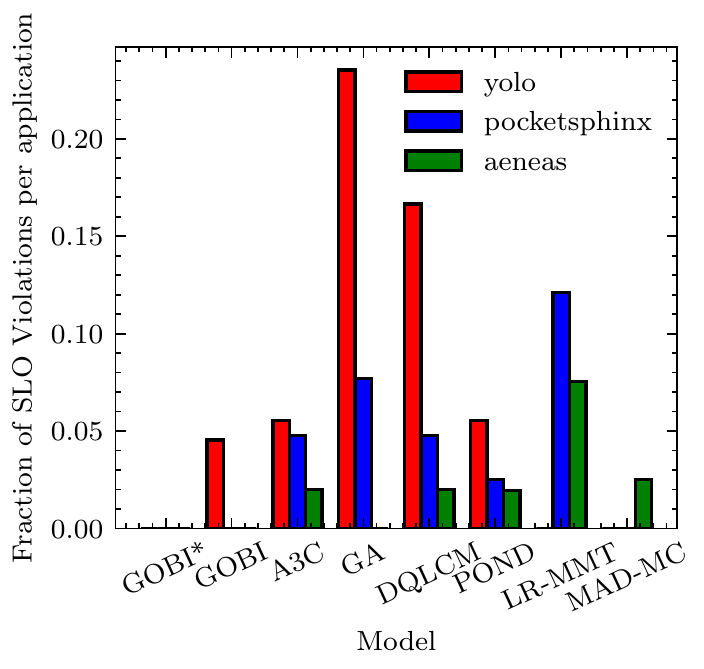}
    \label{fig:f_sla_pa}
    }\\
    \subfigure[Scheduling Time versus execution time]{
    \includegraphics[width=.23\textwidth]{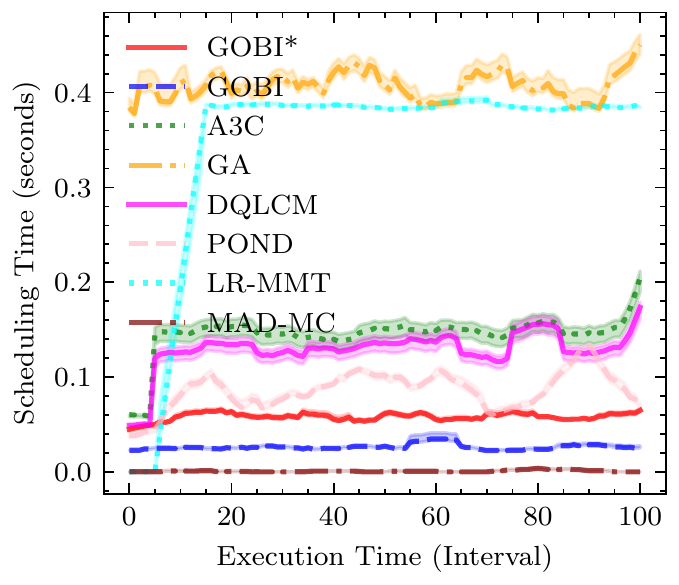}
    \label{fig:f_scheduling_time_series}
    }
    \subfigure[Average Scheduling Time]{
    \includegraphics[width=.23\textwidth]{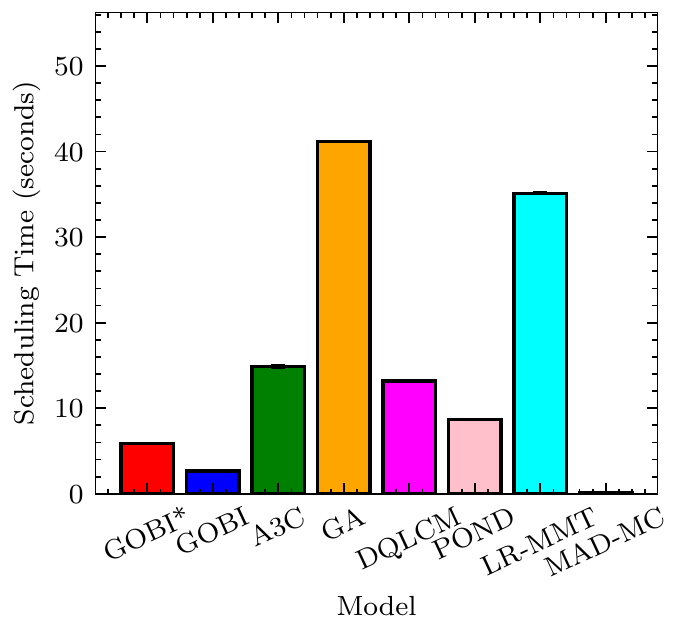}
    \label{fig:f_scheduling_time}
    }
    \subfigure[Average Wait Time]{
    \includegraphics[width=.23\textwidth]{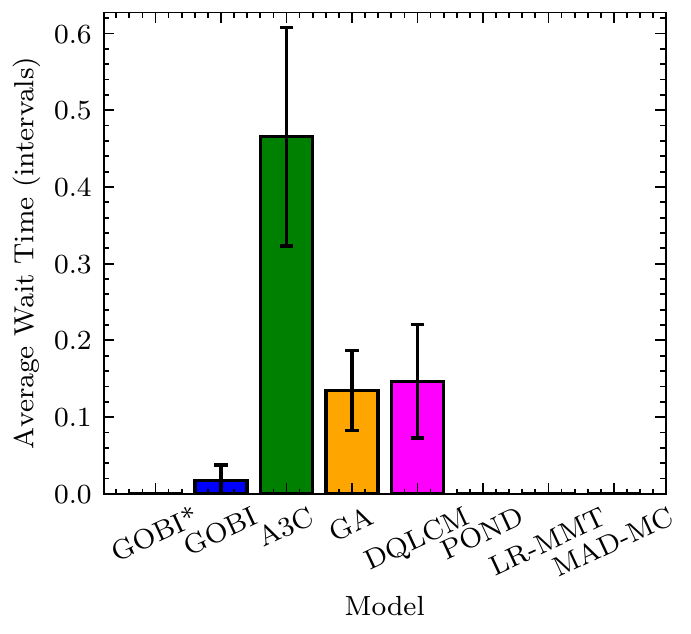}
    \label{fig:f_wait_time}
    }
    \subfigure[Fairness]{
    \includegraphics[width=.23\textwidth]{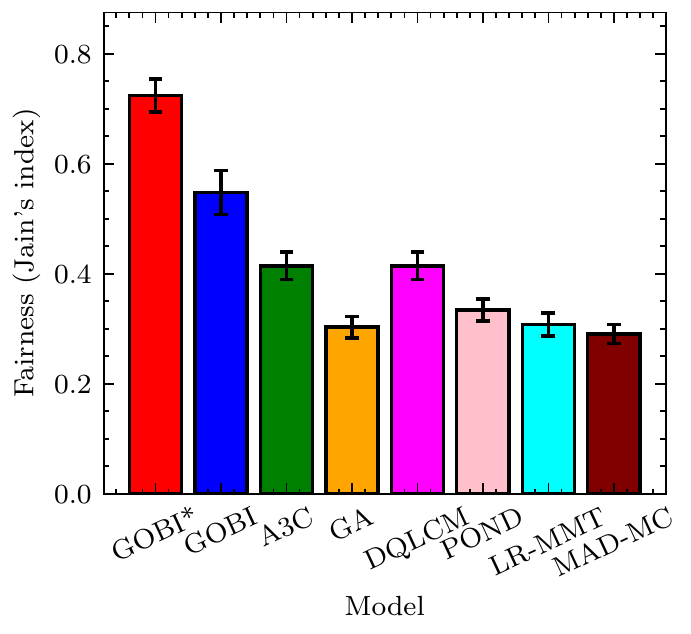}
    \label{fig:f_fairness}
    }
    \caption{Comparison of GOBI and GOBI* against baselines on framework with 10 hosts}
    \label{fig:framework_results}
\end{figure*}

\begin{figure*}
    \centering
    \subfigure[Average Energy Consumption]{
    \includegraphics[width=.23\textwidth]{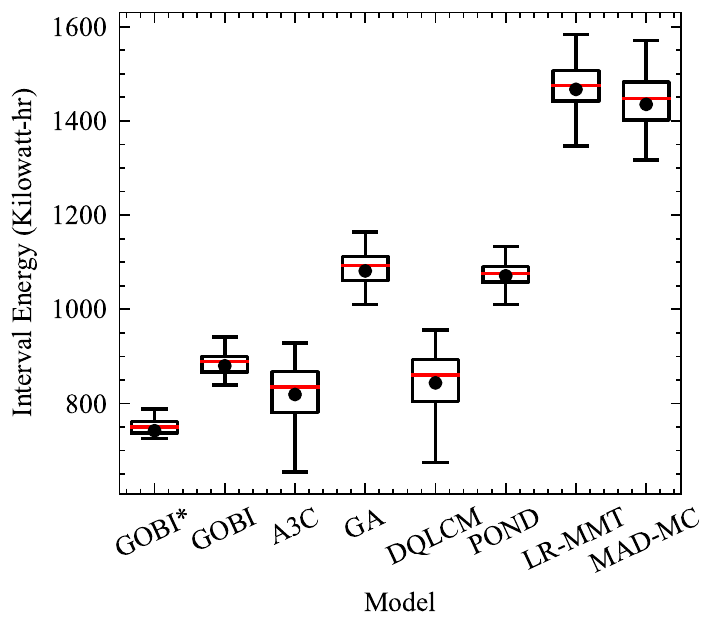}
    \label{fig:s_energy}
    }
    \subfigure[Average Response Time]{
    \includegraphics[width=.23\textwidth]{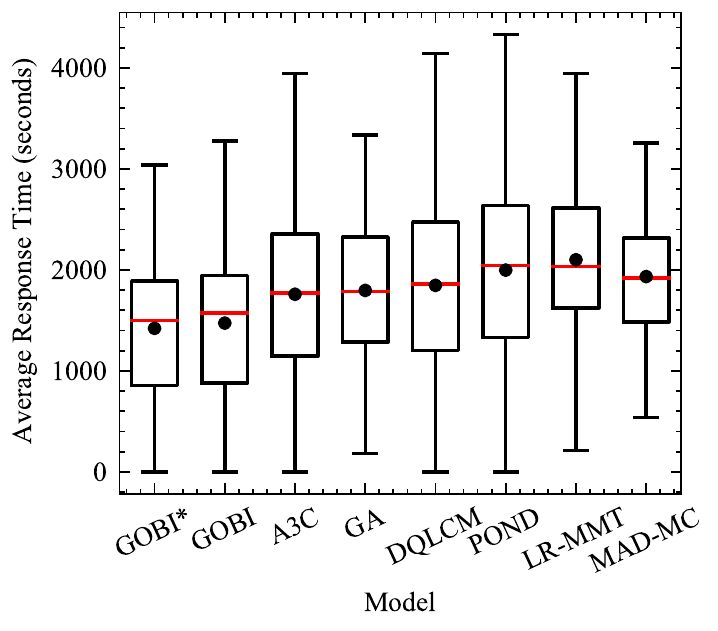}
    \label{fig:s_response}
    }
    \subfigure[Execution Time]{
    \includegraphics[width=.23\textwidth]{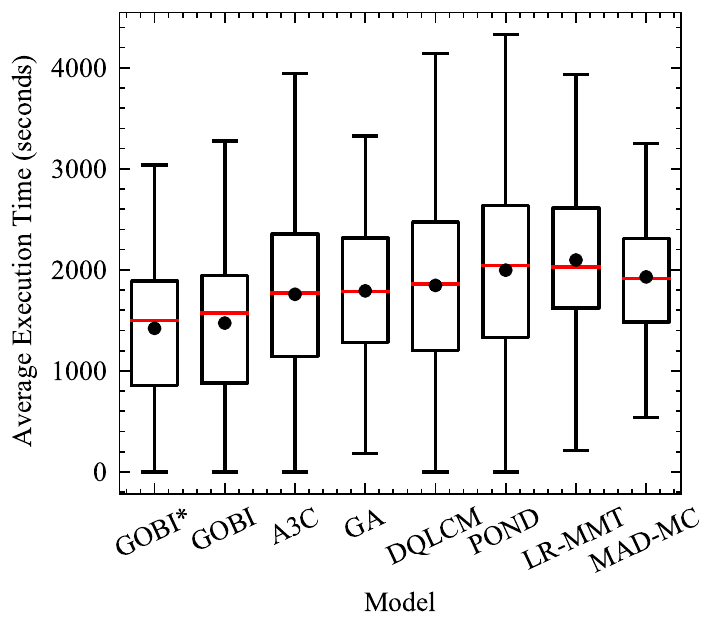}
    \label{fig:s_exec}
    }
    \subfigure[Migration Time]{
    \includegraphics[width=.23\textwidth]{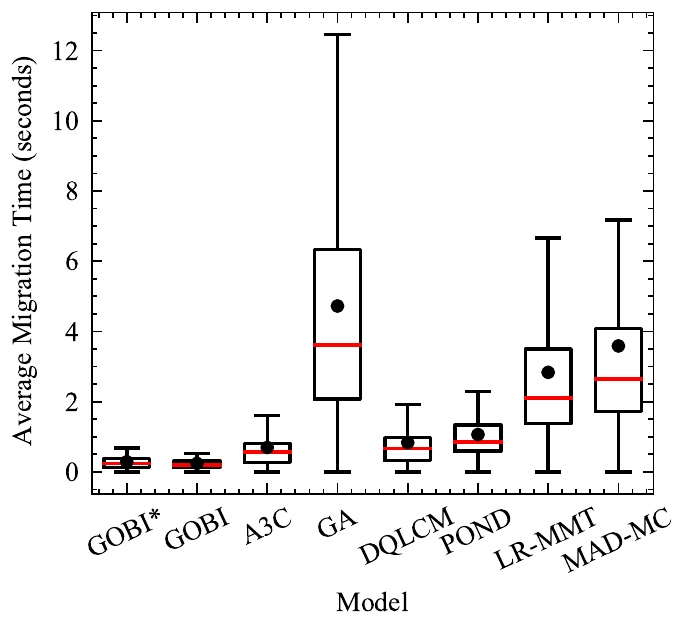}
    \label{fig:s_migration_time}
    }\\
    \subfigure[Fairness]{
    \includegraphics[width=.23\textwidth]{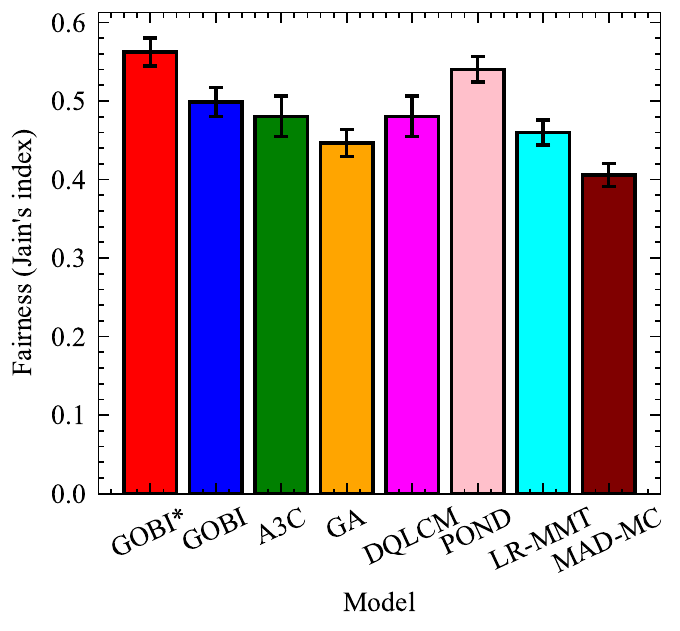}
    \label{fig:s_fairness}
    }
    \subfigure[Fraction of SLO Violations]{
    \includegraphics[width=.23\textwidth]{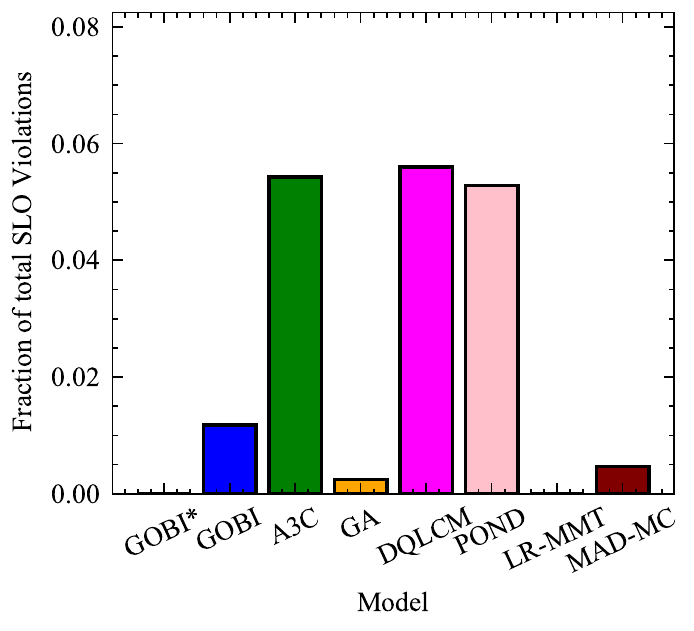}
    \label{fig:s_sla}
    }
    \subfigure[Average Scheduling Time]{
    \includegraphics[width=.23\textwidth]{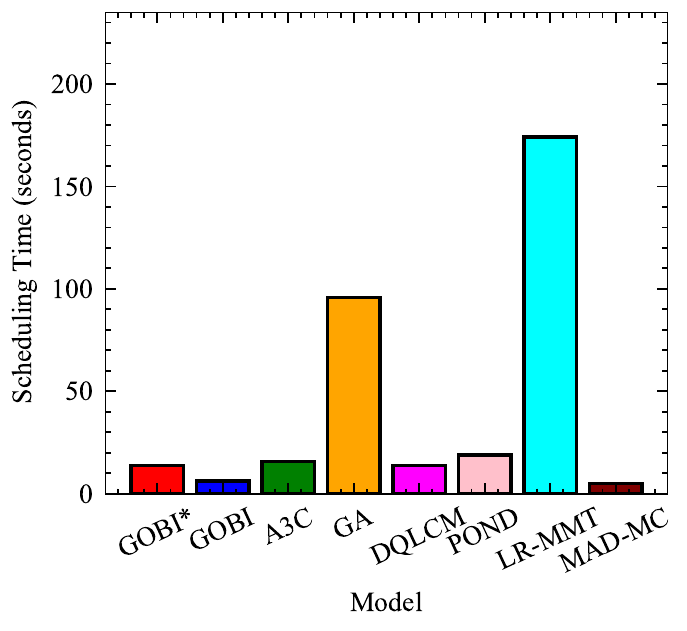}
    \label{fig:s_scheduling_time}
    }
    \subfigure[Average Wait Time]{
    \includegraphics[width=.23\textwidth]{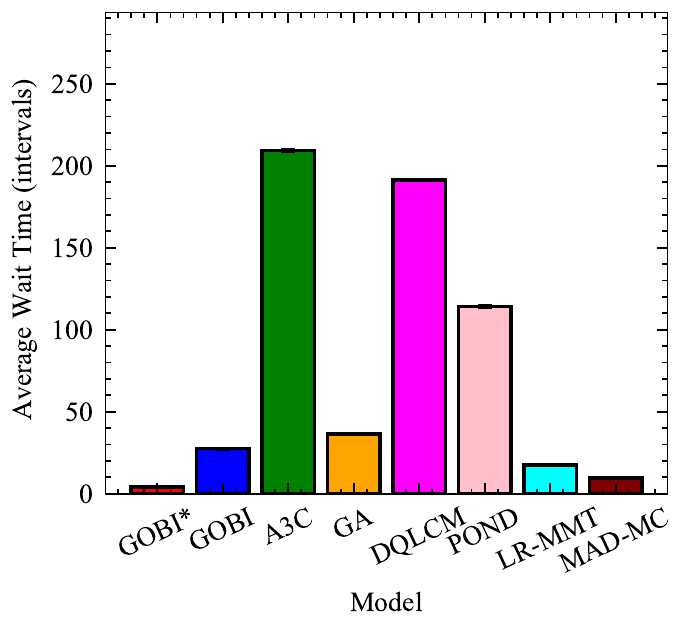}
    \label{fig:s_wait_time}
    }
    \caption{Comparison of GOBI and GOBI* against baselines on simulator with 50 hosts}
    \label{fig:simulator_results}
\end{figure*}

\subsection{Evaluation Metrics}
\label{sec:metrics}
We use the following evaluation metrics to test the GOBI and GOBI* models as motivated from prior works~\cite{basu2019learn, tuli2020dynamic, gill2019transformative, tuli2019fogbus, beloglazov2012optimal}. We also use $AEC$ and $ART$ as in Section~\ref{sec:gobi}.
\begin{enumerate}[leftmargin=*]
    \item \textit{SLO Violations} which is given as  \[\frac{\sum_t \sum_{l^t_j \in L_t} \mathbbm{1}(Response\ Time(l^t_j) \leq \psi(l^t_j))}{\sum_t |L_t|},\] where $\psi(l^t_j)$ is the 95$^{th}$ percentile response time for this application type (Yolo/PocketSphinx/Aeneas for DeFog and random/sequential for Bitbrain) on the state of the art baseline \textit{A3C}. This definition of percentile-based SLO is defined for the response time metrics of completed tasks and is inspired from~\cite{boloor2010dynamic}.
    \item \textit{Fairness} which is given as the Jain's Fairness Index \[\frac{( \sum_t \sum_{l^t_j\in L_t} Response\ Time(l^t_j))^2}{(\sum_t |L_t|) \times ( \sum_t \sum_{l^t_j\in L_t} Response\ Time(l^t_j)^2 )}.\]
    \item \textit{Average Migration Time} which is the average time for the migration of tasks over all intervals in the run \[\frac{\sum_t \sum_{(a_{j}^t, h_{i}) \in \hat{\mathcal{D}}(A_t)} Migration\ Time(a_{j}^t, h_{i})}{\sum_t |\hat{\mathcal{D}}(A_t)|}.\]
    \item \textit{Scheduling Time} which is the average time to reach the scheduling decision over all intervals in the run.
    \item \textit{Average Wait Time} which is the average time for a container in the wait queue before it starts execution.
\end{enumerate}


\subsection{Results}

In this section we provide comparative results, showing how GOBI and GOBI* perform against other baselines as described in Section~\ref{sec:baselines}. We compare the proposed approaches with respect to the evaluation metrics described in Section~\ref{sec:metrics}. Additional results are given in Appendix~E. The graphs in Figure~\ref{fig:framework_results} show the results for runs with length of 100 scheduling intervals viz 8 hours 20 minutes using the DeFog workloads on 10 physical Azure machines. Figure~\ref{fig:simulator_results} shows similar trends in the result for 50 hosts in a simulated environment.

Figure~\ref{fig:f_energy} shows the energy consumption of Azure host machines for each scheduling policy. Among the baseline approaches, \textit{A3C} consumes the least average interval energy of $221.63$ KW-hr. Running the \textit{GOBI} approach on the Azure platform consumes $193.11$ KW-hr, $12.86\%$ lower than \textit{A3C}. Further, \textit{GOBI*} consumes the least energy $188.26$ KW-hr ($15.05\%$ lower than that of \textit{A3C}). The major reason for this is the low response time of each application, which leads to more tasks being completed within the same duration of 100 intervals. Thus, even with similar energy consumption across all policies, \textit{GOBI} and \textit{GOBI*} have a much higher task completion rate of 112-115 tasks in 100 intervals compared with only 107 tasks. Similarly, when we consider 50 host machines in a simulated platform (Figure~\ref{fig:s_energy}), \textit{GOBI*} still consumes $9.39\%$ lower than \textit{A3C}. The energy consumption with \textit{GOBI} is very close to \textit{A3C}.

Figure~\ref{fig:f_response} shows the average response time for each policy. Here, response time is the time between the creation of a task from an IoT sensor and up to the gateway receiving the response. Among the baselines models, \textit{POND} has the lowest average response time of $266.53$ seconds. Here, \textit{GOBI} and \textit{GOBI*} have $226.04$ and $156.09$ seconds ($15.19\%-41.43\%$ better than \textit{POND}). For the three applications, the average response time for each policy is shown in Figure~\ref{fig:f_response_pa}. Clearly, among the DeFog benchmark applications, Yolo takes the maximum time.  The highest response time for Yolo among all policies is for \textit{GA} as it prefers scheduling shorter jobs first \emph{i.e.} Pocketsphinx and Aeneas to reduce response time. This is verified by the results which show that Aeneas has the lowest response time when the \textit{GA} policy is used. Another disadvantage of this preference is that it leads to high wait times when running the \textit{GA} policy. However, in larger scale experiments on 50 hosts in simulation (Figure~\ref{fig:s_response}), \textit{GA} has much lower wait times leading to the best response times among the baseline algorithms. Here, compared to \textit{GA}, \textit{GOBI} and \textit{GOBI*} have $17.98\%-20.87\%$ lower average response times, respectively.

Figure~\ref{fig:f_wait_time} shows the average waiting time (in intervals) for tasks running on the Azure framework with each policy. As prior works have established that RL approaches are slow to adapt in highly volatile environments~\cite{tuli2020dynamic, findling2019computational}, the \textit{A3C} and \textit{DQLCM} approaches are unable to adapt when a host is running at capacity. This means that for a large number of scheduling intervals ($47.72\%$), they predict an overloaded host. Hence, the task cannot be assigned in the same interval and has to wait until either it is assigned to another host, or the resources of this host are released. This is also reflected in the wait time per application (Figure~\ref{fig:f_wait_pa}). As established in prior work, reinforcement-learning approaches scale poorly and hence this waiting time gets more pronounced when running with 50 hosts (Figure~\ref{fig:s_wait_time}). However, as \textit{GOBI} and \textit{GOBI*} are able to adapt quickly to the environment, because of the neural model update in each iteration, they do not face such problems even on larger scale experiments.
 
As seen in Figure~\ref{fig:f_sla}, due to the high response times of the baseline approaches, they have a high SLO violation rate. For the experiments of physical hosts, as \textit{POND} has the lowest response time, the SLO violation rate is also the lowest among the baselines ($2.75\%$). \textit{GOBI} has only $0.9\%$ SLO violations and \textit{GOBI*} has none. This is due to the back-propagation approach which minimizes the response time, as well as avoiding local optima using the adaptive moment estimation approach (Adam optimizer). For \textit{GOBI}, the SLO violations are only for Yolo tasks ($4.5\%$). In the simulation environment, due to the high wait and response times of the \textit{A3C}, \textit{DQLCM} and \textit{POND} schedulers, their SLO violation rates are highest among all methods. Here, \textit{GOBI} has only $1.1\%$ SLO violations. Due to the high response times of \textit{A3C} and \textit{DQLCM} for 50 hosts, the SLO itself is much higher giving the \textit{GA} approach with low response times only $0.2\%$ SLO violations. \textit{GOBI*} has no SLO violations in this case as well.

The main reason for low response times in \textit{GOBI} and \textit{GOBI*} is their low scheduling time. Even with single-step simulation, the back-propagation based optimization strategy with the AdamW optimizer \textit{converges faster than the gradient-free optimization} methods like \textit{A3C}, \textit{DQLCM} \textit{GA} and \textit{POND}. This is apparent from the scheduling times seen in Figures~\ref{fig:f_scheduling_time} and \ref{fig:f_scheduling_time_series} (shaded region shows the 95\% confidence interval). Here, \textit{MAD-MC} has the lowest scheduling time, with \textit{GOBI} next at $2.65$ seconds and \textit{GOBI*} at $5.88$ seconds. Even with 50 hosts (Figure~\ref{fig:s_scheduling_time}), \textit{GOBI} and \textit{GOBI*} have one of the lowest scheduling times across all policies (except \textit{MAD-MC}).

Figure~\ref{fig:f_migration_time} shows the average migration times for all polices. Here, \textit{GA} has the highest migration time due to the largest number of migrations. This is as expected due to the \textit{non-local jumps} in the \textit{GA} approach which lead to a high number of migrations. Approaches like the back-propagation method (\textit{GOBI} and \textit{GOBI*}) and RL based approaches (\textit{A3C} and \textit{DQLCM}) have comparatively low migration times. A similar trend is also apparent with 50 hosts as seen in Figure~\ref{fig:s_migration_time}. Due to convergence to better optima, the execution times itself for iterative optimization approaches including \textit{GOBI}, \textit{GOBI*} and \textit{GA} are the lowest among all approaches (Figures~\ref{fig:f_exec} and \ref{fig:s_exec}). Finally, Figures~\ref{fig:f_fairness} and \ref{fig:s_fairness} show that \textit{GOBI} and \textit{GOBI*} are the most fair schedulers.

\subsection{Comparing GOBI and GOBI*}

Now that \textit{GOBI} and \textit{GOBI*} have been shown to out-perform prior works, we directly compare the two approaches. Table~\ref{tab:comparison} shows \textit{GOBI*} has a lower objective value ($16.4\%$) and prediction error of objective value ($14.2\%$) for the next interval. This clearly shows that \textit{GOBI*} can optimize the objective value better than \textit{GOBI}. This is because \textit{GOBI} predicts the objective value for the interval $I_t$ using only the utilization metrics of interval $I_{t-1}$. However, \textit{GOBI*} has more information as the LSTM models predict utilization metrics for $I_t$ and the \textit{GOBI} model gives a tentative action, which is then used to simulate $I_t$ and get a reasonable estimate of the objective value for $I_t$. This estimate with the utilization metrics of $I_{t-1}$ aid \textit{GOBI*} to have much closer predictions and hence better scheduling decisions. However, the scheduling time of \textit{GOBI*} is more than twice that of \textit{GOBI}. This makes the \textit{GOBI} approach more appropriate for environments with resource constrained servers.

\begin{table}[]
    \centering
    \caption{Comparison between GOBI and GOBI*}
    \resizebox{\columnwidth}{!}{
    \begin{tabular}{@{}lccc@{}}
    \toprule 
    Algorithm & Scheduling Time & Objective Value & MSE\tabularnewline
    \midrule
    GOBI & \textbf{2.65$\pm$0.019 s} & 0.5673 & 2.31$\pm$0.15$\times10^{-3}$\tabularnewline

    GOBI{*} & 5.88$\pm$0.015 s & \textbf{0.4744} & \textbf{1.98$\boldsymbol\pm$0.22$\boldsymbol{\times10^{-3}}$}\tabularnewline
    \bottomrule 
    \end{tabular}}  
    \label{tab:comparison}
\end{table}

\section{Conclusions and Future Work}
\label{sec:conclusion}

We have presented a coupled-simulation approach to leverage simulators to predict the QoS parameters and make better decisions in a heterogeneous fog setup. The presented COSCO framework allows deployment of a holistic platform that provides an easy to use interface for scheduling policies to access simulation capabilities. Moreover, we have presented two scheduling policies based on back-propagation of gradients with respect to input, namely GOBI and GOBI*. These approaches use neural approximators to model objective scores for the scheduling decisions. GOBI* uses GOBI, predictors of utilization characteristics and an underlying simulator to better predict the objective values leading to better decisions. Comparing GOBI and GOBI* against state-of-the-art schedulers using real-world fog applications, we see that our methods are able to reduce energy consumption, response time, SLO violations and scheduling time. Between GOBI and GOBI*, GOBI* gives better QoS values; however, GOBI is more suitable for resource constrained servers.

For the future, we propose to extend the COSCO framework to allow workflow models for serverless computing. Extending to serverless would allow us to perform fine-grained auto-scaling, increase productivity and improve flexibility and logistics~\cite{casale2020radon}. For the back-propagation approach, we wish to extend our methods to consider layer types and activations like recurrent, convolution or residual with Rectified Linear Units (ReLU). This is because such non-differentiable functions are increasingly being used to approximate diverse objective functions. More advanced layer types would also allow us to model temporal and spatial characteristics of the environment. 

\section*{Software Availability}
\footnotesize{The code is available at \url{https://github.com/imperial-qore/COSCO}. The Docker images are available at \url{https://hub.docker.com/u/shreshthtuli}. The training datasets and execution traces are available at \url{https://doi.org/10.5281/zenodo.4897944}, released under the CC BY 4.0 license.}

\section*{Author Contributions}

\footnotesize{S.T. designed and implemented the COSCO framework with GOBI, and GOBI* algorithms. S.P aided in software debugging. S.T. designed and conducted the experiments. S.T., S.P., G.C and N.R.J analyzed the experimental results. S.T, G.C. and N.R.J. wrote the paper. G.C., S.N.S and N.R.J supervised the project.}

\section*{Acknowledgments}
\footnotesize{S.T. is grateful to the Imperial College London for funding his Ph.D. through the President’s Ph.D. Scholarship scheme. S.P. is supported by  the  European  Social Fund via IT Academy program. The work of S.T. and G.C. has been partly funded by the EU’s Horizon 2020 program under grant agreement No 825040.}

\appendices

\section{Implementation of COSCO framework}
\begin{figure*}[h]
    \centering
    \includegraphics[width=0.95\textwidth]{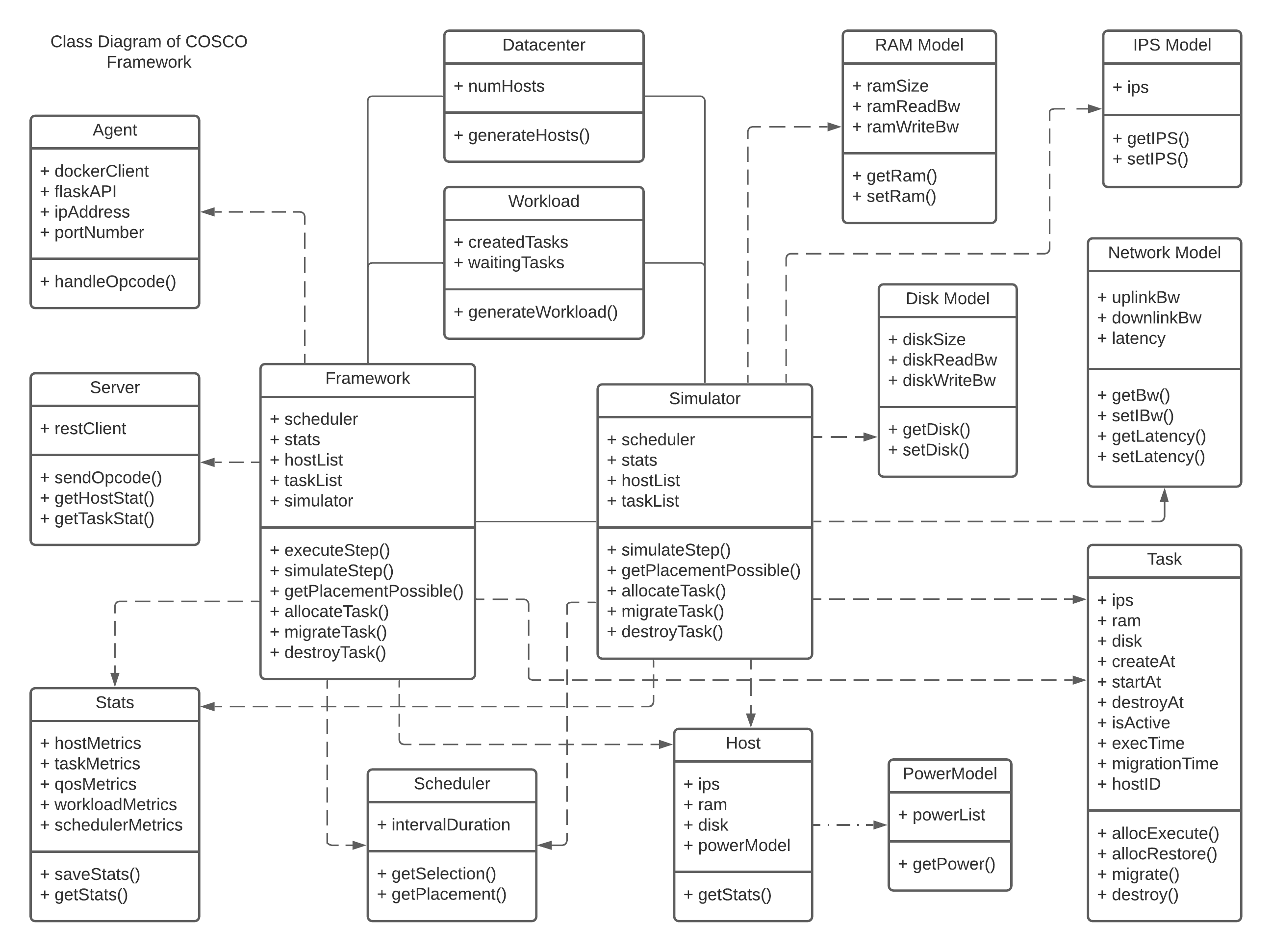}
    \caption{UML class diagram of the COSCO framework}
    \label{fig:class}
\end{figure*}

We present the implementation details of the simulator and framework with the interface details. To implement the COSCO framework, we use object-oriented programming model and realize it using the Python programming language. We now provide details of various Python classes which together consolidate to the complete framework. A UML class diagram of the COSCO framework is shown in Figure~\ref{fig:class}.

The hosts are instantiated/simulated using the \textit{Datacenter} class which provides a list of \textit{Host} objects to the simulator or framework. These \textit{Host} objects might correspond to simulated hosts or physical compute nodes with Instructions per second (IPS), RAM, Disk and Bandwidth capacities with communication latency and power consumption models with CPU utilization. The workloads are generated by the \textit{Workload} class. These are either time-series models of IPS, RAM, Disk, Bandwidth requirements in the case of simulation or are actual application programs in the case of physical experiments. At the beginning of each scheduling interval, a \textit{Workload} object creates a list of workloads ($N_t$) and also maintains a waiting queue for workloads that could not be allocated in a prior interval ($W_{t-1}$). Moreover, the \textit{Scheduler} class allows taking scheduling decisions in the form of task selection (\texttt{getSelection()}) and placement (\texttt{getPlacement()}). Task selection function selects the containers that need to be migrated to another host; task placement function returns the target host for each selected or new container. 

\subsection{Modelling of Computational Utilization}
The utilization metric indirectly used for our experiments is the \% CPU consumption, which is used as part of the workload and host characteristics (input to the neural approximators). However, as part of the host capacity characteristics, IPS is taken by running SPEC benchmark using the ``\texttt{perf stat}'' command on real systems to obtain an approximation of how many maximum number of instructions per second a host can execute. For host/workload utilization characteristics, using the \% CPU utilization, we get the \% CPU time allocated to a particular task for that interval. We also get the number of instructions executed using the ``\texttt{perf stat}'' command. This allows us to calculate an ``average'' number of instructions executed per second for each workload. Summing this IPS for each host over all running workloads on that host gives us the IPS of that host. Thus, the above-mentioned calculation in our physical experiments gives us a proxy to the IPS metric commonly referred to in simulated scenarios. This makes the IPS used in our model is agnostic to the underlying infrastructure (simulator/physical setup), being taken directly from the Bitbrain traces in our simulation experiments and being calculated from the \% CPU utilization in our physical experiments.

\subsection{Simulator}
\label{sec:simulator}
The \textit{Simulator} class consists of a \textit{Stats} object, \textit{Scheduler} object, lists of \textit{Host} and \textit{Task} objects. A \textit{Stats} object saves the utilization metrics of host and containers, allocation decisions, QoS parameters and workload characteristics for each scheduling interval. The \textit{Scheduler} object, as described earlier, uses the utilization metrics and QoS parameters to take an allocation decision for new workloads or migration decisions for active workloads. The migration time of a container is calculated by summing the networking latency with the time it takes to transfer the container based on its size and the network bandwidth. Naturally, if the container is on one of the edge or cloud layers and is migrated to the same layer, we assume no network latency, but if it is transferred across layers, we also include the latency.
At the end of each interval, the \textit{Stats} object calculates the QoS parameters like energy using the power models of each host and utilization metrics.   

\subsection{Framework}
\label{sec:framework}
The \textit{Framework} class consists of the objects in the \textit{Simulator} class with some additional objects for container orchestration. We use the Ansible framework~\cite{hochstein2017ansible} to autonomously provision, deploy and manage physical host machines in a fog environment. Each \textit{Task} object here is a Docker container. Docker is a container management platform as a service used to build, ship and run containers on physical or virtual environments~\cite{ahmed2018docker}. Moreover, the \textit{Host} objects of \textit{Framework} are physical computational nodes in the same Virtual Local Area Network (VLAN) as the server machine on which the Python program runs. The server communicates with each \textit{Host} via HTTP REST APIs~\cite{yates2015ensembl}. The REST server of the \textit{Framework} sends operation codes including \texttt{GetHostStats, GetContainerStats, Checkpoint, Migrate, Restore} to each host to instruct a specific activity. At every \textit{Host} machine, a \texttt{DockerClient} service runs with a \texttt{Flask} HTTP web-service~\cite{grinberg2018flask}. Flask is web application REST client which allows interfacing between \textit{Hosts} and the server. 

A \textit{Host} receives operation codes from the server and returns HTTP response. Before starting execution of containers, the server gets the utilization metric capacities from each host ($\mathcal{C}(h_i), \forall h_i \in H$). \texttt{GetHostStats} returns the host utilization metrics of host, \texttt{GetContainerStats} returns utilization metrics of each individual container running in the host. 
All migrations are performed in parallel using \texttt{Joblib} library in Python. The allocation and migrations are done in a non-blocking fashion, i.e. the execution of active containers is not halted when some tasks are being migrated. All utilization metrics are synchronized among devices via a time-series database, namely InfluxDB~\cite{naqvi2017time}.

\begin{table*}[]
    \centering
    \caption{Power consumption of Azure instances with percent CPU utilization.}
    \label{tab:host_powers}
    \begin{tabular}{@{}lccccccccccc@{}}
    \toprule 
    \multirow{2}{*}{Name} & \multicolumn{11}{c}{SPEC Power (Watts) for different CPU percentage usages}\tabularnewline
    \cmidrule{2-12}
     & 0\% & 10\% & 20\% & 30\% & 40\% & 50\% & 60\% & 70\% & 80\% & 90\% & 100\%\tabularnewline
    \midrule 
    Azure B2s server & 75.2 & 78.2 & 84.1 & 89.6 & 94.9 & 100.0 & 105.0 & 109.0 & 112.0 & 115.0 & 117.0\tabularnewline

    Azure B4ms server & 71.0 & 77.9 & 83.4 & 89.2 & 95.6 & 102.0 & 108.0 & 114.0 & 119.0 & 123.0 & 126.0\tabularnewline
    \midrule 
    Azure B4ms server & 71.0 & 77.9 & 83.4 & 89.2 & 95.6 & 102.0 & 108.0 & 114.0 & 119.0 & 123.0 & 126.0\tabularnewline

    Azure B8ms server & 68.7 & 78.3 & 84.0 & 88.4 & 92.5 & 97.3 & 104.0 & 111.0 & 121.0 & 131.0 & 137.0\tabularnewline
    \bottomrule 
    \end{tabular}
\end{table*}

\section{Gradient with respect to Input for other activations}

We present the gradients with respect to input for the $\textsf{softplus()}$ and $\textsf{tanhshrink()}$ activation functions.

\begin{theorem}[Gradient of $\textsf{softplus}()$]
\label{backpropinput}
For a linear layer with \textsf{{softplus}}() non-linearity that is defined as follows:
\[ f(x;W,b) = \textsf{softplus}(W\cdot x + b), \]
the derivative of this layer with respect to $x$ is given as:
\[ \nabla_x f(x;W,b) = W^T \times \textsf{sigmoid}(W\cdot x + b).\]
\end{theorem}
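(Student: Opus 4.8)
The plan is to mirror exactly the structure of the proof of Proposition~\ref{backpropinput} for the $\textsf{tanh}()$ case, substituting the known derivative of $\textsf{softplus}()$. First I would recall the elementary fact that $\textsf{softplus}(u) = \log(1 + e^u)$ and hence $\nabla_u \textsf{softplus}(u) = \frac{e^u}{1+e^u} = \textsf{sigmoid}(u)$. Then I would apply the chain rule to the composition $f(x;W,b) = \textsf{softplus}(W\cdot x + b)$, writing $\nabla_x f = \nabla_{W\cdot x + b}\,\textsf{softplus}(W\cdot x + b) \times \nabla_x(W\cdot x + b)$, and substitute $\nabla_x(W\cdot x + b) = W^T$ and $\nabla_{u}\,\textsf{softplus}(u) = \textsf{sigmoid}(u)$ evaluated at $u = W\cdot x + b$, giving $\nabla_x f(x;W,b) = W^T \times \textsf{sigmoid}(W\cdot x + b)$.

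Concretely, the proof would be a short aligned display:
\begin{align*}
    \nabla_x f(x;W,b) &= \nabla_x \textsf{softplus}(W\cdot x + b)\\
    &= \nabla_{W\cdot x + b}\,\textsf{softplus}(W\cdot x + b) \times \nabla_x (W\cdot x + b)\\
    &= \textsf{sigmoid}(W\cdot x + b) \times \nabla_x (W\cdot x + b)\\
    &= W^T \times \textsf{sigmoid}(W\cdot x + b).
\end{align*}

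There is essentially no obstacle here: the only thing to be slightly careful about is the ordering and transposition conventions (treating the elementwise activation's Jacobian as a diagonal matrix and placing $W^T$ on the left so the dimensions match, consistent with the statement of Proposition~\ref{backpropinput}), and recalling that $\frac{d}{du}\log(1+e^u) = \textsf{sigmoid}(u)$. If one wanted to be fully self-contained I would include the one-line verification of that derivative before invoking it; otherwise it can be cited as a standard fact exactly as $\nabla_x \textsf{tanh}(x) = 1 - \textsf{tanh}^2(x)$ was cited in the earlier proof. The $\textsf{tanhshrink}()$ case promised alongside it would proceed identically using $\textsf{tanhshrink}(u) = u - \textsf{tanh}(u)$ and $\nabla_u \textsf{tanhshrink}(u) = \textsf{tanh}^2(u)$.
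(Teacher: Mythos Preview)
Your proposal is correct and essentially identical to the paper's own proof: the paper cites $\nabla_x \textsf{softplus}(x) = \textsf{sigmoid}(x)$ as a well-known fact and then writes out exactly the same four-line chain-rule display you give. Your anticipated treatment of $\textsf{tanhshrink}()$ also matches the paper's subsequent proposition line for line.
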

\begin{proof}
It is well known that $\nabla_x \textsf{sofptlus}(x) = \textsf{sigmoid}(x)$, therefore,
\begin{align*}
    \nabla_x f(x;W,b) &= \nabla_x \textsf{softplus}(W\cdot x + b)\\
    &= \nabla_{W\cdot x + b} \textsf{softplus}(W\cdot x + b) \times \nabla_x (W\cdot x + b)\\
    &= \textsf{sigmoid}(W\cdot x + b) \times \nabla_x (W\cdot x + b)\\
    &= W^T \times \textsf{sigmoid}(W\cdot x + b) \qedhere
\end{align*}
\end{proof}

\begin{theorem}[Gradient of $\textsf{tanhshrink}()$]
\label{backpropinput}
For a linear layer with \textsf{tanhshrink}() non-linearity that is defined as follows:
\[ f(x;W,b) = \textsf{tanhshrink}(W\cdot x + b), \]
the derivative of this layer with respect to $x$ is given as:
\[ \nabla_x f(x;W,b) = W^T \times \textsf{sigmoid}(W\cdot x + b).\]
\end{theorem}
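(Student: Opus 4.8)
The plan is to follow exactly the template used for the $\textsf{tanh}()$ and $\textsf{softplus}()$ layers in the Propositions above: reduce the claim to (i) the scalar, elementwise derivative of the activation, and (ii) the derivative of the affine pre-activation $W\cdot x + b$ with respect to $x$, and then combine the two via the chain rule.

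First I would record the defining identity $\textsf{tanhshrink}(u) = u - \textsf{tanh}(u)$, which holds componentwise. Differentiating and using the standard fact $\nabla_u \textsf{tanh}(u) = 1 - \textsf{tanh}^2(u)$ already invoked in the proof of the $\textsf{tanh}$ Proposition, linearity of differentiation gives $\nabla_u \textsf{tanhshrink}(u) = 1 - (1 - \textsf{tanh}^2(u)) = \textsf{tanh}^2(u)$, again understood componentwise.

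Next I would substitute $u = W\cdot x + b$ and use $\nabla_x(W\cdot x + b) = W^T$, exactly as in the two companion Propositions. The chain rule then yields
\[ \nabla_x f(x;W,b) = \nabla_x \textsf{tanhshrink}(W\cdot x + b) = W^T \times \textsf{tanh}^2(W\cdot x + b), \]
where the product is the matrix--vector product of $W^T$ against the elementwise vector $\textsf{tanh}^2(W\cdot x + b)$.

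I do not expect a genuine obstacle here: the derivation is mechanical, and the only care needed is bookkeeping --- keeping the activation's action elementwise, placing the transpose on $W$ so the dimensions are consistent, and noting that the identity term in $\textsf{tanhshrink}$ supplies the $+1$ that cancels the $-1$ inside $\nabla_u\textsf{tanh}(u)$, leaving $\textsf{tanh}^2$ as the surviving factor. (I observe in passing that this makes the natural answer $W^T \times \textsf{tanh}^2(W\cdot x + b)$ rather than $W^T \times \textsf{sigmoid}(W\cdot x + b)$; if the $\textsf{sigmoid}$ form is what the statement intends, the two would need to be reconciled, but the computation above is the direct one.)
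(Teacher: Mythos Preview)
Your argument is correct and mirrors the paper's proof exactly: both use the identity $\textsf{tanhshrink}(u)=u-\textsf{tanh}(u)$ to get the elementwise derivative $\textsf{tanh}^2(u)$, then apply the chain rule with $\nabla_x(W\cdot x+b)=W^T$ to reach $W^T\times\textsf{tanh}^2(W\cdot x+b)$. Your parenthetical is also on target --- the paper's own proof concludes with $\textsf{tanh}^2$, not $\textsf{sigmoid}$, so the $\textsf{sigmoid}$ in the displayed statement is a typo.
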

\begin{proof}
As $\textsf{tanhshrink}(x) = x - \textsf{tanh}(x)$, thus
\begin{align*}
\nabla_x\textsf{tanhshrink}(x) &= 1 - \nabla_x \textsf{tanh}(x)\\
 &= 1 - (1 - \textsf{tanh}^2(x))\\
 &= \textsf{tanh}^2(x),
\end{align*}
therefore,
\begin{align*}
    \nabla_x f(x;W,b) &= \nabla_x \textsf{tanhshrink}(W\cdot x + b)\\
    =\ &\nabla_{W\cdot x + b} \textsf{tanhshrink}(W\cdot x + b) \times \nabla_x (W\cdot x + b)\\
    =\ &\textsf{tanh}^2(W\cdot x + b) \times \nabla_x (W\cdot x + b)\\
    =\ &W^T \times \textsf{tanh}^2(W\cdot x + b) \qedhere
\end{align*}
\end{proof}

\section{Training GOBI and GOBI* models}

\begin{figure}
    \centering
    \includegraphics[width=0.8\linewidth]{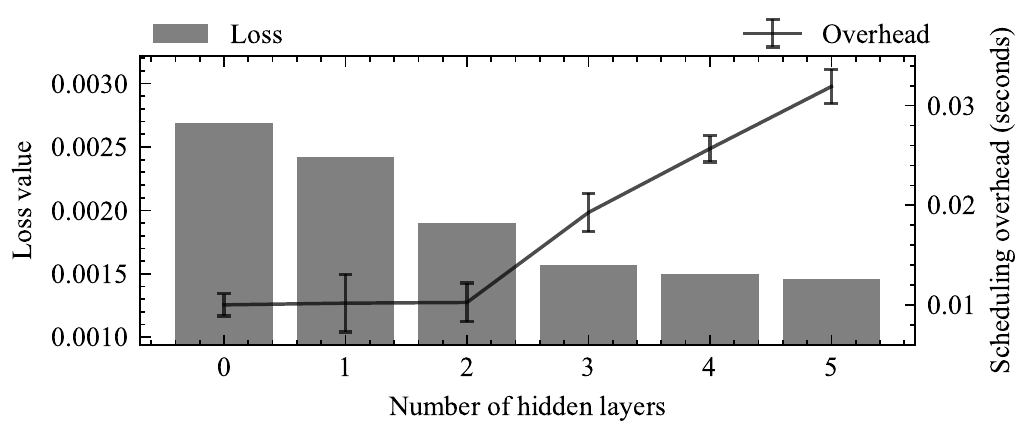}
    \caption{Loss and scheduling overhead as a function of the number of hidden layers.}
    \label{fig:layers}
\end{figure}

\subsection{Model search}

To determine the optimum number of layers in our feed-forward neural model, we analyze the scheduling overhead of the GOBI approach with the number of hidden layers. The scheduling overhead is calculated as the average time it takes to take a scheduling decision in our simulated setup. As shown in Figure~\ref{fig:layers}, the value of the loss function decreases with the increase in the number of layers of the neural network. This is expected because as the number of layers increase so do the number of parameters and thus the ability of the network to fit more complex functions becomes better. The scheduling overhead depends on the system on which the simulation is run (the fog broker), and for the current experiments, the system used had CPU - Intel i7-10700K and GPU - Nvidia GTX 1060 graphics card (4GB graphics RAM). As shown in the figure, there is an inflection point at 2 hidden layers because a network with 2 or more such layers could not fit with the input batch in the GPU graphics RAM and had to be run on CPU, increasing inference times. Based on the available simulation infrastructure, for the comparison with baseline algorithms, we use the neural network with 2 hidden layers based on the above described grid search approach.

\begin{figure}[t]
    \centering
    \subfigure[GOBI]{
    \includegraphics[width=.23\textwidth]{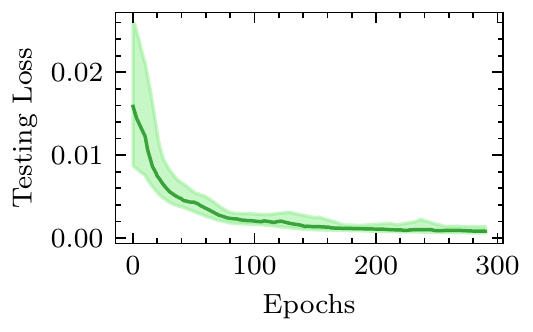}
    \label{fig:s_gobi}
    }
    \subfigure[GOBI*]{
    \includegraphics[width=.23\textwidth]{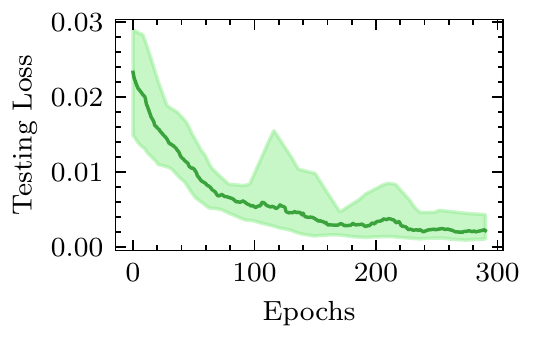}
    \label{fig:s_gobi2}
    }
    \caption{Training graphs for simulated setup with 50 nodes.}
    \label{fig:simulator_training}
\end{figure}

\begin{figure}[t]
    \centering
    \subfigure[GOBI]{
    \includegraphics[width=.23\textwidth]{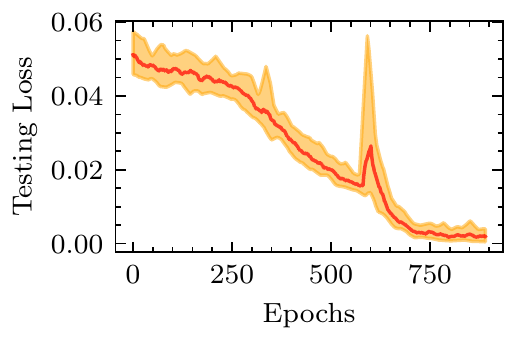}
    \label{fig:s_gobi}
    }
    \subfigure[GOBI*]{
    \includegraphics[width=.23\textwidth]{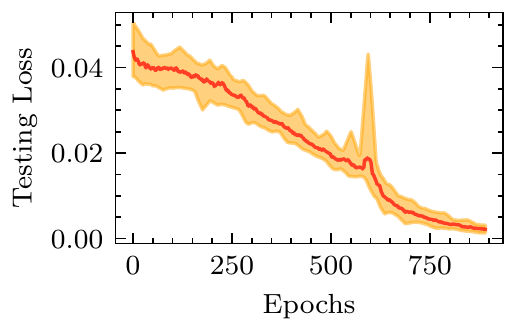}
    \label{fig:s_gobi2}
    }
    \caption{Training graphs for framework setup with 10 nodes.}
    \label{fig:framework_training}
\end{figure}

\begin{figure*}
    \centering
    \subfigure[Average CPU Utilization of hosts]{
    \includegraphics[width=.23\textwidth]{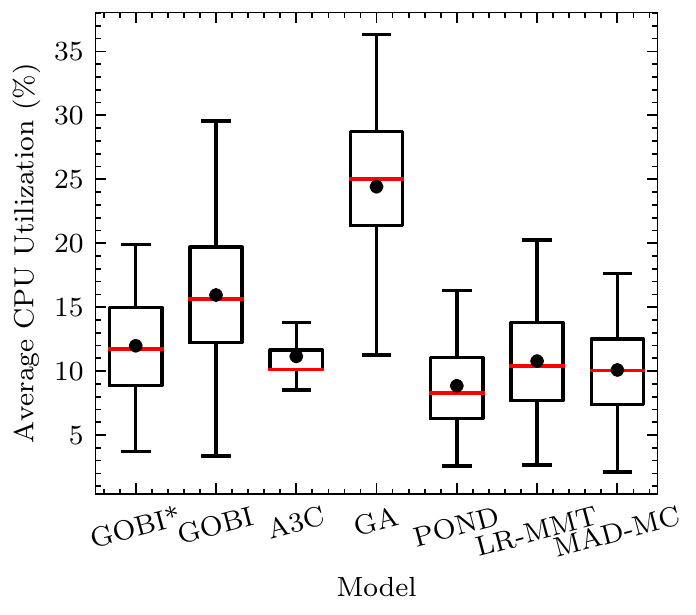}
    \label{fig:f_cpu}
    }
    \subfigure[Average Number of active containers per interval]{
    \includegraphics[width=.23\textwidth]{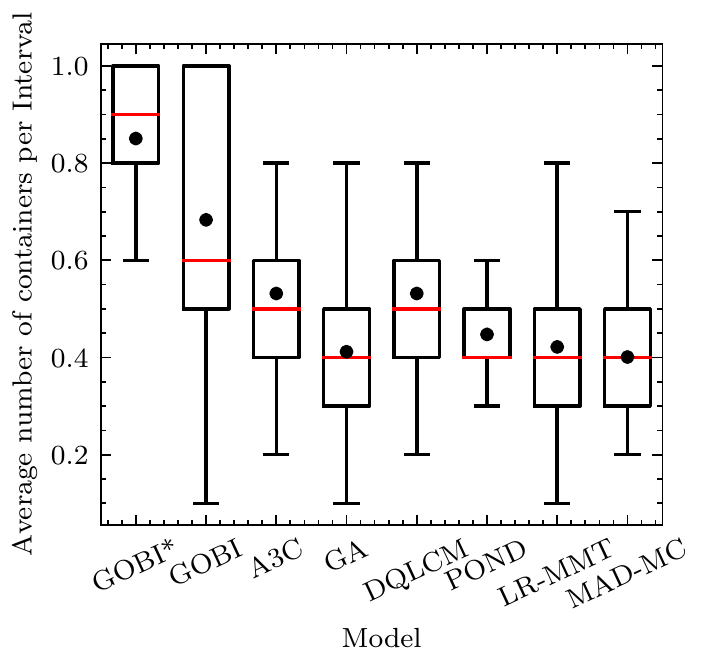}
    \label{fig:f_containers}
    }
    \subfigure[Average Cost per container (US Dollars)]{
    \includegraphics[width=.23\textwidth]{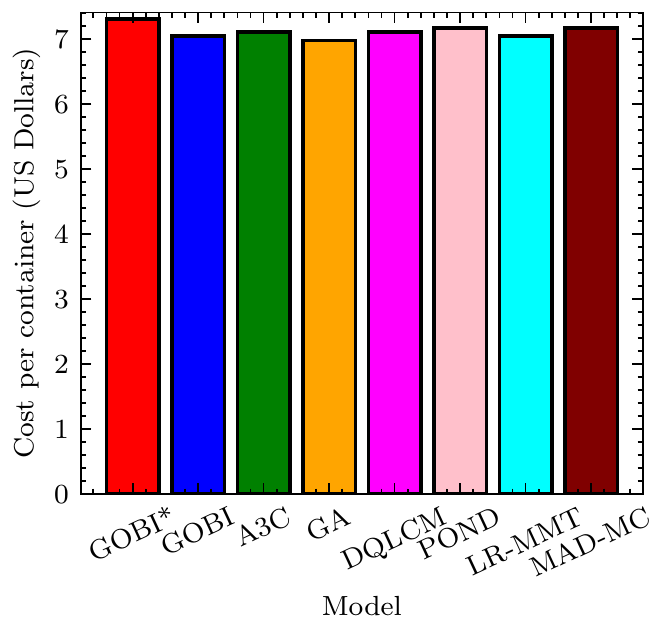}
    \label{fig:f_cost}
    }
    \subfigure[Fairness per application]{
    \includegraphics[width=.23\textwidth]{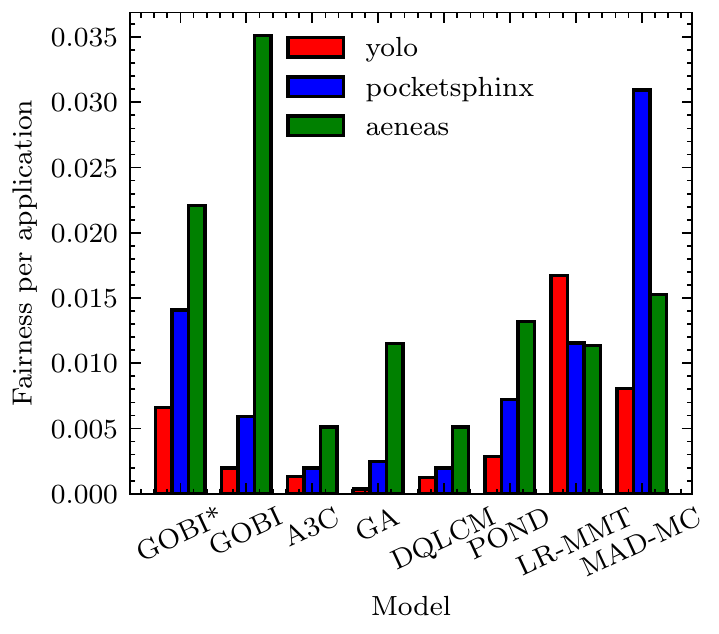}
    \label{fig:f_fairness_pa}
    }\\
    \subfigure[Average number of migrations]{
    \includegraphics[width=.23\textwidth]{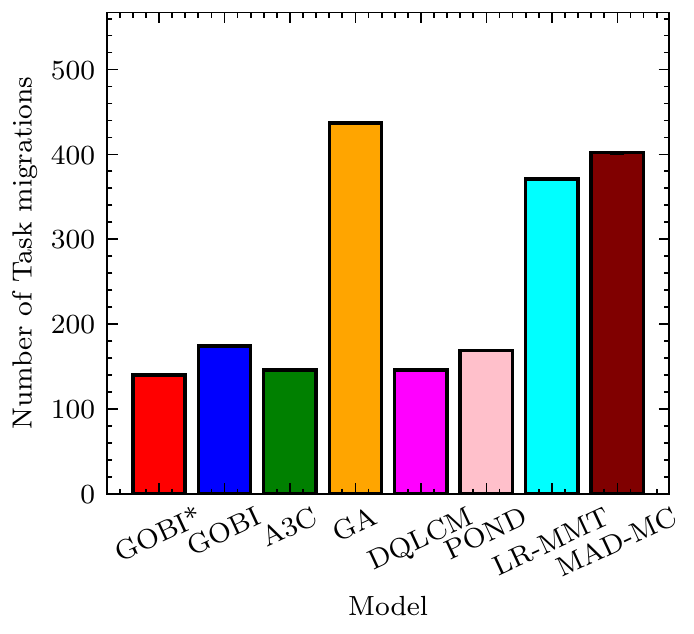}
    \label{fig:f_migrations}
    }
    \subfigure[Average Migration Time with execution time]{
    \includegraphics[width=.23\textwidth]{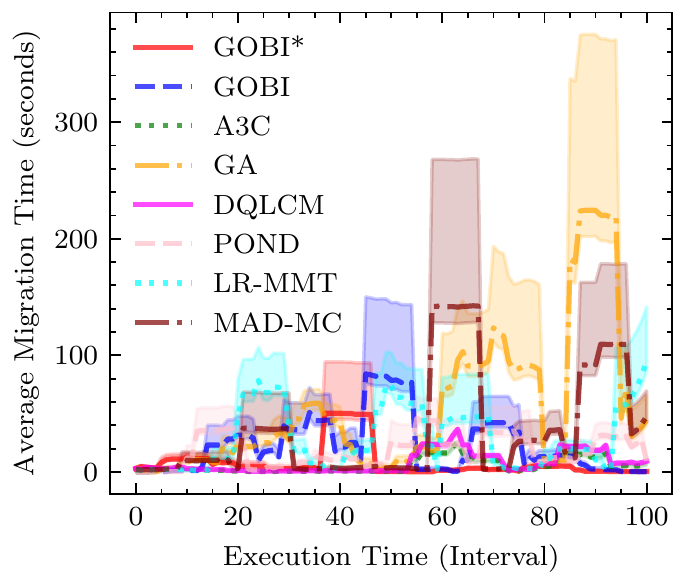}
    \label{fig:f_migrations_time}
    }
    \subfigure[Average Wait Time with execution time]{
    \includegraphics[width=.23\textwidth]{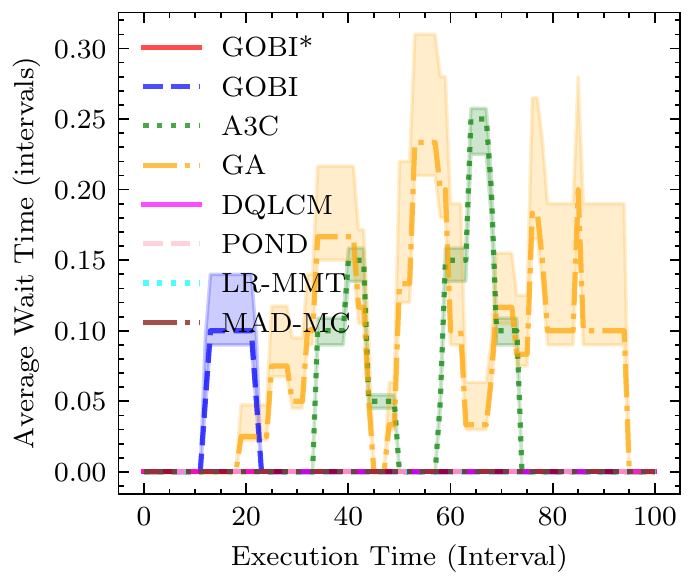}
    \label{fig:f_wait_time}
    }
    \subfigure[Average Response Time with execution time]{
    \includegraphics[width=.23\textwidth]{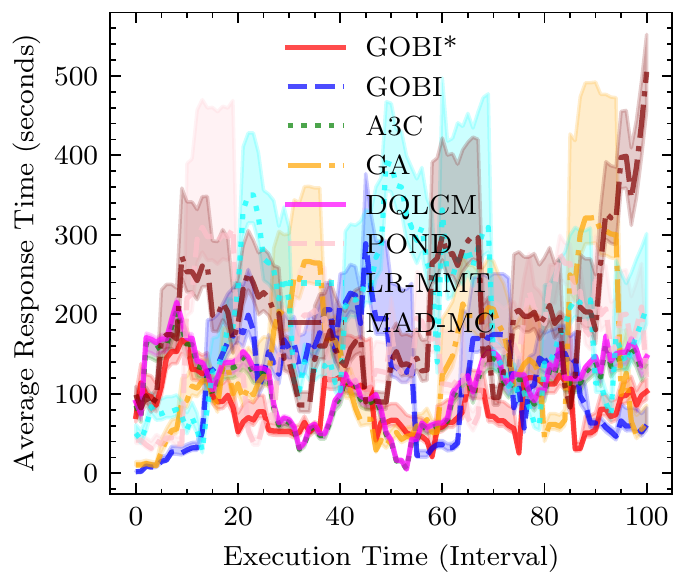}
    \label{fig:f_response_time}
    }
    \caption{Comparison of GOBI and GOBI* against baselines on framework with 10 hosts}
    \label{fig:framework_results2}
\end{figure*}

\begin{figure*}
    \centering
    \subfigure[Average CPU Utilization of hosts]{
    \includegraphics[width=.23\textwidth]{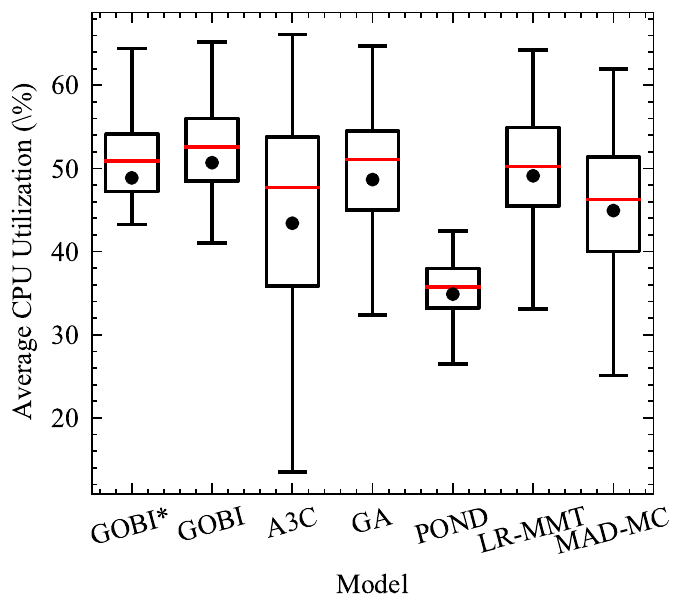}
    \label{fig:s_cpu}
    }
    \subfigure[Average Number of active containers with simulation time]{
    \includegraphics[width=.23\textwidth]{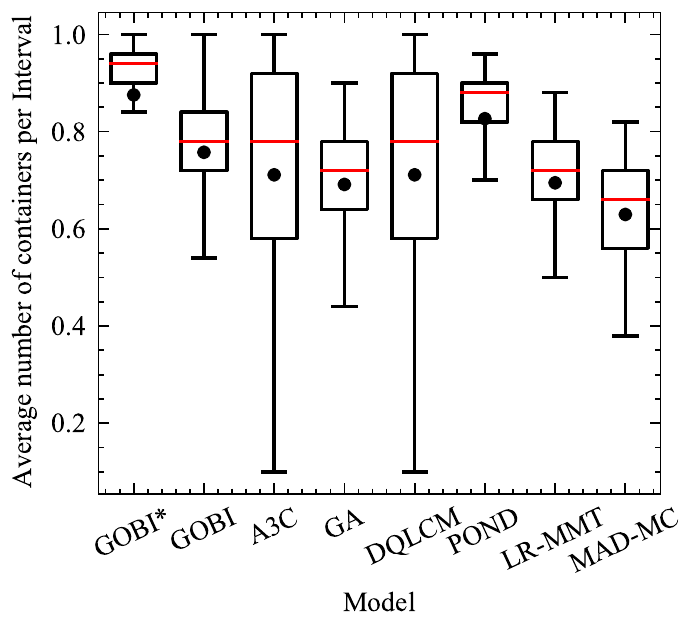}
    \label{fig:s_containers}
    }
    \subfigure[Average Cost per container (US Dollars)]{
    \includegraphics[width=.23\textwidth]{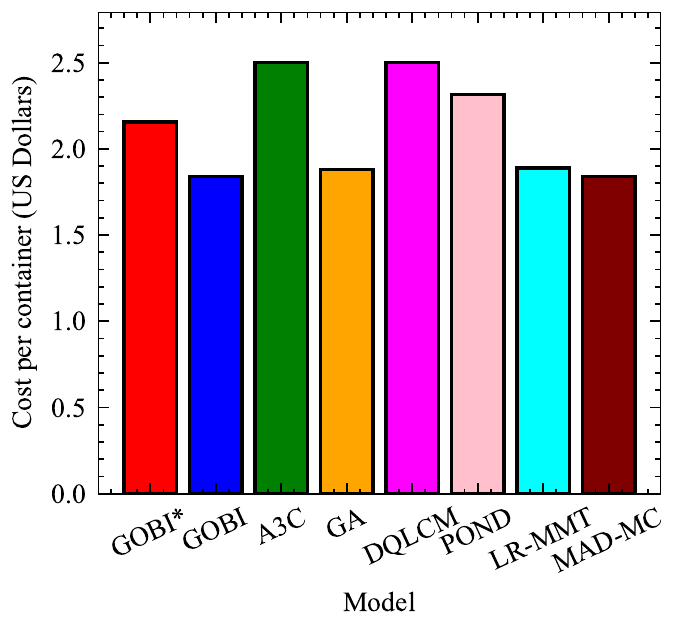}
    \label{fig:s_cost}
    }
    \subfigure[Average Scheduling Time]{
    \includegraphics[width=.23\textwidth]{images/simulator/Bar-Scheduling_Time__seconds_.pdf}
    \label{fig:s_scheduling}
    }\\
    \subfigure[Average number of migrations]{
    \includegraphics[width=.23\textwidth]{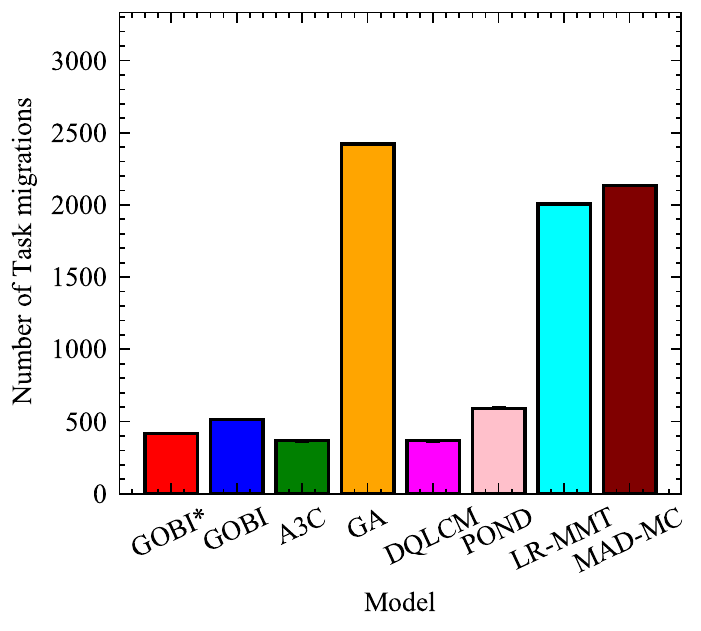}
    \label{fig:s_migrations}
    }
    \subfigure[Average Migration Time with simulation time]{
    \includegraphics[width=.23\textwidth]{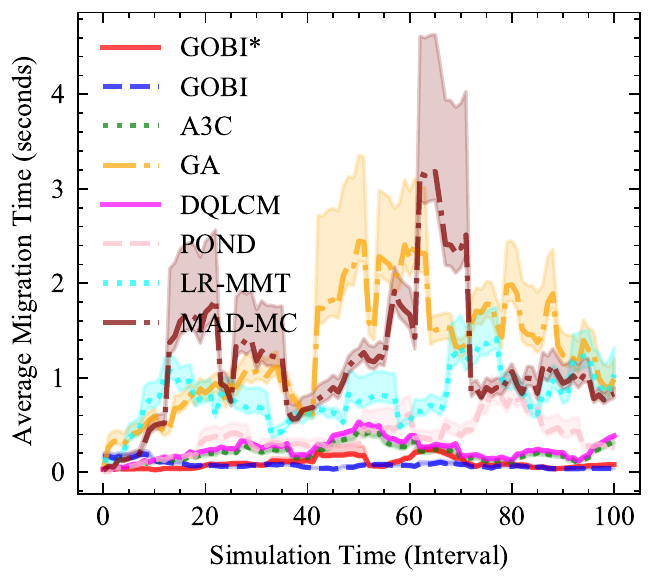}
    \label{fig:s_migrations_time}
    }
    \subfigure[Average Wait Time with simulation time]{
    \includegraphics[width=.23\textwidth]{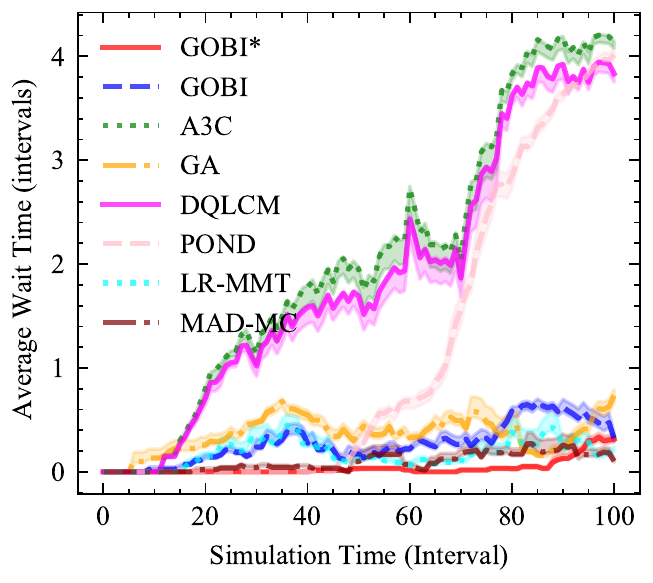}
    \label{fig:s_wait_time}
    }
    \subfigure[Average Response Time with simulation time]{
    \includegraphics[width=.23\textwidth]{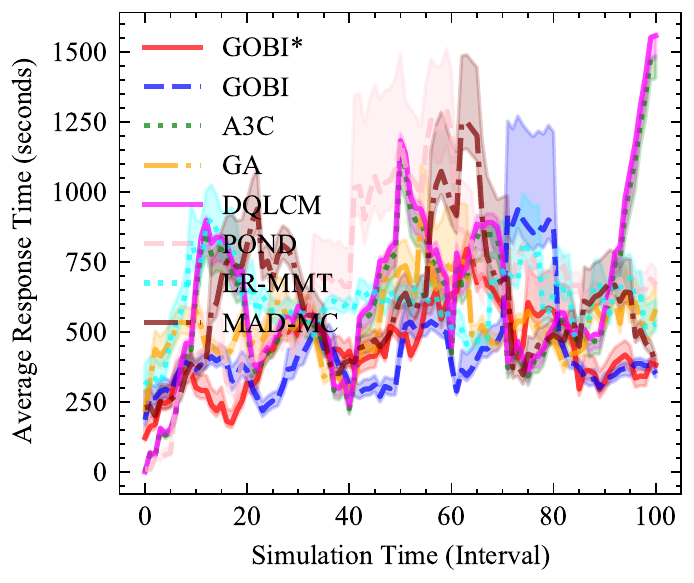}
    \label{fig:s_response_time}
    }
    \caption{Comparison of GOBI and GOBI* against baselines on simulator with 50 hosts}
    \label{fig:simulator_results2}
\end{figure*}

\subsection{Model Training}
To train the neural approximator $f$ of the GOBI model, we first form a dataset by running a random scheduler which gives us data of the form $\Lambda = \{ [\phi(A_{t-1}), \phi(H^{t-1}), \phi(\mathcal{D})], \mathcal{O}(P_t) \}_t$. The random scheduler selects a random number of tasks at any interval $I_t$ and decides to migrate them to any host uniformly at random. Using this, we train our model $f(x; \theta)$ such that $\theta = \argmin_{\hat{\theta}}{\sum_{(x,\phi(\mathcal{D}))\in\Lambda} [ \mathcal{L}(f(x;\hat{\theta}), \phi(\mathcal{D}))]}$. We split the dataset into $80\%$ as training data and the rest $20\%$ for testing. We use Mean Square Error (MSE) as the loss function with AdamW optimizer. We use learning rate of $10^{-3}$ and weight decay of $10^{-5}$. The size of $\Lambda$ dataset to train the GOBI model was $2000$ for both cases of $10$ and $50$ hosts.

To train the neural approximator $f^*$ of the GOBI* model, we first train LSTM models to predict the utilization metrics in $I_t$ using the utilization values of the previous intervals. Using the decision output from the GOBI model and the predicted utilization metrics, we simulate and get an estimate of $I_t$'s QoS parameters. Adding these in the dataset, we now have $\Lambda^* = \{ [\phi(A_{t-1}), \phi(H^{t-1}), \mathcal{O}(\bar{P}_t), \phi(\mathcal{D})] \}$, which we use to train $f^*(x, \theta^*)$. Figures~\ref{fig:simulator_training} and \ref{fig:framework_training} show the loss on the test set with epochs for different models.

As dataset generation from event-driven simulator is much faster than actual execution on physical systems, to train models for the framework, we first pre-train using a dataset of size $2000$ (generated using simulations) and then fine-tune using a smaller dataset generated by running on physical setup of size $200$. We train on the dataset generated using the simulator for $600$ epochs and till convergence on the dataset generated on the physical setup. This accounts for the sudden change in the loss values seen in Figure~\ref{fig:framework_training} at the $600$-th epoch.

\section{Power Models of Azure Hosts}

The power consumption values of increments of 10\% CPU utilization of Azure virtual machine types are taken from Standard Performance Evaluation Corporation (SPEC) benchmark repository \texttt{https://www.spec.org/cloud\_iaas2018/results/} and are shown in Table~\ref{tab:host_powers}. For intermediate CPU utilization values we use linear interpolation to calculate power consumption.

\section{Additional Results}

Figures~\ref{fig:framework_results2} and \ref{fig:simulator_results2} show additional results on the azure testbed (with 10 hosts) and simulation environment (with 50 hosts). We also show the \textit{Average Cost per container} given as \[\frac{\sum_t \sum_{h_i\in H} \int_{t = t_{i}}^{t_{i+1}} C_{h_i}(t)dt}{\sum_t |L_t|},\]where $C_{h_i}(t)$ is the cost function of host $h_i$ at instant $t$.

Figures~\ref{fig:f_cpu} and \ref{fig:s_cpu} show the mean CPU utilization of host machines averaged across the whole execution duration. We see that the CPU utilization is minimum for the \textit{POND} baseline with average at $8.94\%$. Both \textit{GOBI} and \textit{GOBI*} have a higher average CPU utilization by $2.78\%-6.12\%$. Figures~\ref{fig:f_containers} and \ref{fig:s_containers} show the average number of active containers for the various scheduling approaches. GOBI and GOBI*, in both cases, have the highest number of active containers. This is due to the efficient placement of these two approaches capable of executing multiple containers together. Figures~\ref{fig:f_cost} and \ref{fig:s_cost} show the average execution cost of all policies. For 10 hosts, all policies have negligible difference in the average cost per container. For 50 hosts, GOBI* and GOBI have $24.7\%$ and $26.3\%$ lower cost compared to the \textit{A3C} baseline, with GOBI having the least average cost among all policies.

\begin{figure}
    \centering
    \subfigure[CPU utilization (\%) of active containers with time]{
    \includegraphics[width=0.95\columnwidth]{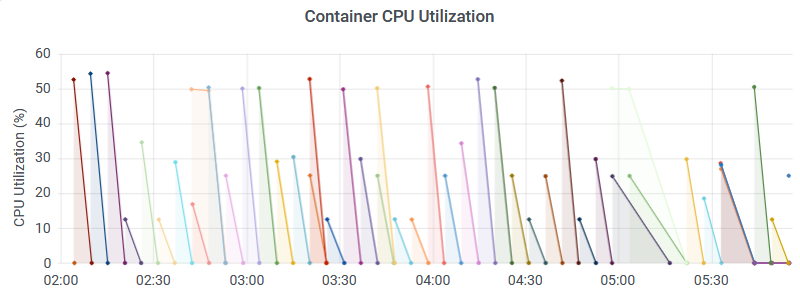}
    \label{fig:grafana_1}
    }
    \subfigure[CPU utilization (\%) of hosts with time]{
    \includegraphics[width=.95\columnwidth]{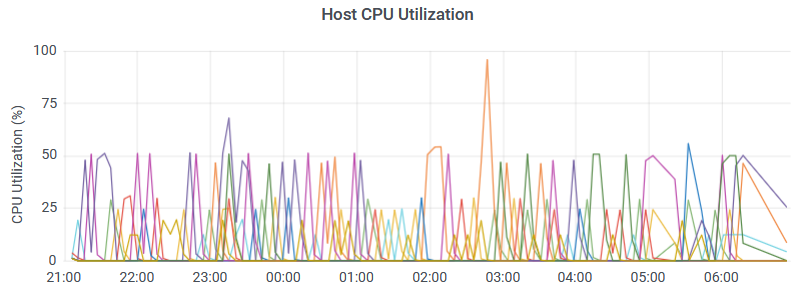}
    \label{fig:grafana_2}
    }
    \caption{Visualizing utilization metrics using the Grafana toolkit.}
    \label{fig:grafana}
\end{figure}

Figure~\ref{fig:f_fairness_pa} show the fairness metric for each application for all policies. The \textit{A3C} baseline is the least fair among all policies with the Jain's indices for each application being in the range $0.01-0.05$. Some other observations include \textit{MAD-MC} being fair to the Pocketsphinx application and \textit{GOBI} being fair to the Aeneas application. Average fairness index across applications is highest for the \textit{GOBI*} approach being $0.034$. Figures~\ref{fig:f_migrations} and \ref{fig:s_migrations} show the average number of container migrations for each policy. GOBI and GOBI* have consistently low number of migrations ($140-174$). GOBI* has the least average migration count in the case of 10 hosts ($146$). However, the \textit{A3C} baseline has the least migrations in the case of 50 hosts ($486$). Figures~\ref{fig:f_migrations_time} and \ref{fig:s_migrations_time} show the variation of average migration time with time. Migration times are lowest for \textit{GOBI}, \textit{GOBI*} and heuristic based methods. Figures~\ref{fig:f_wait_time} and \ref{fig:s_wait_time} show the variation of average wait time for tasks with time. Wait times are fairly stable for all policies except \textit{POND}, \textit{A3C} and \textit{DQLCM}. This is because \textit{POND} fails to converge to solutions frequently after 60-th interval. The reinforcement learning approaches, \textit{A3C} and \textit{DQLCM} are unable to quickly adapt to volatile workloads in relatively larger scale setup and hence tend to converge to infeasible migration decision (like migrating tasks to a host running at capacity). This leads to gradually increasing wait times for tasks in the simulated environment with 50 nodes. These effects are not as pronounced in the physical setup with only 10 hosts (Figure~\ref{fig:f_wait_time}), Figures~\ref{fig:f_response_time} and \ref{fig:s_response_time} show the variation of average response time for tasks with time. Clearly, the response times are highly volatile due to the non-stationary nature of the workloads. Overall, \textit{GOBI} and \textit{GOBI*} have the least overall response times among all policies.

\section{Visualization of Utilization Metrics}

We used the Grafana toolkit~\cite{grafana} to visualize the utilization metrics and check the volatility of the workloads running in the fog environment. These graphs are dynamically generated from the data in the InfluxDB~\cite{naqvi2017time} database. Screenshots of a few workload visualizations are shown in Figure~\ref{fig:grafana}. The figure demonstrates the significant volatility in the resource consumption of containers across a few hours.

\bibliographystyle{IEEEtran}
\bibliography{ref}

\begin{thebibliography}{10}
\providecommand{\url}[1]{#1}
\csname url@samestyle\endcsname
\providecommand{\newblock}{\relax}
\providecommand{\bibinfo}[2]{#2}
\providecommand{\BIBentrySTDinterwordspacing}{\spaceskip=0pt\relax}
\providecommand{\BIBentryALTinterwordstretchfactor}{4}
\providecommand{\BIBentryALTinterwordspacing}{\spaceskip=\fontdimen2\font plus
\BIBentryALTinterwordstretchfactor\fontdimen3\font minus
  \fontdimen4\font\relax}
\providecommand{\BIBforeignlanguage}[2]{{%
\expandafter\ifx\csname l@#1\endcsname\relax
\typeout{** WARNING: IEEEtran.bst: No hyphenation pattern has been}%
\typeout{** loaded for the language `#1'. Using the pattern for}%
\typeout{** the default language instead.}%
\else
\language=\csname l@#1\endcsname
\fi
#2}}
\providecommand{\BIBdecl}{\relax}
\BIBdecl

\bibitem{gill2019transformative}
S.~S. Gill, S.~Tuli \emph{et~al.}, ``{Transformative effects of IoT, Blockchain
  and Artificial Intelligence on cloud computing: Evolution, vision, trends and
  open challenges},'' \emph{Internet of Things}, vol.~8, pp. 100--118.

\bibitem{toor2019energy}
A.~Toor, S.~ul~Islam \emph{et~al.}, ``Energy efficient edge-of-things,''
  \emph{EURASIP Journal on Wireless Communications and Networking}, vol. 2019,
  no.~1, p.~82, 2019.

\bibitem{tuli2020dynamic}
S.~Tuli, S.~Ilager \emph{et~al.}, ``{Dynamic Scheduling for Stochastic
  Edge-Cloud Computing Environments using A3C learning and Residual Recurrent
  Neural Networks},'' \emph{IEEE TMC}, 2020.

\bibitem{wang2020dyverse}
N.~Wang, M.~Matthaiou \emph{et~al.}, ``{DYVERSE: dynamic vertical scaling in
  multi-tenant edge environments},'' \emph{Future Generation Computer Systems},
  vol. 108, pp. 598--612, 2020.

\bibitem{basu2019learn}
D.~Basu, X.~Wang \emph{et~al.}, ``Learn-as-you-go with megh: Efficient live
  migration of virtual machines,'' \emph{IEEE Transactions on Parallel and
  Distributed Systems}, vol.~30, no.~8, pp. 1786--1801, 2019.

\bibitem{han2018fog}
K.~Han, Z.~Xie \emph{et~al.}, ``Fog computing task scheduling strategy based on
  improved genetic algorithm,'' \emph{Computer Science}, vol.~4, p.~22, 2018.

\bibitem{pham2017cost}
X.-Q. Pham, N.~D. Man \emph{et~al.}, ``A cost-and performance-effective
  approach for task scheduling based on collaboration between cloud and fog
  computing,'' \emph{International Journal of Distributed Sensor Networks},
  vol.~13, no.~11, p. 1550147717742073, 2017.

\bibitem{skarlat2017optimized}
O.~Skarlat, M.~Nardelli \emph{et~al.}, ``{Optimized IoT service placement in
  the fog},'' \emph{Service Oriented Computing and Applications}, vol.~11,
  no.~4, pp. 427--443, 2017.

\bibitem{beloglazov2012optimal}
A.~Beloglazov and R.~Buyya, ``Optimal online deterministic algorithms and
  adaptive heuristics for energy and performance efficient dynamic
  consolidation of virtual machines in cloud data centers,'' \emph{CCPE},
  vol.~24, no.~13, pp. 1397--1420, 2012.

\bibitem{choudhari2018prioritized}
T.~Choudhari, M.~Moh \emph{et~al.}, ``Prioritized task scheduling in fog
  computing,'' in \emph{The ACMSE 2018 Conference}, 2018, pp. 1--8.

\bibitem{fox2019learning}
G.~Fox, J.~Glazier \emph{et~al.}, ``Learning everywhere: Pervasive machine
  learning for effective high-performance computation,'' in \emph{2019 IEEE
  IPDPSW}.\hskip 1em plus 0.5em minus 0.4em\relax IEEE, 2019, pp. 422--429.

\bibitem{liu2020collaborative}
S.~Liu and N.~Wang, ``Collaborative optimization scheduling of cloud service
  resources based on improved genetic algorithm,'' \emph{IEEE Access}, vol.~8,
  pp. 150\,878--150\,890, 2020.

\bibitem{johnson2000accuracy}
V.~M. Johnson and L.~L. Rogers, ``Accuracy of neural network approximators in
  simulation-optimization,'' \emph{Journal of Water Resources Planning and
  Management}, vol. 126, no.~2, pp. 48--56, 2000.

\bibitem{villarrubia2018artificial}
G.~Villarrubia, J.~F. De~Paz \emph{et~al.}, ``Artificial neural networks used
  in optimization problems,'' \emph{Neurocomputing}, vol. 272, pp. 10--16,
  2018.

\bibitem{rios2013derivative}
L.~M. Rios and N.~V. Sahinidis, ``Derivative-free optimization: a review of
  algorithms and comparison of software implementations,'' \emph{Journal of
  Global Optimization}, vol.~56, no.~3, pp. 1247--1293, 2013.

\bibitem{bogolubsky2016learning}
L.~Bogolubsky, P.~Dvurechenskii \emph{et~al.}, ``Learning supervised pagerank
  with gradient-based and gradient-free optimization methods,'' in
  \emph{NeurIPS}, 2016, pp. 4914--4922.

\bibitem{kingma2014adam}
D.~P. Kingma and J.~Ba, ``Adam: A method for stochastic optimization,''
  \emph{arXiv preprint arXiv:1412.6980}, 2014.

\bibitem{pan2015annealed}
H.~Pan and H.~Jiang, ``Annealed gradient descent for deep learning,'' in
  \emph{The Thirty-First Conference on Uncertainty in Artificial Intelligence},
  2015, pp. 652--661.

\bibitem{loshchilov2016sgdr}
I.~Loshchilov and F.~Hutter, ``{SGDR: Stochastic gradient descent with warm
  restarts},'' \emph{arXiv preprint arXiv:1608.03983}, 2016.

\bibitem{scarselli1998universal}
F.~Scarselli and A.~C. Tsoi, ``Universal approximation using feedforward neural
  networks: A survey of some existing methods, and some new results,''
  \emph{Neural networks}, vol.~11, no.~1, pp. 15--37, 1998.

\bibitem{nguyen1999approximation}
T.~Nguyen-Thien and T.~Tran-Cong, ``Approximation of functions and their
  derivatives: A neural network implementation with applications,''
  \emph{Applied Mathematical Modelling}, vol.~23, no.~9, pp. 687--704, 1999.

\bibitem{rakitianskaia2015measuring}
A.~Rakitianskaia and A.~Engelbrecht, ``Measuring saturation in neural
  networks,'' in \emph{2015 IEEE Symposium Series on Computational
  Intelligence}.\hskip 1em plus 0.5em minus 0.4em\relax IEEE, 2015, pp.
  1423--1430.

\bibitem{gomes2018co}
C.~Gomes, C.~Thule \emph{et~al.}, ``Co-simulation: a survey,'' \emph{ACM
  Computing Surveys (CSUR)}, vol.~51, no.~3, pp. 1--33, 2018.

\bibitem{rehman2019cloud}
K.~Rehman, O.~Kipouridis \emph{et~al.}, ``{A cloud-based development
  environment using HLA and kubernetes for the co-simulation of a corporate
  electric vehicle fleet},'' in \emph{2019 IEEE/SICE International Symposium on
  System Integration (SII)}.\hskip 1em plus 0.5em minus 0.4em\relax IEEE, 2019,
  pp. 47--54.

\bibitem{onggo2018symbiotic}
B.~S. Onggo, N.~Mustafee \emph{et~al.}, ``Symbiotic simulation system: hybrid
  systems model meets big data analytics,'' in \emph{2018 Winter Simulation
  Conference (WSC)}.\hskip 1em plus 0.5em minus 0.4em\relax IEEE, 2018, pp.
  1358--1369.

\bibitem{nashaat2019smart}
H.~Nashaat, N.~Ashry \emph{et~al.}, ``Smart elastic scheduling algorithm for
  virtual machine migration in cloud computing,'' \emph{The Journal of
  Supercomputing}, vol.~75, no.~7, pp. 3842--3865, 2019.

\bibitem{samanta2019battle}
A.~Samanta, Y.~Li \emph{et~al.}, ``Battle of microservices: Towards
  latency-optimal heuristic scheduling for edge computing,'' in \emph{IEEE
  NetSoft}.\hskip 1em plus 0.5em minus 0.4em\relax IEEE, 2019, pp. 223--227.

\bibitem{wang2019empowering}
J.~Wang, N.~Wu \emph{et~al.}, ``{Empowering A* search algorithms with neural
  networks for personalized route recommendation},'' in \emph{The 25th ACM
  SIGKDD International Conference on Knowledge Discovery \& Data Mining}, 2019,
  pp. 539--547.

\bibitem{wang2020effective}
Y.~Wang, S.~Wang \emph{et~al.}, ``An effective adaptive adjustment method for
  service composition exception handling in cloud manufacturing,''
  \emph{Journal of Intelligent Manufacturing}, pp. 1--17, 2020.

\bibitem{tuli2020ithermofog}
S.~Tuli, S.~S. Gill \emph{et~al.}, ``{iThermoFog: IoT-Fog based automatic
  thermal profile creation for cloud data centers using artificial intelligence
  techniques},'' \emph{Internet Technology Letters}, vol.~3, no.~5, p. e198,
  2020.

\bibitem{liu2020pond}
X.~Liu, B.~Li \emph{et~al.}, ``{POND: Pessimistic-Optimistic oNline
  Dispatch},'' \emph{arXiv preprint arXiv:2010.09995}, 2020.

\bibitem{krishnasamy2018augmenting}
S.~Krishnasamy, P.~Akhil \emph{et~al.}, ``Augmenting max-weight with explicit
  learning for wireless scheduling with switching costs,'' \emph{IEEE/ACM
  Transactions on Networking}, vol.~26, pp. 2501--2514, 2018.

\bibitem{zhang2018double}
Q.~Zhang, M.~Lin \emph{et~al.}, ``{A double deep Q-learning model for
  energy-efficient edge scheduling},'' \emph{IEEE Transactions on Services
  Computing}, vol.~12, no.~5, pp. 739--749, 2018.

\bibitem{gazori2019saving}
``{}saving time and cost on the scheduling of fog-based iot applications using
  deep reinforcement learning approach.''

\bibitem{tang2018migration}
Z.~Tang, X.~Zhou \emph{et~al.}, ``Migration modeling and learning algorithms
  for containers in fog computing,'' \emph{IEEE Transactions on Services
  Computing}, vol.~12, no.~5, pp. 712--725, 2018.

\bibitem{ghosal2020deep}
G.~R. Ghosal, D.~Ghosal \emph{et~al.}, ``A deep deterministic policy gradient
  based network scheduler for deadline-driven data transfers,'' in \emph{2020
  IFIP Networking Conference}.\hskip 1em plus 0.5em minus 0.4em\relax IEEE,
  2020, pp. 253--261.

\bibitem{bae2019beyond}
J.~Bae, J.~Lee \emph{et~al.}, ``Beyond max-weight scheduling: A reinforcement
  learning-based approach,'' in \emph{2019 International Symposium on Modeling
  and Optimization in Mobile, Ad Hoc, and Wireless Networks (WiOPT)}.\hskip 1em
  plus 0.5em minus 0.4em\relax IEEE, 2019, pp. 1--8.

\bibitem{van2009instability}
P.~van~de Ven, S.~Borst \emph{et~al.}, ``Instability of maxweight scheduling
  algorithms,'' in \emph{IEEE INFOCOM 2009}.\hskip 1em plus 0.5em minus
  0.4em\relax IEEE, 2009, pp. 1701--1709.

\bibitem{van2013inefficiency}
P.~M. van~de Ven, S.~C. Borst \emph{et~al.}, ``Inefficiency of maxweight
  scheduling in spatial wireless networks,'' \emph{Computer Communications},
  vol.~36, no.~12, pp. 1350--1359, 2013.

\bibitem{markakis2013max}
M.~G. Markakis, E.~Modiano \emph{et~al.}, ``Max-weight scheduling in queueing
  networks with heavy-tailed traffic,'' \emph{IEEE Transactions on networking},
  vol.~22, no.~1, pp. 257--270, 2013.

\bibitem{auer2002finite}
P.~Auer, N.~Cesa-Bianchi \emph{et~al.}, ``Finite-time analysis of the
  multiarmed bandit problem,'' \emph{Machine learning}, vol.~47, no. 2-3, pp.
  235--256, 2002.

\bibitem{nandi2001artificial}
S.~Nandi, S.~Ghosh \emph{et~al.}, ``Artificial neural-network-assisted
  stochastic process optimization strategies,'' \emph{AIChE journal}, vol.~47,
  no.~1, pp. 126--141, 2001.

\bibitem{tuli2019fogbus}
S.~Tuli, R.~Mahmud \emph{et~al.}, ``{FogBus: A blockchain-based lightweight
  framework for edge and fog computing},'' \emph{Journal of Systems and
  Software}, vol. 154, pp. 22--36, 2019.

\bibitem{bosmans2019testing}
S.~Bosmans, S.~Mercelis \emph{et~al.}, ``{Testing IoT systems using a hybrid
  simulation based testing approach},'' \emph{Computing}, vol. 101, no.~7, pp.
  857--872, 2019.

\bibitem{gupta2017ifogsim}
H.~Gupta, A.~Vahid~Dastjerdi \emph{et~al.}, ``{iFogSim: A toolkit for modeling
  and simulation of resource management techniques in the Internet of Things,
  Edge and Fog computing environments},'' \emph{Software: Practice and
  Experience}, vol.~47, no.~9, pp. 1275--1296, 2017.

\bibitem{mahmoud2018towards}
M.~M. Mahmoud, J.~J. Rodrigues \emph{et~al.}, ``Towards energy-aware
  fog-enabled cloud of things for healthcare,'' \emph{Computers \& Electrical
  Engineering}, vol.~67, pp. 58--69, 2018.

\bibitem{mutlag2019enabling}
A.~A. Mutlag, M.~K. Abd~Ghani \emph{et~al.}, ``Enabling technologies for fog
  computing in healthcare iot systems,'' \emph{Future Generation Computer
  Systems}, vol.~90, pp. 62--78, 2019.

\bibitem{grinberg2018flask}
M.~Grinberg, \emph{Flask web development: developing web applications with
  python}.\hskip 1em plus 0.5em minus 0.4em\relax " O'Reilly Media, Inc.",
  2018.

\bibitem{venkatesh2019fast}
R.~S. Venkatesh, T.~Smejkal \emph{et~al.}, ``{Fast in-memory CRIU for docker
  containers},'' in \emph{The International Symposium on Memory Systems}, 2019,
  pp. 53--65.

\bibitem{gupta2017resource}
S.~Gupta and D.~A. Dinesh, ``Resource usage prediction of cloud workloads using
  deep bidirectional long short term memory networks,'' in \emph{2017 IEEE
  International Conference on Advanced Networks and Telecommunications Systems
  (ANTS)}.\hskip 1em plus 0.5em minus 0.4em\relax IEEE, 2017, pp. 1--6.

\bibitem{tuli2020healthfog}
S.~Tuli, N.~Basumatary \emph{et~al.}, ``{HealthFog: An ensemble deep learning
  based smart healthcare system for automatic diagnosis of heart diseases in
  integrated IoT and Fog computing environments},'' \emph{Future Generation
  Computer Systems}, vol. 104, pp. 187--200, 2020.

\bibitem{ahmed2018docker}
A.~Ahmed and G.~Pierre, ``Docker container deployment in fog computing
  infrastructures,'' in \emph{2018 IEEE International Conference on Edge
  Computing (EDGE)}.\hskip 1em plus 0.5em minus 0.4em\relax IEEE, 2018, pp.
  1--8.

\bibitem{paszke2017automatic}
A.~Paszke, S.~Gross \emph{et~al.}, ``Automatic differentiation in pytorch,''
  2017.

\bibitem{saleh2019dynamic}
N.~Saleh and M.~Mashaly, ``A dynamic simulation environment for container-based
  cloud data centers using containercloudsim,'' in \emph{2019 Ninth
  International Conference on Intelligent Computing and Information Systems
  (ICICIS)}.\hskip 1em plus 0.5em minus 0.4em\relax IEEE, 2019, pp. 332--336.

\bibitem{hochreiter1997long}
S.~Hochreiter and J.~Schmidhuber, ``Long short-term memory,'' \emph{Neural
  computation}, vol.~9, no.~8, pp. 1735--1780, 1997.

\bibitem{shen2015statisticalBitBrain}
S.~Shen, V.~van Beek \emph{et~al.}, ``Statistical characterization of
  business-critical workloads hosted in cloud datacenters,'' in \emph{IEEE/ACM
  CCGRid}.\hskip 1em plus 0.5em minus 0.4em\relax IEEE, 2015, pp. 465--474.

\bibitem{khamse2018efficient}
J.~Khamse-Ashari, I.~Lambadaris \emph{et~al.}, ``An efficient and fair
  multi-resource allocation mechanism for heterogeneous servers,'' \emph{IEEE
  Transactions on Parallel and Distributed Systems}, vol.~29, no.~12, pp.
  2686--2699, 2018.

\bibitem{li2017bayesian}
Z.~Li, C.~Yan \emph{et~al.}, ``Bayesian network-based virtual machines
  consolidation method,'' \emph{Future Generation Computer Systems}, vol.~69,
  pp. 75--87, 2017.

\bibitem{he2021sla}
T.~He, A.~N. Toosi \emph{et~al.}, ``{SLA-aware multiple migration planning and
  scheduling in SDN-NFV-enabled clouds},'' \emph{Journal of Systems and
  Software}, p. 110943, 2021.

\bibitem{mcchesney2019defog}
J.~McChesney, N.~Wang \emph{et~al.}, ``{DeFog: fog computing benchmarks},'' in
  \emph{ACM/IEEE Symposium on Edge Computing}, 2019.

\bibitem{boloor2010dynamic}
K.~Boloor, R.~Chirkova \emph{et~al.}, ``Dynamic request allocation and
  scheduling for context aware applications subject to a percentile response
  time sla in a distributed cloud,'' in \emph{2010 IEEE Second International
  Conference on Cloud Computing Technology and Science}.\hskip 1em plus 0.5em
  minus 0.4em\relax IEEE, 2010, pp. 464--472.

\bibitem{findling2019computational}
C.~Findling, V.~Skvortsova \emph{et~al.}, ``Computational noise in
  reward-guided learning drives behavioral variability in volatile
  environments,'' \emph{Nature neuroscience}, vol.~22, no.~12, pp. 2066--2077,
  2019.

\bibitem{casale2020radon}
G.~Casale, M.~Arta{\v{c}} \emph{et~al.}, ``Radon: rational decomposition and
  orchestration for serverless computing,'' \emph{SICS Software-Intensive
  Cyber-Physical Systems}, vol.~35, no.~1, pp. 77--87, 2020.

\bibitem{hochstein2017ansible}
L.~Hochstein and R.~Moser, \emph{Ansible: Up and Running: Automating
  Configuration Management and Deployment the Easy Way}.\hskip 1em plus 0.5em
  minus 0.4em\relax " O'Reilly Media, Inc.", 2017.

\bibitem{yates2015ensembl}
A.~Yates, K.~Beal \emph{et~al.}, ``The ensembl rest api: Ensembl data for any
  language,'' \emph{Bioinformatics}, vol.~31, no.~1, pp. 143--145, 2015.

\bibitem{naqvi2017time}
S.~N.~Z. Naqvi, S.~Yfantidou \emph{et~al.}, ``Time series databases and
  influxdb,'' \emph{Studienarbeit, Universit{\'e} Libre de Bruxelles}, 2017.

\bibitem{grafana}
\BIBentryALTinterwordspacing
G.~Labs, \emph{Grafana}, (accessed October 31, 2020). [Online]. Available:
  \url{https://grafana.com/}
\BIBentrySTDinterwordspacing

\end{thebibliography}

\begin{IEEEbiography}
[{\includegraphics[width=1in,height=1in,clip,keepaspectratio]{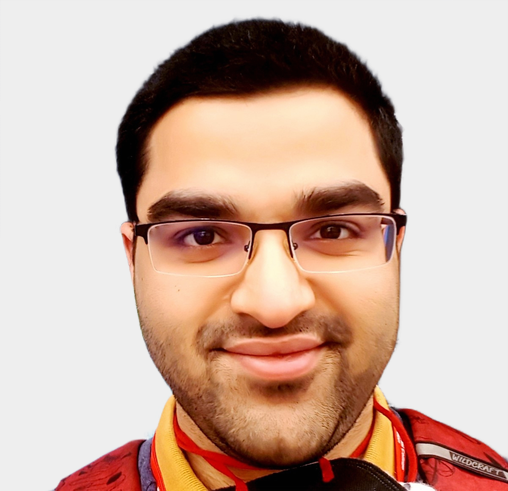}}]
{Shreshth Tuli}
is a President's Ph.D. Scholar at the Department of Computing, Imperial College London, UK. Prior to this he was an undergraduate student at the Department of Computer Science and Engineering at Indian Institute of Technology - Delhi, India. He has worked as a visiting research fellow at the CLOUDS Laboratory, School of Computing and Information Systems, the University of Melbourne, Australia. He is a national level Kishore Vaigyanik Protsahan Yojana (KVPY) scholarship holder from the Government of India for excellence in science and innovation. His research interests include Internet of Things (IoT), Fog Computing and Deep Learning.
\end{IEEEbiography}
\begin{IEEEbiography}
[{\includegraphics[width=1in,height=1in,clip,keepaspectratio]{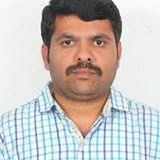}}]
{Shivananda R Poojara}
is a Ph.D student at Mobile and Cloud Computing Laboratory, University of Tartu, Estonia. He is also working as a Junior Research Fellow in IT Academy program. His research interests are serverless computing, edge analytics, fog and cloud computing. He is a member of IEEE, IET and ISTE. He has a 6+ years of teaching experience and worked at Nokia R\&D Labs as an intern.
\end{IEEEbiography}
\begin{IEEEbiography}
[{\includegraphics[width=1in,height=1in,clip,keepaspectratio]{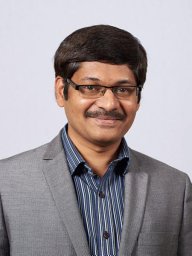}}]
{Satish N. Srirama}
is an Associate Professor at School of Computer and Information Sciences, University of Hyderabad,  India. He is also a Visiting Professor and the honorary head of the Mobile \& Cloud Lab at the Institute of Computer Science, University of Tartu, Estonia. His current research focuses on cloud computing, mobile cloud, IoT, Fog computing, migrating scientific computing and large scale data analytics to the cloud. He received his PhD in computer science from RWTH Aachen University in 2008. He is an IEEE Senior Member and an Editor of Wiley Software: Practice and Experience journal.
\end{IEEEbiography}
\vspace{-2.5in}
\begin{IEEEbiography}
[{\includegraphics[width=1in,height=1in,clip,keepaspectratio]{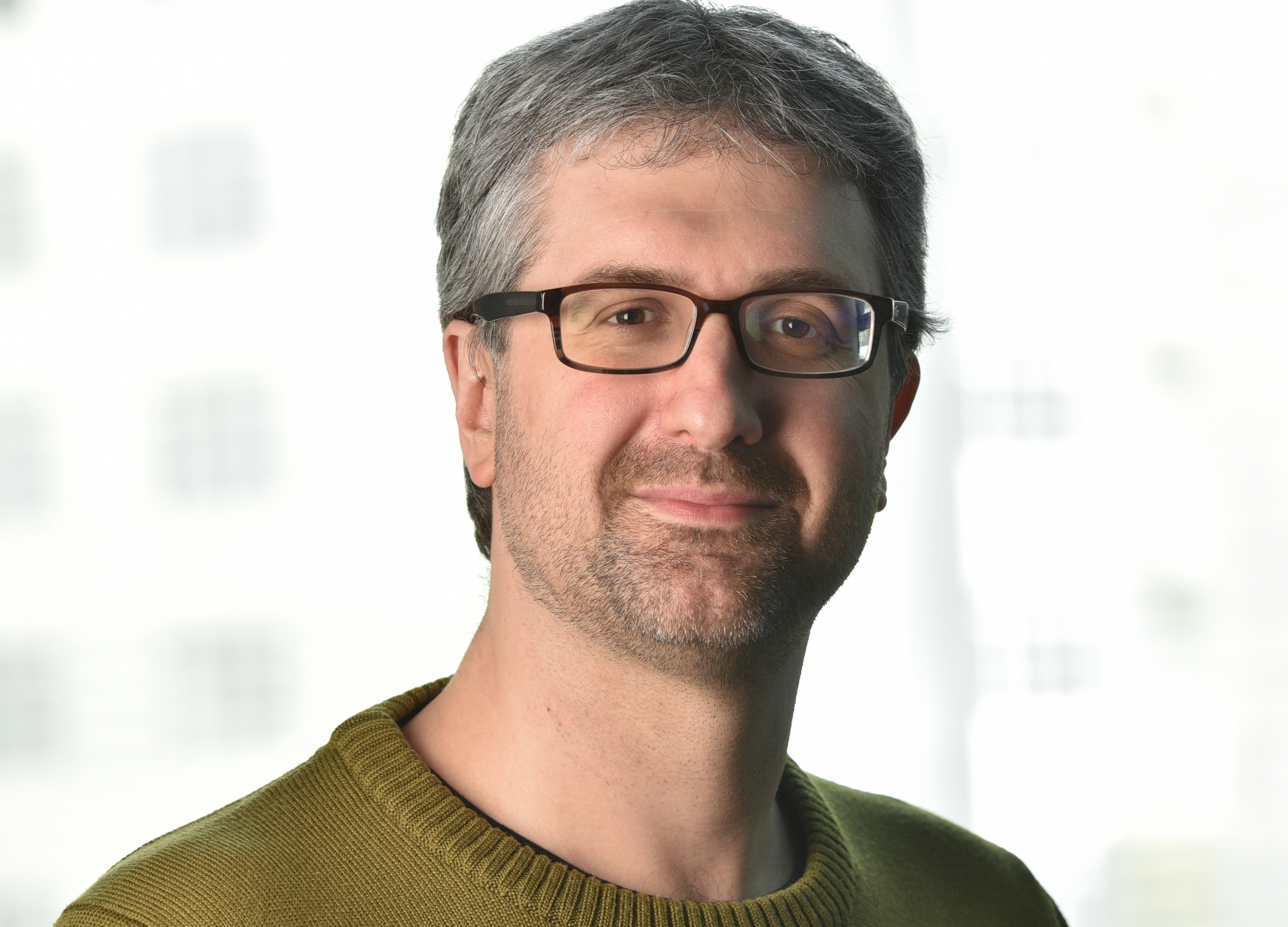}}]
{Giuliano Casale}
joined the Department of Computing at Imperial College London in 2010, where he is currently a Reader. Previously, he worked as a research scientist and consultant in the capacity planning industry. He teaches and does research in performance engineering and cloud computing, topics on which he has published more than 100 refereed papers. He has served on the technical program committee of over 80 conferences and workshops and as co-chair for several conferences in the area of performance and reliability engineering, such as ACM SIGMETRICS/Performance and IEEE/IFIP DSN. His research work has received multiple awards, recently the best paper award at ACM SIGMETRICS. He serves on the editorial boards of IEEE TNSM and ACM TOMPECS and as current chair of ACM SIGMETRICS.
\end{IEEEbiography}
\vspace{-2.45in}
\begin{IEEEbiography}
[{\includegraphics[width=1in,height=1in,clip,keepaspectratio]{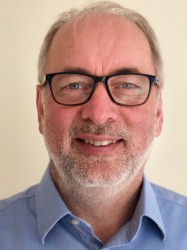}}]
{Nicholas R. Jennings}
is the Vice-Provost for Research and Enterprise and Professor of Artificial Intelligence at Imperial College London. He is an internationally-recognised authority in the areas of AI, autonomous systems, cyber-security and agent-based computing. He is a member of the UK government’s AI Council, the governing body of the Engineering and Physical Sciences Research Council, the Monaco Digital Advisory Council,  and chair of the Royal Academy of Engineering’s Policy Committee.   Before Imperial, he was the UK's first Regius Professor of Computer Science (a post bestowed by the monarch to recognise exceptionally high quality research) and the UK Government’s first Chief Scientific Advisor for National Security.
\end{IEEEbiography}
\end{document}